\definecolor{myrefcolor}{rgb}{0.067,0.5,0.5}
\newcommand{\be}{\begin{equation}}
\newcommand{\ee}{\end{equation}}
\DeclareMathOperator{\tr}{\rm{Tr}}
\newcommand{\ignore}[1]{}
\newcommand{\st}[1]{\ket{#1}\!\!\bra{#1}}
\newcommand{\Var}{{\rm Var}}
\newcommand{\Cov}{{\rm Cov}}
\newcommand{\de}[0]{{\operatorname{d}}}
\newcommand{\expf}[1]{\mathrm{exp}\left ( {#1}\right )}
\newcommand{\aver}[1]{ \left\langle  {#1}  \right\rangle }
\newcommand{\gde}{\operatorname{GDE}}
\newcommand{\goe}{\operatorname{GOE}}
\newcommand{\gue}{\operatorname{GUE}}
\newcommand{\poi}{\operatorname{P}}
\newcommand{\sym}{\text{sym}}
\newcommand{\poly}{\operatorname{poly}}
\newcommand{\supp}{\operatorname{supp}}
\newcommand{\cnot}{\operatorname{CNOT}}
\def\norm#1{\left\| #1\right\|}
\def\CC{{\rm\kern.24em \vrule width.04em height1.46ex depth-.07ex
   \kern-.29em C}}
\def\P{{\rm I\kern-.25em P}}
\def\RR{{\rm
        \vrule width.04em height1.58ex depth-.0ex
        \kern-.04em R}}
\def\bbbc{{\mathchoice {\setbox0=\hbox{$\displaystyle\rm C$}\hbox{\hbox
to0pt{\kern0.4\wd0\vrule height0.9\ht0\hss}\box0}}
{\setbox0=\hbox{$\textstyle\rm C$}\hbox{\hbox
to0pt{\kern0.4\wd0\vrule height0.9\ht0\hss}\box0}}
{\setbox0=\hbox{$\scriptstyle\rm C$}\hbox{\hbox
to0pt{\kern0.4\wd0\vrule height0.9\ht0\hss}\box0}}
{\setbox0=\hbox{$\scriptscriptstyle\rm C$}\hbox{\hbox
to0pt{\kern0.4\wd0\vrule height0.9\ht0\hss}\box0}}}}
\def\bbbz{{\mathchoice {\hbox{$\sf\textstyle Z\kern-0.4em Z$}}
{\hbox{$\sf\textstyle Z\kern-0.4em Z$}}
{\hbox{$\sf\scriptstyle Z\kern-0.3em Z$}}
{\hbox{$\sf\scriptscriptstyle Z\kern-0.2em Z$}}}}
\newlength{\fighskip} \fighskip=2pt
\newlength{\figvskip} \figvskip=1pt
\renewcommand{\vec}[1]{\boldsymbol{#1}} 
\DeclareMathOperator*{\argmin}{arg\,min}
\newcommand{\Tr}{{\rm Tr}}
\newcommand{\dya}[1]{\ket{#1}\!\bra{#1}}
\newtheorem{theorem}{Theorem}
\newtheorem{prop}{Proposition}
\newtheorem{corollary}{Corollary}
\newtheorem{lemma}{Lemma}
\newtheorem{definition}{Definition}
\begin{document}

\title{On the practical usefulness of the Hardware Efficient Ansatz}

\author{Lorenzo Leone}
\affiliation{Theoretical Division (T-4), Los Alamos National Laboratory, Los Alamos, New Mexico 87545, USA}
\affiliation{Center for Nonlinear Studies, Los Alamos National Laboratory, Los Alamos, New Mexico 87545, USA}
\affiliation{Physics Department, University of Massachusetts Boston, Boston, Massachusetts 02125, USA}
\email{Lorenzo.Leone001@umb.edu}
\orcid{0000-0002-0334-7419}
\author{Salvatore F.E. Oliviero}
\affiliation{Theoretical Division (T-4), Los Alamos National Laboratory, Los Alamos, New Mexico 87545, USA}
\email{s.oliviero001@umb.edu}
\affiliation{Center for Nonlinear Studies, Los Alamos National Laboratory, Los Alamos, New Mexico 87545, USA}
\affiliation{Physics Department, University of Massachusetts Boston, Boston, Massachusetts 02125, USA}
\orcid{0000-0003-3569-085X}

\author{Lukasz Cincio}
\affiliation{Theoretical Division (T-4), Los Alamos National Laboratory, Los Alamos, New Mexico 87545, USA}
\orcid{0000-0002-6758-4376}
\author{M. Cerezo}
\affiliation{Information Sciences, Los Alamos National Laboratory, Los Alamos, NM 87545, USA}
\orcid{0000-0002-2757-3170}

\begin{abstract}
Variational Quantum Algorithms (VQAs) and Quantum Machine Learning (QML) models train a parametrized quantum circuit to solve a given learning task. The success of these algorithms greatly hinges on appropriately choosing an ansatz for the quantum circuit. Perhaps one of the most famous ansatzes is the one-dimensional layered Hardware Efficient Ansatz (HEA), which seeks to minimize the effect of hardware noise by using native gates and connectives. The use of this HEA has generated a certain ambivalence arising from the fact that while it suffers from barren plateaus at long depths, it can also avoid them at shallow ones. In this work, we attempt to determine whether one should, or should not, use a HEA. We rigorously identify scenarios where shallow HEAs should likely be avoided (e.g.,   VQA or QML tasks with data satisfying a volume law of entanglement). More importantly, we identify a Goldilocks scenario where shallow HEAs could achieve a quantum speedup: QML tasks with data satisfying an area law of entanglement. We provide examples for such scenario (such as Gaussian diagonal ensemble random Hamiltonian discrimination), and we show that in these cases a shallow HEA is always trainable and that there exists an anti-concentration of loss function values.  Our work highlights the crucial role that input states play in the trainability of a parametrized quantum circuit, a phenomenon that is verified in our numerics. 
\end{abstract}
\section{Introduction}

The advent of Noisy Intermediate-Scale Quantum (NISQ)~\cite{preskill2018quantum} computers has generated a tremendous amount of excitement. Despite the presence of hardware noise and their limited qubit count,  near-term quantum computers are already capable of outperforming the world's largest super-computers on certain contrived mathematical tasks~\cite{google2019supremacy,wu2021strong,madsen2022quantum}. This has started a veritable rat race to solve real-life tasks of interest in NISQ hardware.

One of the most promising strategies to make practical use of near-term quantum computers is to train parametrized hybrid quantum-classical models. Here, a quantum device is used to estimate a classically hard-to-compute quantity, while one also leverages classical optimizers to train the parameters in the model. When the algorithm is problem-driven, we usually refer to it as a Variational Quantum Algorithm (VQA)~\cite{cerezo2020variationalreview,bharti2021noisy}. VQAs can be used for a wide range of tasks such as finding the ground state of molecular Hamiltonians~\cite{peruzzo2014variational,arute2020hartree}, solving combinatorial optimization tasks~\cite{farhi2014quantum,arute2020quantum} and solving linear systems of equations~\cite{bravo2020variational,huang2019near,xu2019variational}, among others. On the other hand, when the algorithm is data-driven, we refer to it as a Quantum Machine Learning (QML) model~\cite{biamonte2017quantum,schuld2021machine}. QML can be used in supervised~\cite{havlivcek2019supervised,schatzki2021entangled}, unsupervised~\cite{otterbach2017unsupervised} and reinforced~\cite{jerbi2021parametrized} learning problems, where the data processed in the quantum device can either be classical data embedded in quantum states~\cite{havlivcek2019supervised,perez2020data}, or quantum data obtained from some physical   process~\cite{cong2019quantum,caro2021generalization,huang2021provably}.

Both VQAs and QML models train parametrized quantum circuits $U(\vec{\theta})$ to solve their respective tasks. One of, if not \textit{the}, most important aspect in determining the success of these near-term algorithms is the choice of ansatz for the parametrized quantum circuit~\cite{cerezo2022challenges}. By ansatz, we mean the specifications for the arrangement and type of quantum gates in $U(\vec{\theta})$, and how these depend on the set of trainable parameters $\vec{\theta}$. Recently, the field of ansatz design has seen a Cambrian explosion where researchers have proposed a plethora of ansatzes for VQAs and QML~\cite{cerezo2020variationalreview,bharti2021noisy}. These include variable structure ansatzes~\cite{zhu2020adaptive,tang2019qubit,zhang2021mutual,bilkis2021semi,rattew2019domain}, problem-inspired ansatzes~\cite{hadfield2019quantum,wiersema2020exploring,lee2021towards,verdon2019quantumgraph,bausch2020recurrent} and even the recently introduced field of geometric quantum machine learning where one embeds information about the data symmetries into $U(\vec{\theta})$~\cite{larocca2022group,meyer2022exploiting,skolik2022equivariant,sauvage2022building,ragone2022representation,nguyen2022atheory,schatzki2022theoretical}. Choosing an appropriate ansatz is crucial as it has been shown that ill-defined ansatzes can be untrainable~\cite{mcclean2018barren,cerezo2020cost,sharma2020trainability,thanasilp2021subtleties,holmes2021connecting,arrasmith2021equivalence,pesah2020absence,uvarov2020barren,marrero2020entanglement,patti2020entanglement}, and hence useless for large scale implementations.

\begin{figure*}[t]
    \centering
    \includegraphics[width=.99\linewidth]{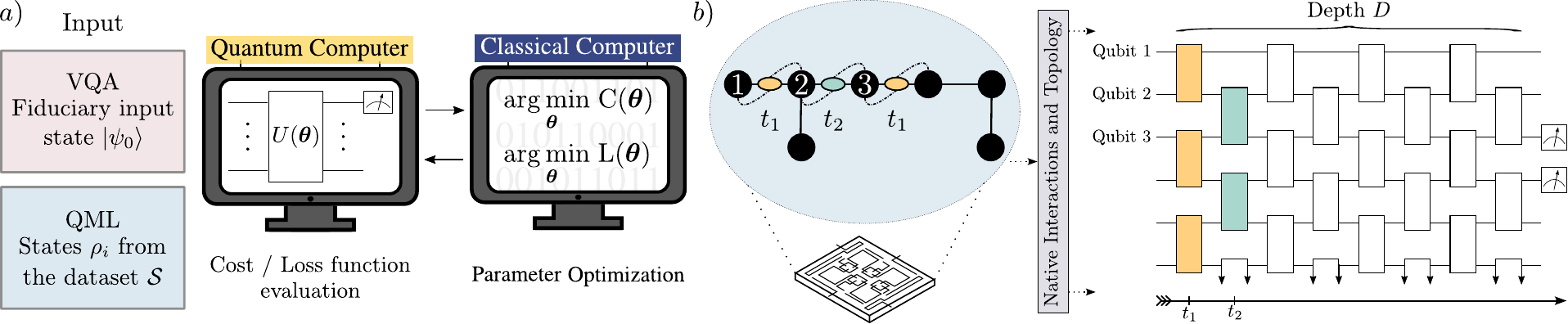}
    \caption{$(a)$ Both VQA and QML models train parametrized quantum circuits $U(\vec{\theta})$ to  minimize either a cost function $C(\vec{\theta})$ for VQAs, or a loss function $L(\vec{\theta})$ for QML models. While VQAs start from some fiduciary, easy to prepare state $\ket{\psi_0}$, in QML one uses states from  dataset $\ket{\psi_s}\in\mathcal{S}$ as input of the parametrized circuit $U(\vec{\theta})$. Both models exploit the power of classical optimizers for the minimization task. $(b)$ The architecture of a HEA seeks to minimize the effect of hardware noise by following the topology, and using the native gates, of the physical hardware. Specifically, we consider HEA as a one-dimensional alternating layered ansatz of two-qubit gates organized in a brick-like fashion. In the figure, we show how a first layer of gates is implemented at time $t_1$ while a second layer at time $t_2$. At the end of the computation, a local operator is measured.}
    \label{fig:HEA}
\end{figure*}

Perhaps the most famous, and simultaneously infamous, ansatz is the so-called Hardware Efficient Ansatz (HEA). As its name implies, the main objective of HEA is to mitigate the effect of hardware noise by using gates native to the specific device being used. The previous avoids the gate overhead which arises when compiling~\cite{khatri2019quantum} a non-native gate-set into a sequence of native gates. While the HEA was originally proposed within the framework of  VQAs, it is now also widely used in QML tasks. The strengths of the HEA are that it can be as depth-frugal as possible and that it is problem-agnostic, meaning that one can use it in any  scenario. However, its wide usability could also be its greater weakness, as it is believed that the HEA cannot have a good performance on all tasks~\cite{holmes2021connecting} (this is similar to the famous no-free-lunch theorem in classical machine learning~\cite{wolpert1997no}). Moreover,  it was shown that deep HEA circuits suffer from barren plateaus~\cite{mcclean2018barren} due to their high expressibility~\cite{holmes2021connecting}. Despite these difficulties, the HEA is not completely hopeless. In Ref.~\cite{cerezo2020cost}, the HEA saw a glimmer of hope as it was shown that shallow HEAs can be immune to barren plateaus, and thus have trainability guarantees.

From the previous, the HEA was left in a sort of gray-area of ansatzes, where its practical usefulness was unclear. On the one hand, there is a common practice in the field of using the HEA irrespective of the problem one is trying to solve. On the other hand, there is a significant push to move away from problem-agnostic HEA, and instead develop problem-specific ansatzes. However, the answer to questions such as ``\textit{Should we use (if at all) the HEA?}'' or ``\textit{What problems are shallow HEAs good for?}'' have not been rigorously tackled.

In this work, we attempt to determine what are the problems in VQAs and QML where HEAs should, or should not be used. As we will see, our results indicate that HEAs should likely be avoided in VQA tasks where the input state is a product state, as the ensuing algorithm can be efficiently simulated via classical methods. Similarly, we will rigorously prove that HEAs should not be used in QML tasks where the input data satisfies a volume law of entanglement. In these cases, we connect the entanglement in the input data to the phenomenon of cost concentration, and we show that high levels of entanglement lead to barren plateaus, and hence to  untrainability. Finally, we identify a scenario where shallow HEAs can be useful and potentially capable of achieving a quantum advantage: QML tasks where the input data satisfies an area law of entanglement. In these cases, we can guarantee that the optimization landscape will not exhibit barren plateaus. Taken together our results highlight the critical importance that the input data plays in the trainability of a model.

\section{Framework}

\subsection{Variational Quantum Algorithms and Quantum Machine Learning }

Throughout this work, we will consider two related, but conceptually different, hybrid quantum-classical models. The first, which we will denote as a Variational Quantum Algorithm (VQA) model, can be used to solve the following tasks
\begin{definition}[Variational Quantum Algorithms]\label{def:VQA}
Let $O$ be a Hermitian operator, whose ground state encodes the solution to a problem of interest. In a VQA task, the goal is to minimize a cost function $C(\vec{\theta})$, parametrized through a quantum circuit $U(\vec{\theta})$, to prepare the ground state of $O$ from a fiduciary state $\ket{\psi_0}$.
\end{definition}
In a VQA task one usually defines a cost function of the form
\begin{equation} \label{eq:cost}
    C(\vec{\theta})=\Tr[ U(\vec{\theta})\dya{\psi_0}  U^{\dagger}(\vec{\theta}) O]\,,
\end{equation}
and trains the parameters in $U(\vec{\theta})$ by solving the optimization task $   \argmin_{\vec{\theta}}C(\vec{\theta})$.

Then, while Quantum Machine Learning (QML) models can be used for a wide range of learning tasks, here we will focus on supervised problems
\begin{definition}[Quantum Machine Learning]\label{def:QML}
Let $\mathcal{S}=\{y_s,\ket{\psi_s}\}$ be a dataset of interest, where $\ket{\psi_s}$ are $n$-qubit states and $y_s$ associated real-valued labels. In a QML task, the goal is to train a model, by minimizing a loss function $L({\vec{\theta}})$  parametrized through a quantum neural network, i.e., a parametrized quantum circuit, $U(\vec{\theta})$, to predict labels that closely match those in the dataset.
\end{definition}
The exact form of $L({\vec{\theta}})$, and concomitantly the nature of what we want to ``learn'' from the dataset depends on the task at hand. For instance, in a binary QML classification task where $y_s$ are labels  one can minimize  an empirical loss function such as the mean-squared error  $L(\vec{\theta})=\frac{1}{|\mathcal{S}|}\sum_{s}(y_{s}-L_s(\vec{\theta}))^{2}$, or the hinge-loss where $L(\vec{\theta})=\frac{-1}{|\mathcal{S}|}\sum_{s}y_{s}L_s(\vec{\theta})$. Here,  
\be \label{eq:loss}
L_s(\vec{\theta})=\Tr[ U(\vec{\theta})\dya{\psi_s}  U^{\dagger}(\vec{\theta}) O_{s}]\,,
\ee
with $O_{s}$ being label-dependent Hermitian operator. The parameters in the quantum neural network  $U(\vec{\theta})$ are trained by solving the optimization task $\argmin_{\vec{\theta}}L(\vec{\theta})$, and the ensuing parameters, along with the loss, are used to make predictions.

While VQAs and QML share some similarities, they also share some differences. Let us first discuss their similarities. First, in both frameworks, one trains a parametrized quantum circuit. This requires choosing an \textit{ansatz} for $U(\vec{\theta})$ and using a classical optimizer to train its parameters. As for their differences, in a VQA task as described in Definition~\ref{def:VQA} and Eq.~\eqref{eq:cost}, the input state to the parametrized quantum circuit $U(\vec{\theta})$ is usually an easy-to-prepare state $\ket{\psi_0}$ such as  the all-zero state, or some physically motivated product state (e.g.,  the Hartree-Fock state in quantum chemistry~\cite{cerezo2020variationalreview,romero2018strategies}). On the other hand, in a QML task as in Definition~\ref{def:QML} and Eq.~\eqref{eq:loss}, the input states to $U(\vec{\theta})$ are taken from the dataset $\mathcal{S}$, and thus can be extremely complex quantum states (see Fig.~\ref{fig:HEA}(a)).

\subsection{Hardware Efficient Ansatz}

As previously mentioned, one of the most important aspects of VQAs and QML models is the choice of ansatz for $U(\vec{\theta})$.  Without loss of generality, we assume that the parametrized quantum circuit is expressed as 
\begin{equation}\label{HEAequation}
    U(\vec{\theta}) = \prod_l e^{-i\theta_lH_l} V_l  \,,
\end{equation}
where the $\{V_l\}$ are some unparametrized unitaries, $\{H_l\}$ are traceless Pauli operators, and where $\vec{\theta}=(\theta_1,\theta_2,\ldots)$. While recently the field of ansatz design has seen a tremendous amount of interest, here we will focus on the HEA, one of the most widely used ansatzes in the literature. Originally introduced in Ref.~\cite{kandala2017hardware}, the term HEA is a generic name commonly reserved for ansatzes that are aimed at reducing the circuit depth by choosing gates $\{V_l\}$ and generators $\{H_l\}$ from a native gate alphabet determined from the connectivity and interactions to the specific quantum computer being used.

As shown in Fig.~\ref{fig:HEA}(b), throughout this work we will consider the most depth-frugal instantiation of the HEA: the one-dimensional alternating layered HEA. Here, one assumes that the physical qubits in the hardware are organized in a chain, where the $i$-th qubit can be coupled with the $(i-1)$-th and $(i+1)$-th. Then, at each \textit{layer} of the circuit one connects each qubit with its nearest neighbors in an alternating, brick-like, fashion. We will denote as $D$ the \textit{depth}, or the number of layers, of the ansatz. This type of alternating-layered HEA exploits the native connectivity of the device to maximize the number of operations at each layer while preventing qubits to idle. For instance, alternating-layered HEAs are extremely well suited for the IBM quantum hardware topology where only nearest neighbor qubits are directly connected (see e.g. Ref.~\cite{IBMQ16}).  We note that henceforth when we use the term HEA, we will refer to the alternating-layered ansatz of Fig.~\ref{fig:HEA}.

\section{Trainability of the HEA}\label{Sec: trainabilityHEA}

\subsection{Review of the literature}

In recent years, several results about the non-trainability of VQAs/QML have been pointed out~\cite{mcclean2018barren,cerezo2020cost,sharma2020trainability,thanasilp2021subtleties,holmes2021connecting,arrasmith2021equivalence,pesah2020absence,uvarov2020barren,marrero2020entanglement,patti2020entanglement}. In particular, it has been shown that quantum landscapes can exhibit the barren plateau phenomenon, which is nowadays considered to be  one of the most challenging bottlenecks for trainability of these hybrid models. We say that the cost function exhibits a barren plateau if, for the cost, or loss function, the optimization landscape becomes exponentially flat with the number of qubits. When this occurs, an exponential number of measurement shots are required to resolve and determine a cost-minimizing direction. In practice, the exponential scaling in the precision due to the barren plateaus erases the potential quantum advantage, as the VQA or QML scheme will have complexity comparable to the exponential scaling of classical algorithms. 

Being more concrete, let $f(\vec{\theta})= C(\vec{\theta}), L_{s}(\vec{\theta})$, i.e., either the cost function $C(\vec{\theta})$ of a VQAs, or the $s$-th term $L_{s}(\vec{\theta})$ in the loss function  of a QML settings. For simplicity of notation, we will omit the ``$s$'' sub-index of $O_s$ and $\psi_s$ when $f(\vec{\theta})=  L_{s}(\vec{\theta})$. In a barren plateau, there are two types of concentration (or flatness) notions that have been explored: deterministic concentration (all landscape is flat) and probabilistic (most of the landscape is flat). Let us first define the deterministic notion of concentration:
\begin{definition}(Deterministic concentration)\label{def: deterministic}
Let the trivial value of the cost function be $f_{trv}\coloneqq \frac{1}{2^n}\tr[O]$. Then the function $f(\vec{\theta})$ is $\epsilon-$concentrated iff $\exists \, \epsilon>0$ s.t. for any $\vec{\theta}$
\be
|f(\vec{\theta})-f_{trv}|\le \epsilon\,.
\ee
\end{definition}

The above definition puts forward a necessary condition for trainability. It is clear that if $f(\vec{\theta})$ is $\epsilon$-concentrated, then $f(\vec{\theta})$ must be resolved within an error that scales as $\sim \epsilon$, i.e., one must use  $\sim \epsilon^{-2}$ measurement shots to estimate $f(\vec{\theta})$. Thus, we define a VQA/QML model to be trainable if $\epsilon$ vanishes no faster than polynomially with $n$ ($\epsilon\in\Omega(1/\poly(n))$). Conversely, if $\epsilon\in\mathcal{O}(2^{-n})$, one requires an exponential number of measurement shots to resolve the quantum landscape, making the model non-scalable to a higher number of qubits. Deterministic concentration was shown in Refs.~\cite{wang2020noise,franca2020limitations}, which study the performance of VQA and QML models in the presence of quantum noise and  prove that $|f(\vec{\theta})-f_{trv}|\in\mathcal{O}(q^D)$, where $0<q<1$ is a parameter that characterizes the noise. Using the results therein, it can be shown that if the depth $D$ of the HEA is $D\in\mathcal{O}(\poly(n))$, then the  noise acting through the circuit leads to an exponential concentration around the trivial value $f_{trv}$.

Let us now consider the following definition of probabilistic concentration:
\begin{definition}[Probabilistic concentration]\label{Def: probabilisticConc}
Let $\aver{\cdot}_{\vec{\theta}}$ be the average with respect to the parameters $\theta_{1}\in\Theta_1,\theta_{2}\in\Theta_2,\ldots$, for a set of parameter domains $\{\vec{\Theta}\}$. Then the function $f(\vec{\theta})$ is probabilistic concentrated if for any $\vec{\theta}'$
\be
\mathbb{E}_{\vec{\theta}}[f(\vec{\theta})-f(\vec{\theta}')] \leq \epsilon\,,
\ee
where the average is taken over the domains $\{\vec{\Theta}\}$.
\end{definition}
Here we make an important remark on the connection between probabilistic concentration and barren plateaus. The  barren plateau phenomenon, as initially formulated in Ref.~\cite{mcclean2018barren} indicates that the cost function gradients are concentrated, i.e., that $\Var[\partial_{\nu}f(\vec{\theta})]\equiv\aver{[\partial_{\nu}f(\vec{\theta})]^{2}}_{\vec{\theta}}-\aver{\partial_{\nu}f(\vec{\theta})}^{2}_{\vec{\theta}}\le \epsilon$, where $\partial_{\nu}f(\vec{\theta})\coloneqq \partial f(\vec{\theta})/\partial\theta_\nu $. However one can prove that probabilistic cost concentration implies probabilistic gradient concentration, and vice-versa~\cite{arrasmith2021equivalence}. 
According to Definition~\ref{Def: probabilisticConc}, we can again see that  if $\epsilon\in\mathcal{O}(2^{-n})$, one requires an exponential number of measurement shots to navigate through the optimization landscape. As shown in Ref.~\cite{mcclean2018barren}, such probabilistic concentration can occur if  the  depth is $D\in \mathcal{O}(\poly(n))$, as at the ansatz becomes a $2$-design~\cite{brandao2016local,harrow2018approximate,mcclean2018barren}.

From the previous, we know that deep HEAs with $D\in \mathcal{O}(\poly(n))$ can exhibit both deterministic cost concentration (due to noise), but also probabilistic cost concentration (due to high expressibility~\cite{holmes2021connecting}). However, the question still remains open of whether HEA can avoid barren plateaus and cost concentration with sub-polynomial depths. This question was answered in Ref.~\cite{cerezo2020cost}. Where it was shown that HEAs can avoid barren plateaus and have trainability guarantees if two necessary conditions are met: \textit{locality and shallowness}. In particular, one can prove that if $D\in\mathcal{O}(\log(n))$, then measuring  \textit{global} operators -- i.e., $O$ is a sum of operators acting non-identically on every qubit -- leads to barren plateaus, whereas measuring \textit{local} operators -- i.e. $O$ is a sum of operators acting (at most) on $k$ qubits, for $k \in\mathcal{O}(1)$ -- leads to gradients that vanish only polynomially in $n$. 

\subsection{A new source for untrainability}

The discussions in the previous section provide a sort of recipe for avoiding expressibility-induced probabilistic concentration (see Definition~\ref{Def: probabilisticConc}), and noise-induced deterministic concentration: \textit{Use local cost measurement operators and keep the depth of the quantum circuit shallow enough}.

Unfortunately, the previous is still not enough to guarantee trainability. As there are other sources of untrainability which are usually less explored. To understand what those are, we will recall a simplified version of the main result in Theorem 2 of Ref.~\cite{cerezo2020cost}. First, let $O$ act non-trivially only on two adjacent qubits, one of them being the $\lfloor \frac{n}{2}\rfloor$-th qubit, and let us study the partial derivative $\partial_{\nu}f(\vec{\theta})$ with respect to a parameter in the last gate acting before $O$ (see Fig.~\ref{fig:lightcone}). The variance  of $\partial_{\nu}f(\vec{\theta})$   is lower bounded as~\cite{cerezo2020cost} 
\be
G_{n}(D,\ket{\psi})\le \Var[\partial_{\nu}f(\vec{\theta})]\,,
\label{lowerboundCerezo0}
\ee
where we recall that $D$ is the depth of the HEA, $\ket{\psi}$ is the input state, and with 
\be
G_{n}(D,\ket{\psi}) \!=\!\left(\frac{2}{5}\right)^{\scriptscriptstyle 2D}\!\!\!\frac{2 }{225} \,\eta(O)\!\!\sum_{\substack{\scriptscriptstyle k,k^{\prime}= n/2-D\\ \scriptscriptstyle k^{\prime}>k}}^{\scriptscriptstyle n/2+D}\!\!\!\eta(\psi_{k,k^{\prime}})\,.
\label{lowerboundCerezo}
\ee

Here $\eta(M)=\norm{M-\Tr[M]\frac{\mathds{1}}{d_M}}_2$, is the Hilbert-Schmidt distance between $M$ and $\Tr[M]\frac{\mathds{1}}{d_M}$,  where $d_M$ is the dimension of the matrix $M$. Moreover, here we defined $\psi_{k,k^\prime}=\Tr_{1,\ldots,k-1,k^{\prime}+1,\ldots,n}[\st{\psi}]$ as the reduced density matrix on the qubits with index $i\in[k,k^{\prime}]$. As such, $\psi_{k,k^\prime}$  correspond to the reduced states of all possible combinations of adjacent qubits in the \textit{light-cone} generated by $O$ (see Fig.~\ref{fig:lightcone}). Here, by light-cone we refer to the set of qubit indexes that are causally related to $O$ via $U(\vec{\theta})$, i.e., the set of indexes over which $U^\dagger(\vec{\theta})O U(\vec{\theta})$ acts non-trivially.

\begin{figure}[t]
    \centering
    \includegraphics[width=.9\linewidth]{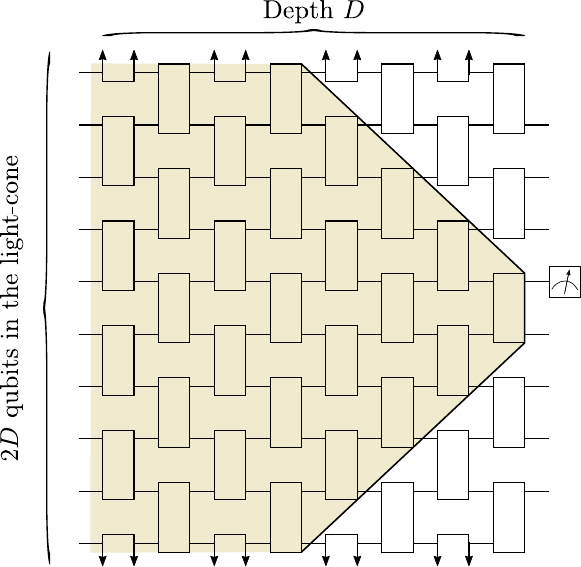}
    \caption{The sketch shows the light-cone of a local measurement operator at the end of a shallow HEA. Here we can see how the support of the local operator grows with the depth $D$ of the HEA. Since the HEA gates act on neighboring qubits, the support increases no more than $2D$.  }
    \label{fig:lightcone}
\end{figure}

Equation~\eqref{lowerboundCerezo} provides the necessary condition to guarantee trainability, i.e., to ensure that the gradients do not vanish exponentially. First, one recovers the condition on the HEA that $D\in\mathcal{O}(\log (n))$. However, a closer inspection of the above formula reveals that both the initial state $\ket{\psi}$ and the measurement operator $O$ also play a key role. Namely one needs that $O$, as well as \textit{any} of the reduced density matrices of $\ket{\psi}$ an \textit{any} set adjacent qubits in the light-cone, to not be close (in Hilbert-Schmidt distance) to the (normalized) identity matrix. This is due to the fact that if $\eta(O)\in\mathcal{O}(2^{-n})$, or if $\eta(\psi_{k,k^{\prime}})\in\mathcal{O}(2^{-n})$ $\forall k,k^\prime$, then $G_{n}(D,\ket{\psi})\in\mathcal{O}(2^{-n})$ and the trainability guarantees are lost (the lower bound in Eq.~\eqref{lowerboundCerezo0} becomes trivial).

The previous results highlight that one should pay close attention to the measurement operator and the input states. Moreover, these results make intuitive sense as they say that extracting information by measuring an operator $O$ that is exponentially close to the identity will be exponentially hard. Similarly, training an ansatz with local gates on a state whose marginals are exponentially close to being maximally mixed will be exponentially hard. 

Here we remark that in a practical scenario of interest, one does not expect $O$ to be exponentially close to the identity. For a VQA, one is interested in finding the ground state of $O$ (see Eq.~\eqref{eq:cost}), and as such, it is reasonable to expect that $O$ is non-trivially close to the identity~\cite{cerezo2020variationalreview,bharti2021noisy}. Then, for QML there is additional freedom in choosing the measurement operators $O_{s}$ in~\eqref{eq:loss}, meaning that one simply needs to choose an operator with non-exponentially vanishing support in non-identity Pauli operators. 

In the following sections, we will take a closer look at the role that the input state can have in the trainability of shallow-depth HEA.

\section{Entanglement and information scrambling}

Here we will briefly recall two fundamental concepts: that of states satisfying an area law of entanglement, and that of states satisfying a volume law of entanglement. Then, we will relate the concept of area law of entanglement with that of scrambling. 

First, let us rigorously define what we mean by area and volume laws of entanglement.
\begin{definition}[Volume law vs. area law]\label{def: volumelaw}
Let $|\psi\rangle$ be a state in a bipartite Hilbert space $\mathcal{H}_{\Lambda}\otimes \mathcal{H}_{\bar{\Lambda}}$. Let $\Lambda$ be a subsystem composed of $|\Lambda|$  qubits, and let $\bar{\Lambda}$ be its complement set. Let $\psi_{\Lambda}=\Tr_{\bar{\Lambda}}[\st{\psi}]$ be the reduced density matrix on $\Lambda$. Then, the state $\ket{\psi}$ possesses volume law for the entanglement within $\Lambda$ and $\bar{\Lambda}$ if 
\be
|\Lambda|-S(\psi_{\Lambda})\in\mathcal{O}\left(\frac{1}{2^{|\bar{\Lambda}|-|\Lambda|}}\right)
\label{volumelawdef}\,,
\ee
where $S(\rho)=-\Tr[\rho\log(\rho)]$ is the entropy of entanglement. Conversely, the state possesses area law for the entanglement within $\Lambda$ and $\bar{\Lambda}$ if
\be
|\Lambda|-S(\psi_{\Lambda})\in\Omega\left(\frac{1}{\poly(|\Lambda|)}\right)\,.
\ee
\end{definition}
Note that the above definition of area vs. volume law of entanglement is nonstandard. In particular, it is completely agnostic to the geometry on which the given state resides. However, as we will show, it is the relevant one to look at for the purpose of this work.

From the definition of volume law of entanglement, the concept of scrambling of quantum information can be easily defined; the information contained in $\ket{\psi}$ is said to be scrambled throughout the system if the state $\ket{\psi}$ follows a volume law for the entropy of entanglement according to Definition~\ref{def: volumelaw} across any bipartition  such that $|\Lambda|\in\mathcal{O}(\log (n))$. 

Here we further recall that an information-theoretic measure of the quantum information that can be extracted by a subsystem $\Lambda$ is
\be
\mathcal{I}_{\Lambda}(\psi)\coloneqq \norm{\psi_{\Lambda}-\frac{\mathds{1}_{\Lambda}}{2^{|\Lambda|}}}_1\,,
\ee
which quantifies the maximum distinguishability between the reduced density matrix $\psi_{\Lambda}$ and the maximally mixed state $\mathds{1}_{\Lambda}/2^{|\Lambda|}$. To see this, let $O_{\Lambda}$ be a local operator acting non-trivially on $\Lambda$, whose maximum eigenvalue is one, then $|\braket{\psi|O_{\Lambda}|\psi}-2^{-n}\Tr[O_{\Lambda}]|\le \mathcal{I}_{\Lambda}(\psi)$. Thus, if $\mathcal{I}_{\Lambda}$ is exponentially small, then the measurement of the local operator $O_{\Lambda}$ is not able to efficiently reveal any information contained in $\ket{\psi}$. We thus lay down the following formal definition of information scrambling:
\begin{definition}[Information scrambling]\label{def-scrambling}
The quantum information in $\ket{\psi}$ is scrambled iff for any subset of qubits $\Lambda$ such that $|\Lambda|\in\mathcal{O}(\log (n))$:
\be
\mathcal{I}_{\Lambda}(\psi)\in \mathcal{O}(2^{-cn})\,.
\ee
for some constant $c>0$.
\end{definition}
The definition of scrambling of quantum information easily follows from the definition of volume law for entanglement in Definition~\ref{def: volumelaw}. Indeed, given a subsystem $\Lambda$, one has the following bound
\be
\mathcal{I}_{\Lambda}(\psi)\le \sqrt{2(|\Lambda|-S(\psi_{\Lambda}))}\,,
\ee
and thus, if $\ket{\psi}$ follows a volume law of entanglement according to Definition~\ref{def: volumelaw} for any bipartition $\Lambda\cup\bar{\Lambda}$ with $|\Lambda|\in\mathcal{O}(\log (n))$, one indeed has the exponential suppression of the information contained in $\Lambda$, i.e., $\mathcal{I}_{\Lambda}(\psi)\in \mathcal{O}(2^{-n})$. This motivates us to propose the following alternative definition for states following volume and area law of entanglement 
\begin{definition}[Volume law vs. area law]\label{def: volumelaw2}
Let $|\psi\rangle$ be a state in a bipartite Hilbert space $\mathcal{H}_{\Lambda}\otimes \mathcal{H}_{\bar{\Lambda}}$. Let $\Lambda$ be a subsystem composed of $|\Lambda|$  qubits, and let $\bar{\Lambda}$ be its complement set. Let $\psi_{\Lambda}=\Tr_{\bar{\Lambda}}[\st{\psi}]$ be the reduced density matrix on $\Lambda$. Then the state $\ket{\psi}$ possesses volume law for the entanglement within $\Lambda$ and $\bar{\Lambda}$ if 
\be
\mathcal{I}_{\Lambda}(\psi)\in \mathcal{O}\left(\frac{1}{2^{cn}}\right)\,,
\ee
for some $c>0$. Conversely, the state possesses area law for the entanglement within $\Lambda$ and $\bar{\Lambda}$ if
\be
\mathcal{I}_{\Lambda}(\psi)\in \Omega\left(\frac{1}{\poly(n)}\right)\,.
\ee
\end{definition}

\section{HEA and the role of entanglement}
Let us consider a VQA or QML task from Definitions~\ref{def:VQA}--\ref{def:QML}, where the ansatz for the parametrized quantum circuit $U(\vec{\theta})$ is a shallow  HEA with depth $D\in\mathcal{O}(\log(n))$ (see Fig.~\ref{fig:HEA}(b)). Moreover,  let the function $f(\vec{\theta})=C(\vec{\theta}), L_{s}(\vec{\theta})$, be either the cost or loss function in Eqs.~\eqref{eq:cost} and~\eqref{eq:loss}.

\subsection{HEA and volume law of entanglement}
As we show in this section, a shallow HEA will be untrainable if the input state satisfies a volume law of entanglement according to Definition~\ref{def: volumelaw}. Before going deeper into the technical details, let us sketch the idea behind our statement, with the following warm-up example.

\subsubsection{A toy model}
Let us consider for simplicity the case when $O$ is a local operator acting non-trivially  on a single qubit, and let us recall that  $f(\vec{\theta})=\Tr[U(\vec{\theta})\st{\psi}U^{\dag}(\vec{\theta})O]$. In the Heisenberg picture we can interpret $f(\vec{\theta})$ as the expectation value of the backwards-in-time evolved operator $O(\vec{\theta})=U^{\dag}(\vec{\theta})O U(\vec{\theta})$ over the initial state $\ket{\psi}$. Thanks to the brick-like structure of the HEA, one can see from a simple geometrical argument that the operator $O(\vec{\theta})$ will act non-trivially only on a set $\Lambda$ containing (at most) $2D$ qubits (see Fig.~\ref{fig:lightcone}, and also see below for the rigorous proof). Thus, we can compute $f(\vec{\theta})$ as 
\be
f(\vec{\theta})=\Tr_{\Lambda}[\Tr_{\bar{\Lambda}}[\st{\psi}] O(\vec{\theta})] \; ,
\ee
where $\bar{\Lambda}$ is the complement set of $\Lambda$, and where we assume $|\Lambda|\ll |\bar{\Lambda}|$. Since the HEA is shallow, the cost function is evaluated by tracing out the majority of qubits, i.e. $|\bar{\Lambda}|\sim n$. If the input state $\ket{\psi}$ is highly entangled, since $|\Lambda|\ll |\bar{\Lambda}|$, and thanks to the monogamy of entanglement~\cite{coffman2000distributed}, we can assume that there is a subset of $\bar{\Lambda}$, say $\Lambda^{\prime}$, maximally entangled with $\Lambda$, i.e., $\ket{\psi}= \sqrt{\frac{1-\epsilon^2}{2^{|\Lambda|}}} \sum_{i=1}\ket{i_\Lambda}\otimes\ket{i_{\Lambda^\prime}}\otimes \ket{rest}+\epsilon\ket{\phi}$, where $\epsilon\ll1$ and $\ket{\phi}$ being orthogonal to the rest. Neglecting terms in $\epsilon^2$, and choosing $\norm{O}_{\infty}=1$, this results in a function $2\epsilon$-concentrated around its trivial value $f_{trv}$
\be
|f(\vec{\theta})-f_{trv}|\le 2\epsilon\,.
\ee
The previous shows that a highly entangled state, such as the one presented above which satisfies a volume law-of entanglement, will lead to a landscape that exhibits a deterministic exponential concentration according to Definition~\ref{def: deterministic}.

\subsubsection{Formal statement}\label{Sec: formal}
In this section, we will present a deterministic concentration result. To begin, let us introduce the following definition:
\begin{definition}[Support of a Pauli operator]
Let $P$ be a Pauli operator, we define the support $\supp(P)$ as the ordered set of natural numbers $q_i$ labeling the qubits on which $P$ acts non-trivially.
\be
\supp(P)=\{q_{1},\ldots, q_{S_i}\,|\, q_{1}< \ldots<q_{S}, q_{k}\in [1,n]\}
\ee
with $S$ the number of qubits on which $P$ acts non-trivially.
\end{definition}
We are finally ready to state a deterministic concentration result based on the information-theoretic measure $\mathcal{I}_{\Lambda}(\psi)$ for HEA circuits and for    $f(\vec{\theta})= C(\vec{\theta}), L_{s}(\vec{\theta})$ being the VQA cost function or the QML loss function. 

\begin{theorem}[Concentration and measurement operator support]\label{th1}
Let $U(\vec{\theta})$ be HEA with depth $D$, and $O=\sum_{i}c_iP_i$, with $P_i$ Pauli operators. Now, let $\Lambda_i\coloneqq \supp(U^{\dag}(\vec\theta)P_{i}U(\vec\theta))$:
\be\label{eq:bound}
|f(\vec{\theta})-f_{trv}|\le \sum_{i}|c_i| \mathcal{I}_{\Lambda_i}(\psi)
\ee
moreover, $\sum_{i}|c_i|\le (3n)^{\max_i|\supp(P_i)|}\norm{O}_{\infty}$. Let $\Lambda\coloneqq \max_{i}\Lambda_i$. Then, the following bound on the size of $\Lambda$ holds:
\be
|\Lambda|\le 2D\times\max_{i}|\supp(P_i)|\label{sizelambda}\,.
\ee
\end{theorem}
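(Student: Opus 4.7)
The plan is to prove the three parts of the theorem in order. For the concentration bound~\eqref{eq:bound}, the key step is to pass to the Heisenberg picture. Using cyclicity of the trace and $U(\vec{\theta})(\bbbone/2^n)U^{\dag}(\vec{\theta})=\bbbone/2^n$, one may rewrite
\begin{equation}
f(\vec{\theta})-f_{trv}=\Tr\!\left[\Big(\st{\psi}-\tfrac{\bbbone}{2^{n}}\Big)\,U^{\dag}(\vec{\theta})\,O\,U(\vec{\theta})\right].
\end{equation}
Expanding $O=\sum_i c_i P_i$ and writing $U^{\dag}(\vec{\theta})P_i U(\vec{\theta})=Q_i\otimes\bbbone_{\bar{\Lambda}_i}$ with $Q_i$ supported on $\Lambda_i$, the partial trace over $\bar{\Lambda}_i$ can be performed explicitly, replacing the state difference by its reduction $\psi_{\Lambda_i}-\bbbone_{\Lambda_i}/2^{|\Lambda_i|}$. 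A triangle inequality in $i$ and the matrix Hölder bound $|\Tr[AB]|\le\|A\|_1\|B\|_\infty$, together with $\|Q_i\|_\infty=\|U^{\dag}(\vec{\theta})P_i U(\vec{\theta})\|_\infty=\|P_i\|_\infty=1$, then yields~\eqref{eq:bound} immediately.

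For the bound on $\sum_i|c_i|$, I would use the Hilbert--Schmidt identity $c_i=\Tr[OP_i]/2^n$: trace-norm/operator-norm duality gives the uniform bound $|c_i|\le \|O\|_\infty\|P_i\|_1/2^n=\|O\|_\infty$. It then suffices to count the admissible Paulis: strings whose support has size at most $k:=\max_i|\supp(P_i)|$ number at most $\sum_{j=0}^{k}\binom{n}{j}3^{j}\le (3n)^{k}$, and multiplying these two bounds gives the stated inequality $\sum_i|c_i|\le (3n)^{k}\|O\|_\infty$.

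The light-cone bound~\eqref{sizelambda} I would prove by induction on $D$, exploiting the one-dimensional nearest-neighbor structure of the brick circuit (see Fig.~\ref{fig:lightcone}). Conjugation by a single brick layer, whose two-qubit gates only couple adjacent qubits, can enlarge the support of a Pauli by at most one qubit per boundary of each connected component. Iterating this observation $D$ times and taking the worst case -- where the qubits in $\supp(P_i)$ are far enough apart that their individual light-cones never merge -- yields $|\Lambda_i|\le 2D\cdot|\supp(P_i)|$, and maximising over $i$ gives~\eqref{sizelambda}.

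The main obstacle I expect is exactly this last bookkeeping step: the multiplicative factor $2D$ in~\eqref{sizelambda} is tight only under a specific convention for what counts as a "layer" of the brick circuit, so one has to verify that the per-layer expansion of the support matches this convention. Additionally, when two nearby qubits of $\supp(P_i)$ have overlapping light-cones their merging needs to be handled carefully so as not to invalidate the multiplicative form of the bound. The first two parts are, by contrast, essentially direct applications of Heisenberg-picture rewriting and Hölder's inequality and should pose no conceptual difficulty.
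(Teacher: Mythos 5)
Your proposal is correct and follows essentially the same route as the paper: the concentration bound via the Heisenberg picture, reduction to $\Lambda_i$, and H\"older plus the triangle inequality; the coefficient bound via $c_i=\Tr[OP_i]/2^n$ and a Pauli-counting argument; and the support bound via light-cone growth of at most one qubit per boundary per brick layer. The only cosmetic difference is that the paper organizes the light-cone bookkeeping through an explicit clustering of $\supp(P_i)$ into groups of qubits within distance $2D$ (Lemma~\ref{lemma1} and Corollary~\ref{corollary1}), whereas you take the union of per-seed light cones directly, which yields the same bound $|\Lambda|\le 2D\max_i|\supp(P_i)|$.
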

See App.~\ref{App: proofth1} for the proof.

Let us discuss the implications of Theorem~\ref{th1}. First, we find that the difference between the training function $f(\vec{\theta})$, and its trivial value $f_{trv}$ depends on the information-theoretic measure of information scrambling $\mathcal{I}_{\Lambda}(\psi)$ for $|\Lambda|=\max_{i}|\Lambda_i|$ (see Definition~\ref{def-scrambling}). From Eq.~\eqref{sizelambda} it is clear that the size of $\Lambda$ is determined by two factors: $(i)$ the depth of the circuit $D$, and $(ii)$ the locality of the operator $O$. As soon as either the depth $D$ or $\max_i\supp(P_i)$ starts scaling with the number of qubits $n$, the bound in Eq.~\eqref{eq:bound} becomes trivial as one can obtain information by measuring a large enough subsystem with $|\Lambda|\in\Theta(n)$. However, as explained above in Sec.~\ref{Sec: trainabilityHEA}, we already know that this regime is precluded as the necessary requirements to ensure the trainability of the HEA (and thus to ensure trainability of the VQA/QML model) are: $(i)$ the depth of the HEA circuit must not exceed $\mathcal{O}(\log (n))$, and $(ii)$ the operator $O$ must have local support on at most $\mathcal{O}(\log (n))$ qubits. Hence, the trainability of the model is solely determined by the scaling of $\mathcal{I}_{\Lambda}(\psi)$.

From Theorem~\ref{th1}, we can derive the following corollaries
\begin{corollary}\label{cor2}
Let the depth $D$ of the HEA be $D\in\mathcal{O}(\log (n))$, and let $\max_{i}|\supp(P_i)|\in\mathcal{O}(\log (n))$. Then, if $\ket{\psi}$ satisfies a volume law, or alternatively if  the information contained in $\ket{\psi}$ is scrambled, i.e., if $\mathcal{I}_{\Lambda}(\psi)\in\mathcal{O}(2^{-cn})$ for some $c>0$, and if $\norm{O}_{\infty}\in\Omega(1)$, then:
\be
|f(\vec{\theta})-f_{trv}|\in \mathcal{O}(2^{-nc/2})\,.
\ee
\end{corollary}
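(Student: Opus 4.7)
The plan is to deduce Corollary~\ref{cor2} as a direct asymptotic consequence of Theorem~\ref{th1}, with essentially no work beyond plugging in the hypothesized scalings. First, I would invoke the inequality from Theorem~\ref{th1},
\be
|f(\vec\theta) - f_{trv}| \le \sum_i |c_i|\, \mathcal{I}_{\Lambda_i}(\psi),
\ee
together with its two companion estimates $\sum_i |c_i| \le (3n)^{\max_i |\supp(P_i)|} \norm{O}_\infty$ and $|\Lambda_i| \le 2D \cdot \max_i |\supp(P_i)|$ (the latter from Eq.~\eqref{sizelambda}). Since $D \in \mathcal{O}(\log n)$ and $\max_i |\supp(P_i)| \in \mathcal{O}(\log n)$ by assumption, every light-cone in the sum obeys $|\Lambda_i| \in \mathcal{O}(\log^2 n)$, which is in particular $o(n)$.

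Next, I would apply the scrambling/volume-law hypothesis uniformly across the $\Lambda_i$ to bound $\mathcal{I}_{\Lambda_i}(\psi) \in \mathcal{O}(2^{-cn})$. This step is immediate when the hypothesis is stated via Definition~\ref{def: volumelaw2}; when stated via Definition~\ref{def: volumelaw}, one first invokes the Pinsker-type estimate $\mathcal{I}_{\Lambda_i}(\psi) \le \sqrt{2(|\Lambda_i| - S(\psi_{\Lambda_i}))}$, and notes that the relevant gap $|\bar\Lambda_i|-|\Lambda_i| = n - 2|\Lambda_i|$ is still $\Theta(n)$ whenever $|\Lambda_i| = o(n)$. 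The scalar prefactor satisfies $(3n)^{\mathcal{O}(\log n)} \cdot \norm{O}_\infty = 2^{\mathcal{O}(\log^2 n)}$, i.e.\ quasi-polynomial in $n$, once $\norm{O}_\infty$ is treated as a constant.

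Putting the pieces together yields
\be
|f(\vec\theta) - f_{trv}| \in \mathcal{O}\!\left(2^{-cn + \mathcal{O}(\log^2 n)}\right) \subseteq \mathcal{O}(2^{-cn/2}),
\ee
for $n$ sufficiently large, where the last inclusion absorbs the sub-exponential prefactor into the exponent at the cost of halving the decay constant; this is where the factor of $1/2$ in the exponent of the corollary enters.

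The only subtle point, which I expect to be the main obstacle, is justifying that the scrambling hypothesis applies uniformly across all $\Lambda_i$ arising in the Pauli decomposition of $O$, even though their sizes $|\Lambda_i| \in \mathcal{O}(\log^2 n)$ slightly exceed the $\mathcal{O}(\log n)$ window literally written in Definition~\ref{def-scrambling}. As noted above, the extension is automatic under Definition~\ref{def: volumelaw2} and under Definition~\ref{def: volumelaw} via the $|\bar\Lambda_i|-|\Lambda_i| = \Theta(n)$ observation, but this point deserves to be flagged explicitly to make the asymptotic chain fully rigorous.
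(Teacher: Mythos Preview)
Your proposal is correct and follows essentially the same route as the paper's own proof: invoke Theorem~\ref{th1}, bound the prefactor $(3n)^{\max_i|\supp(P_i)|}$ by $2^{\mathcal{O}(\log^2 n)}$, and absorb this quasi-polynomial into the exponential decay at the cost of halving the constant $c$. In fact you are more careful than the paper on one point: the paper asserts $\max_i|\Lambda_i|\in\mathcal{O}(\log n)$ and then directly applies Definition~\ref{def-scrambling}, whereas Eq.~\eqref{sizelambda} only gives $|\Lambda_i|\in\mathcal{O}(\log^2 n)$; your explicit discussion of why the scrambling/volume-law hypothesis still yields $\mathcal{I}_{\Lambda_i}(\psi)\in\mathcal{O}(2^{-cn})$ in this slightly larger regime (via Definition~\ref{def: volumelaw2} or the Pinsker bound with $|\bar\Lambda_i|-|\Lambda_i|=\Theta(n)$) fills a gap the paper leaves implicit.
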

Here we can see that if the information contained in $\ket{\psi}$ is too scrambled throughout the system, one has deterministic exponential concentration of cost values according to  Definition~\ref{def: deterministic}.

Theorem~\ref{th1} puts forward another important necessary condition for trainability and to avoid deterministic concentration: \textit{the information in the input state must not be too scrambled throughout the system}. When this occurs, the information in $\ket{\psi}$ cannot be accessed by local measurements, and hence one cannot train the shallow depth HEA. 

At this point, we ask the question of how typical is for a state to contain information scrambled throughout the system, hidden in non-local degrees of freedom, and resulting in $\mathcal{I}_{\Lambda}(\psi)\in\mathcal{O}(2^{-n})$. To answer this question, we use tools of the Haar measure and show that for the overwhelming majority of states, the information cannot be accessed by local measurements as their information is too scrambled.
\begin{corollary}\label{application2}
Let $\ket{\psi}\sim \mu_{Haar}$ be a Haar random state, $D\in\mathcal{O}(\log (n))$, and $\max_i\supp(P_i)\in\mathcal{O}(\log (n))$. Here $\mu_{Haar}$ is the uniform Haar measure over the states in the Hilbert space.  Then $\mathcal{I}_{\Lambda}(\ket{\psi})\le 2^{|\Lambda|-n/3}$, with overwhelming probability $\ge 1-e^{-c2^{n/3-1}+(2n+1)\ln 2}$ with $c=(18\pi^3)^{-1}$, and thus
\be
|f(\vec{\theta})-f_{trv}|\in \mathcal{O}(2^{-n/6})\,.
\ee

\end{corollary}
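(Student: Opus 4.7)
The plan is to combine Theorem~\ref{th1} with concentration of measure on the complex unit sphere. By Theorem~\ref{th1}, $|f(\vec\theta)-f_{trv}|\le \sum_i |c_i|\,\mathcal{I}_{\Lambda_i}(\psi)$, and because the $\vec\theta$-dependence enters only through which subsets $\Lambda_i=\supp(U^{\dagger}(\vec\theta)P_i U(\vec\theta))$ appear, it is enough to produce a bound on $\mathcal{I}_\Lambda(\psi)$ that holds uniformly over all subsets $\Lambda$ that can arise. I would do this in three stages: first bound the Haar expectation, then concentrate via Levy's lemma, and finally take a union bound.

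For the expectation, I would use the norm inequality $\|X\|_1\le \sqrt{d_\Lambda}\,\|X\|_2$ together with Jensen's inequality,
\[
\mathbb{E}[\mathcal{I}_\Lambda(\psi)]\le 2^{|\Lambda|/2}\sqrt{\mathbb{E}[\Tr\psi_\Lambda^2]-2^{-|\Lambda|}}\,,
\]
and invoke the standard Haar identity $\mathbb{E}[\Tr\psi_\Lambda^2]=(d_\Lambda+d_{\bar\Lambda})/(2^n+1)$ with $d_\Lambda=2^{|\Lambda|}$, which gives $\mathbb{E}[\mathcal{I}_\Lambda(\psi)]\in\mathcal{O}(2^{|\Lambda|-n/2})$, already well below the target $2^{|\Lambda|-n/3}$.

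Next I would establish the Lipschitz estimate $|\mathcal{I}_\Lambda(\psi)-\mathcal{I}_\Lambda(\phi)|\le \|\ket{\psi}\!\bra{\psi}-\ket{\phi}\!\bra{\phi}\|_1\le 2\|\ket{\psi}-\ket{\phi}\|_2$, so that Levy's lemma on the real sphere $S^{2^{n+1}-1}$ with Lipschitz constant $L=2$ yields
\[
\Pr\!\left[\mathcal{I}_\Lambda(\psi)-\mathbb{E}[\mathcal{I}_\Lambda(\psi)]>t\right]\le \exp\!\left(-\frac{2^n t^2}{18\pi^3}\right),
\]
which identifies $c=1/(18\pi^3)$. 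Setting $t=2^{|\Lambda|-n/3}$ (which dominates the expectation) gives a per-$\Lambda$ failure probability of at most $\exp(-c\,2^{n/3-1})$, and a union bound over the $4^n$ Pauli operators together with the factor $2$ from Levy's two-sided tail produces the total failure probability $\exp(-c\,2^{n/3-1}+(2n+1)\ln 2)$.

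Finally, substituting $\mathcal{I}_{\Lambda_i}\le 2^{|\Lambda_i|-n/3}$ back into Theorem~\ref{th1}: since $D\in\mathcal{O}(\log n)$ and $\max_j|\supp(P_j)|\in\mathcal{O}(\log n)$, equation~\eqref{sizelambda} forces $|\Lambda_i|\in\mathcal{O}(\log^2 n)$, while the prefactor satisfies $\sum_i|c_i|\le (3n)^{\mathcal{O}(\log n)}=2^{\mathcal{O}(\log^2 n)}$, so $|f(\vec\theta)-f_{trv}|\le 2^{\mathcal{O}(\log^2 n)-n/3}\in\mathcal{O}(2^{-n/6})$ because $\log^2 n$ is eventually dominated by $n/6$. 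The main obstacle is cleanly tracking the constants: pinning down the precise factor $1/(18\pi^3)$ from Levy's lemma on $S^{2^{n+1}-1}$ and the correct $2^{2n+1}$ in the union bound (Paulis $\times$ two-sided tail), while checking that absorbing the mean into $t$ costs only a harmless factor of two. The Haar-moment calculation and the Lipschitz estimate themselves are otherwise routine.
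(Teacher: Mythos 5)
Your proposal is correct, and it reaches the same conclusion by the same overall strategy (Theorem~\ref{th1} plus Levy's lemma on the unit sphere), but the concentration step is organized differently from the paper's. The paper follows Popescu--Short--Winter: it first bounds $\mathcal{I}_{\Lambda}(\psi)\le d_{\Lambda}\max_{i}|\Tr[\psi_{\Lambda}O_{\Lambda}^{(i)}]-\Tr[O_{\Lambda}^{(i)}]/d_{\Lambda}|$ over a complete operator basis on $\Lambda$, applies Levy's lemma to each scalar expectation value (whose Haar mean is exactly $\Tr[O_{\Lambda}^{(i)}]/d_{\Lambda}$, so no mean estimate is needed), and union-bounds over the $d_{\Lambda}^2\le 2^{2n}$ basis elements. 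You instead treat $\mathcal{I}_{\Lambda}(\psi)$ itself as the Lipschitz observable (constant $2$, via contractivity of the trace norm under partial trace), which forces you to control its Haar mean separately -- done cleanly through the purity identity $\mathbb{E}[\Tr\psi_{\Lambda}^2]=(d_{\Lambda}+d_{\bar{\Lambda}})/(2^n+1)$, giving $\mathbb{E}[\mathcal{I}_{\Lambda}]\in\mathcal{O}(2^{|\Lambda|-n/2})$ -- and your union bound runs over the $4^n$ possible supports $\Lambda_i$ rather than over a basis on a fixed $\Lambda$. Your route has the advantage of making the uniformity over all light-cone supports $\Lambda_i$ explicit (the paper passes from $\mathcal{I}_{\Lambda_i}$ to $\max_i\mathcal{I}_{\Lambda_i}$ without comment, which is harmless only because each failure probability is doubly exponentially small), at the cost of the extra mean computation; both yield the same $e^{-c2^{n/3}+\mathcal{O}(n)}$ failure probability and the same final $\mathcal{O}(2^{-n/6})$ bound. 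The small discrepancies in the exponent (your $2^{n/3-2}$ after absorbing the mean versus the stated $2^{n/3-1}$) are of the same order as the paper's own bookkeeping slack and do not affect the asymptotic claim.
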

See App.~\ref{App: proofth1} for a proof.
Note that henceforth, we will refer to \emph{overwhelming probability} as a probability $1$ up to a exponentially (in $n$, the size of the system) decaying correction.
In many tasks, multiple copies of a quantum state $\ket{\psi}$ are used to predict important properties, such as entanglement entropy~\cite{Abanin2012meas,foulds2020controlled,beckey2021computable}, quantum magic~\cite{leone2021renyi,oliviero2022measuring,haug2022magic}, or state discrimination~\cite{kang2019implementation}. Thus, it is worth asking whether a function of the form $f(\vec{\theta})=\Tr[U(\vec{\theta})\st{\psi}^{\otimes 2}U^{\dag}(\vec{\theta})O]$ can be trained when $U(\vec{\theta})$ is a shallow HEA acting on $2n$ qubits. In the following corollary, we prove that for the overwhelming majority of states, $\mathcal{I}_{\Lambda}(\psi^{\otimes 2})\in\mathcal{O}(2^{-n})$, and one has deterministic concentration according to Definition~\ref{def: deterministic} even if one has access to two copies of a quantum state. 
\begin{corollary}\label{app2copies}
Suppose one has access to $2$ copies of a Haar random state $\ket{\psi}$ and one computes the function $f(\vec{\theta})=\Tr[U(\vec{\theta})\st{\psi}^{\otimes 2}U^{\dag}(\vec{\theta})O]$. Let $D\in\mathcal{O}(\log (n))$ be the depth of a HEA $U(\vec{\theta})$, and $\max_i\supp(P_i)\in\mathcal{O}(\log (n))$. Then $\mathcal{I}_{\Lambda}(\ket{\psi}^{\otimes 2})\le 2^{|\Lambda|-n/3}$,  with overwhelming probability $\ge 1-e^{-c2^{n/3-4}+(2n+1)\ln 2}$, where $c=(18\pi^3)^{-1}$, and one has:
\be
|f(\vec{\theta})-f_{trv}|\in \mathcal{O}(2^{-n/6})\,.
\ee
\end{corollary}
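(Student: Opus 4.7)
The plan is to reduce the two-copy statement to the single-copy concentration result in Corollary~\ref{application2} by exploiting the tensor product structure of $\ket{\psi}^{\otimes 2}$. First, I would apply Theorem~\ref{th1} to the $2n$-qubit state $\ket{\psi}^{\otimes 2}$ and the shallow HEA $U(\vec{\theta})$. The hypotheses $D\in\mathcal{O}(\log n)$ and $\max_i|\supp(P_i)|\in\mathcal{O}(\log n)$, combined with Eq.~\eqref{sizelambda}, imply that every light-cone $\Lambda_i$ has size $|\Lambda_i|\in\mathcal{O}(\log^2 n)$. Thus, the problem reduces to controlling $\mathcal{I}_{\Lambda}(\psi^{\otimes 2})$ for every subset $\Lambda\subseteq\{1,\dots,2n\}$ of size at most $\mathcal{O}(\log^2 n)$.

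Next, I would split each such subset as $\Lambda=\Lambda^{(1)}\sqcup\Lambda^{(2)}$, where $\Lambda^{(1)}$ is the intersection of $\Lambda$ with the qubits of the first copy and $\Lambda^{(2)}$ with those of the second. Because tracing factors across copies, $(\psi^{\otimes 2})_{\Lambda}=\psi_{\Lambda^{(1)}}\otimes\psi_{\Lambda^{(2)}}$. A telescoping triangle inequality in trace norm,
\begin{equation}
\left\|\rho_1\!\otimes\!\rho_2-\tfrac{\bbbone_{\Lambda^{(1)}}}{2^{|\Lambda^{(1)}|}}\!\otimes\!\tfrac{\bbbone_{\Lambda^{(2)}}}{2^{|\Lambda^{(2)}|}}\right\|_1 \le \left\|\rho_1-\tfrac{\bbbone_{\Lambda^{(1)}}}{2^{|\Lambda^{(1)}|}}\right\|_1+\left\|\rho_2-\tfrac{\bbbone_{\Lambda^{(2)}}}{2^{|\Lambda^{(2)}|}}\right\|_1,
\end{equation}
yields $\mathcal{I}_{\Lambda}(\psi^{\otimes 2})\le \mathcal{I}_{\Lambda^{(1)}}(\psi)+\mathcal{I}_{\Lambda^{(2)}}(\psi)$, where now each $\Lambda^{(j)}$ is a subset of the single $n$-qubit Haar-random state $\ket{\psi}$. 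Applying Corollary~\ref{application2} to each factor, together with a union bound, I can assert that with overwhelming probability both $\mathcal{I}_{\Lambda^{(j)}}(\psi)\le 2^{|\Lambda^{(j)}|-n/3}$ hold simultaneously. Using $|\Lambda^{(1)}|+|\Lambda^{(2)}|=|\Lambda|$, the sum is bounded by $2\cdot 2^{|\Lambda|-n/3}$, i.e.\ $\mathcal{I}_{\Lambda}(\psi^{\otimes 2})\in\mathcal{O}(2^{|\Lambda|-n/3})$, which matches the statement up to constants absorbed in the probability exponent.

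To close, I would insert this estimate into Theorem~\ref{th1}: since $\sum_i|c_i|\le(3n)^{\max_i|\supp(P_i)|}\norm{O}_{\infty}$ is quasi-polynomial in $n$ under the hypotheses and $2^{|\Lambda|-n/3}=2^{-n/3+\mathcal{O}(\log^2 n)}$, the product decays as $2^{-n/3+\mathrm{polylog}(n)}\subseteq\mathcal{O}(2^{-n/6})$ for sufficiently large $n$, establishing the deterministic concentration. The main obstacle I anticipate is recovering the exact probability exponent $-c\,2^{n/3-4}+(2n+1)\ln 2$ stated in the corollary: the naive union bound only loses a factor of $2$ relative to Corollary~\ref{application2}, so the additional slack between $n/3-1$ and $n/3-4$ must be traced back either to the Lipschitz constant of $\ket{\psi}\mapsto\mathcal{I}_{\Lambda}(\psi^{\otimes 2})$ (where $\norm{\psi^{\otimes 2}-\phi^{\otimes 2}}_1\le 2\norm{\psi-\phi}_1$ enters via Levy's lemma), or to a careful accounting when $\Lambda$ ranges over all light-cones simultaneously. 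The remaining steps are bookkeeping built directly on Theorem~\ref{th1} and Corollary~\ref{application2}.
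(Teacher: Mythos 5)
Your proposal is correct, and it takes a genuinely different route from the paper's. The paper works directly with the two-copy state: it shows that the map $\ket{\psi}\mapsto\Tr[\Psi^{(k)}_{\Lambda}P_i]$ is Lipschitz with constant $2k$ (via a telescoping bound on $\norm{\psi^{\otimes k}-\tilde\psi^{\otimes k}}_1$), applies Levy's lemma to concentrate each such expectation value around its Haar average, and then must \emph{separately} show --- by an explicit computation of $\tr_{\bar{\Lambda}}[\Pi_{\mathrm{sym}}^{(2)}]/D_{\mathrm{sym}}^{(2)}$ involving the SWAP between the copies --- that this Haar-averaged reduced state is itself $\mathcal{O}(2^{-n})$-close to maximally mixed. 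Your factorization $(\psi^{\otimes 2})_{\Lambda}=\psi_{\Lambda^{(1)}}\otimes\psi_{\Lambda^{(2)}}$ plus the telescoping trace-norm inequality sidesteps both steps: the symmetric-subspace bookkeeping disappears because you never need the mean of the two-copy reduced state, only that each single-copy marginal is typically near maximally mixed, which is exactly Corollary~\ref{application2}; and the union bound remains valid even though the two marginals are correlated (they come from the same $\ket{\psi}$). Your route is shorter and generalizes immediately to $k$ copies via $\mathcal{I}_{\Lambda}(\psi^{\otimes k})\le\sum_{j}\mathcal{I}_{\Lambda^{(j)}}(\psi)$, whereas the paper's $k>2$ treatment requires summing over conjugacy classes of $S_k$. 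What the paper's heavier machinery buys is the explicit Haar average of the two-copy observables (of independent interest) and the specific constant $2^{n/3-4}$ in the probability exponent, which, as you correctly diagnose, originates from the Lipschitz constant $2k=4$ entering Levy's lemma; your union-bound probability is in fact at least as strong, and the factor-of-two loss in your bound on $\mathcal{I}_{\Lambda}(\psi^{\otimes 2})$ is harmless for the asymptotic claim $\mathcal{O}(2^{-n/6})$.
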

See App.~\ref{App: proofth1} for a proof. Note that the generalization to more copies is straightforward.

The above results show us that there is indeed a no-free-lunch for the shallow HEA. The majority of states in the Hilbert space, follow a volume law for the entanglement entropy and thus have quantum information hidden in highly non-local degrees of freedom, which cannot be accessed through local measurement at the output of a shallow HEA.

\subsection{HEA and area law of entanglement}\label{Sec: arealaw}

The previous results indicate that shallow HEAs are untrainable for states with a volume law of entanglement, i.e., they are untrainable for the vast majority of states. The question still remains of whether shallow HEA can be used if the input states follow an area law of entanglement as in Definition~\ref{def: volumelaw}.   Surprisingly, we can show that in this case there is no concentration, as  the following result holds 
\begin{theorem}[Anti-concentration of expectation values]\label{theore:anticoncentration}
Let $U(\vec{\theta})$ be a shallow HEA with depth $D\in\mathcal{O}(\log(n))$ where each local two-qubit gate forms a $2$-design on two qubits. Then, let $O=\sum_{i}c_iP_i$ be the measurement composed of, at most, polynomially many traceless Pauli operators  $P_i$ having support on at most two neighboring qubits, and where $\sum_i c_i^2\in\mathcal{O}(\poly(n))$. If the input state follows an area law of entanglement,  for any set of parameters $\vec{\theta_B}$ and $\vec{\theta_A}=\vec{\theta_B}+\hat{\vec{e}}_{AB}l_{AB}$ with $l_{AB}\in\Omega(1/\poly(n))$, then
\be
 \Var_{\vec{\theta}_B}[f(\vec{\theta}_A)-f(\vec{\theta}_B)]\in\Omega\left(\frac{1}{\poly(n)}\right)\,.
\label{lowerboundCerezo2}
\ee
\end{theorem}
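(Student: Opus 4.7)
My plan is to reduce the claimed anti-concentration bound to the gradient-variance lower bound already recorded in Eqs.~\eqref{lowerboundCerezo0}--\eqref{lowerboundCerezo}, and then to use the area-law hypothesis (Definition~\ref{def: volumelaw2}) to show that the reduced-state deviations $\eta(\psi_{k,k'})$ appearing in that bound are $\Omega(1/\poly(n))$. Concretely, since every parametrized gate has the form $e^{-i\theta_lH_l}$ with $H_l$ a Pauli satisfying $H_l^2=\bbbone$, the function $f(\vec{\theta}_B+l\hat{\vec{e}}_{AB})$ is analytic in $l$ with derivatives uniformly bounded by $\norm{O}_\infty$. Taylor's theorem around $l=0$ then yields $f(\vec{\theta}_A)-f(\vec{\theta}_B)=l_{AB}\,\partial_\nu f(\vec{\theta}_B)+R(\vec{\theta}_B,l_{AB})$ with $|R|\le l_{AB}^2\norm{O}_\infty/2$, where $\partial_\nu$ denotes the directional derivative along $\hat{\vec{e}}_{AB}$. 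The elementary inequality $\Var[X+Y]\ge(\sqrt{\Var[X]}-\sqrt{\Var[Y]})^2$ then reduces the problem to a lower bound on $\Var_{\vec{\theta}_B}[\partial_\nu f(\vec{\theta}_B)]$, since $l_{AB}\in\Omega(1/\poly(n))$ ensures the linear term dominates the quadratic remainder.

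For the gradient variance I would apply Eq.~\eqref{lowerboundCerezo} term by term. Writing $O=\sum_ic_iP_i$, linearity gives $\partial_\nu f=\sum_ic_i\partial_\nu f_i$ with $f_i(\vec{\theta})=\Tr[U(\vec{\theta})\st{\psi}U^\dagger(\vec{\theta})P_i]$. The 2-design hypothesis on each local two-qubit gate is precisely the condition under which Ref.~\cite{cerezo2020cost} establishes $\Var[\partial_\nu f_i]\ge G_n(D,\ket{\psi})$ for each individual 2-local Pauli $P_i$. Moreover, the 2-design structure forces Paulis with light-cones disjoint from the varied parameter to be uncorrelated with $\partial_\nu f_i$, while overlapping contributions can be controlled by Cauchy--Schwarz using $\sum_ic_i^2\in\mathcal{O}(\poly(n))$. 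The prefactor $(2/5)^{2D}$ in $G_n$ is $1/\poly(n)$ since $D\in\mathcal{O}(\log n)$, so the only remaining task is a lower bound on $\eta(\psi_{k,k'})$.

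The area-law hypothesis provides $\mathcal{I}_{\Lambda}(\psi)\in\Omega(1/\poly(n))$ for every subsystem $\Lambda$ with $|\Lambda|\in\mathcal{O}(\log n)$. Since $\eta(M)=\norm{M-\Tr[M]\bbbone/d_M}_2\ge\norm{M-\Tr[M]\bbbone/d_M}_1/\sqrt{d_M}$, and since the light-cone subsystems summed over in $G_n$ have $d_\Lambda=2^{|\Lambda|}\le 2^{2D}\in\poly(n)$, the trace-norm lower bound immediately upgrades to $\eta(\psi_{k,k'})\in\Omega(1/\poly(n))$ for every summand in Eq.~\eqref{lowerboundCerezo}. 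Substituting back gives $\Var[\partial_\nu f]\in\Omega(1/\poly(n))$, which combined with the Taylor reduction and $l_{AB}\in\Omega(1/\poly(n))$ yields the theorem.

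The main obstacle I anticipate is the covariance bookkeeping in the second step. The bound of Eq.~\eqref{lowerboundCerezo} is formulated for a single local observable, so extending it to a weighted sum of polynomially many Paulis requires careful control of the cross terms $\Cov[\partial_\nu f_i,\partial_\nu f_j]$. While the 2-design assumption trivializes the light-cone-disjoint case, the overlapping case is where the specific hypothesis $\sum_ic_i^2\in\mathcal{O}(\poly(n))$ must be used to prevent destructive interference between Pauli contributions from collapsing the total variance. A secondary care is required in the Taylor step to ensure the remainder $R$ is uniformly negligible in $\vec{\theta}_B$, but the bounded-operator assumption $\norm{O}_\infty\in\mathcal{O}(\poly(n))$ implied by $\sum_ic_i^2\in\mathcal{O}(\poly(n))$ and polynomially many terms makes this routine.
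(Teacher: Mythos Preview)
Your Taylor-expansion step has a genuine gap. The hypothesis $l_{AB}\in\Omega(1/\poly(n))$ is a \emph{lower} bound on the displacement, not an upper bound; nothing in the statement prevents $l_{AB}=\Theta(1)$. In that regime your remainder obeys $|R|\le l_{AB}^2\norm{O}_\infty/2$, and since $\norm{O}_\infty$ can be polynomially large under the stated hypotheses on $O$, $\Var[R]$ can be $\poly(n)$ while the linear-term variance $l_{AB}^2\Var[\partial_\nu f]$ is only $\Omega(1/\poly(n))$; the inequality $\Var[X+Y]\ge(\sqrt{\Var X}-\sqrt{\Var Y})^2$ then yields nothing. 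A second omission is that your $\partial_\nu$ is the \emph{directional} derivative $\sum_j(\hat{\vec{e}}_{AB})_j\,\partial_j f$, so even granting a lower bound on each $\Var[\partial_jf]$ you still need to control the cross terms $\Cov[\partial_jf,\partial_kf]$ across different parameters; you never address these (you only discuss cross terms among Pauli components of $O$, which is a separate issue---and in fact a non-issue, since the bound of Ref.~\cite{cerezo2020cost} invoked here already treats the full $O=\sum_ic_iP_i$ with the $\sum_ic_i^2$ factor built in).

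The paper fixes both problems at once. It replaces Taylor by the \emph{exact} parameter-shift identity $f(\vec\theta+l\hat{\vec e}_i)-f(\vec\theta)=2\sin(l/2)\,\partial_i f(\vec\theta+\tfrac{l}{2}\hat{\vec e}_i)$, valid for any $l$ when the generator is Pauli, so there is no remainder to control. The displacement $\hat{\vec e}_{AB}l_{AB}$ is then decomposed into at most $\poly(n)$ single-parameter steps, and a dedicated lemma shows---using the local $2$-design assumption---that the covariances $\Cov[\Delta f_{i+1,i},\Delta f_{j+1,j}]$ vanish exactly for $i\neq j$. Hence $\Var[\Delta f_{A,B}]=\sum_i4\sin^2(l_i/2)\Var[\partial_i f]$ with no cancellation, and since $l_{AB}\in\Omega(1/\poly(n))$ forces at least one $\sin^2(l_i/2)\in\Omega(1/\poly(n))$, the result follows from the single-parameter gradient bound. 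Your handling of the area-law input---converting $\mathcal I_\Lambda(\psi)\in\Omega(1/\poly(n))$ into $\eta(\psi_{k,k'})\in\Omega(1/\poly(n))$ via norm equivalence on $\log$-sized subsystems, and absorbing the $(2/5)^{2D}$ prefactor as $1/\poly(n)$---is correct and matches the paper.
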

See App.~\ref{App: Theorem 2} for the proof.

Theorem~\ref{theore:anticoncentration} shows that if the input states to the shallow HEA follow an area law of entanglement, then the function $f(\vec{\theta})$ anti-concentrates. That is, one can expect that the loss function values will differ (at least polynomially) at sufficiently different points of the landscape. This naturally should imply that the cost function does not have barren plateaus or exponentially vanishing gradients. In fact, we can prove this intuition to be true as it can be formalized in the following result.
\begin{prop}\label{eq:prop-anticoncentration}
Let $f(\vec{\theta})$ be a VQA cost function or a QML loss function where $U(\vec{\theta})$ is  a shallow HEA with depth $D\in\mathcal{O}(\log(n))$. If the values of $f(\vec{\theta})$ anti-concentrate according to Theorem~\ref{theore:anticoncentration} and Eq.~\eqref{lowerboundCerezo2}, then $\Var_{\vec{\theta}}[\partial_\nu f(\vec{\theta})]\in\Omega(1/\poly(n))$ for any $\theta_\nu\in\vec{\theta}$ and the loss function does not exhibit a barren plateau. Conversely, if $f(\vec{\theta})$ has no barren plateaus, then the cost function values anti-concentrate as in Theorem~\ref{theore:anticoncentration} and Eq.~\eqref{lowerboundCerezo2}.
\end{prop}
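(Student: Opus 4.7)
The plan is to prove both implications by combining the exact parameter-shift rule---applicable because every parametrized gate in the HEA has a traceless Pauli generator---with the equivalence between probabilistic cost and gradient concentration of Ref.~\cite{arrasmith2021equivalence} (already invoked in Sec.~\ref{Sec: trainabilityHEA}). The core identity I will use is
\be
\partial_\nu f(\vec{\theta}) = \tfrac{1}{2}\bigl[f(\vec{\theta}+\tfrac{\pi}{2}\hat{\vec{e}}_\nu)-f(\vec{\theta}-\tfrac{\pi}{2}\hat{\vec{e}}_\nu)\bigr],
\ee
which converts finite differences of $f$ into exact partial derivatives and vice-versa.

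For the forward direction, I would specialize Theorem~\ref{theore:anticoncentration} to the coordinate direction $\hat{\vec{e}}_{AB}=\hat{\vec{e}}_\nu$ and to the length $l_{AB}=\pi\in\Omega(1/\poly(n))$. Defining the midpoint $\vec{\theta}_M=\vec{\theta}_B+\tfrac{\pi}{2}\hat{\vec{e}}_\nu$, the parameter-shift identity yields $f(\vec{\theta}_A)-f(\vec{\theta}_B)=2\,\partial_\nu f(\vec{\theta}_M)$. Because the parameter domains $\{\vec{\Theta}\}$ span full periods of the Pauli-generated gates (as is implicit in the $2$-design hypothesis of Theorem~\ref{theore:anticoncentration}) and are therefore translation-invariant, the average over $\vec{\theta}_B$ coincides with the average over $\vec{\theta}_M$, so $\Var_{\vec{\theta}}[\partial_\nu f]=\tfrac{1}{4}\Var_{\vec{\theta}_B}[f(\vec{\theta}_A)-f(\vec{\theta}_B)]\in\Omega(1/\poly(n))$. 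Since this bound holds for every parameter $\theta_\nu$, the loss function does not exhibit a barren plateau.

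For the converse, assume $\Var_{\vec{\theta}}[\partial_\nu f]\in\Omega(1/\poly(n))$ for every $\nu$. Reading the same chain of equalities backwards immediately gives anti-concentration along the single-coordinate direction $(\hat{\vec{e}}_\nu,\pi)$. To lift the bound to a generic unit direction $\hat{\vec{e}}_{AB}=\sum_\nu c_\nu\hat{\vec{e}}_\nu$ and arbitrary length $l_{AB}\in\Omega(1/\poly(n))$, I would Taylor-expand $f(\vec{\theta}_A)-f(\vec{\theta}_B)=l_{AB}\sum_\nu c_\nu\,\partial_\nu f(\vec{\theta}_B)+\mathcal{O}(l_{AB}^2)$ and compute the variance as a quadratic form $l_{AB}^2\,\vec{c}^{\,\top}\Cov[\nabla f]\,\vec{c}$; under the $2$-design hypothesis on local gates inherited from Theorem~\ref{theore:anticoncentration}, off-diagonal covariances $\Cov[\partial_\nu f,\partial_\mu f]$ are polynomially suppressed compared to the diagonal entries $\Var[\partial_\nu f]\in\Omega(1/\poly(n))$, so the quadratic form remains $\Omega(1/\poly(n))$ for every unit $\vec{c}$.

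The main obstacle is precisely this last step---ruling out destructive cancellations among off-diagonal gradient covariances along a generic direction. The cleanest way around it is to invoke Ref.~\cite{arrasmith2021equivalence} as a black box: their theorem already converts a polynomial gradient-variance lower bound into a polynomial lower bound on $\Var_{\vec{\theta}}[f]$ itself, and translation invariance of the parameter domain then turns the latter into the finite-difference anti-concentration required by Theorem~\ref{theore:anticoncentration}. A self-contained alternative is to reduce the generic direction to a coordinate one by picking the dominant component $\nu^{\star}$ with $|c_{\nu^{\star}}|\ge 1/\sqrt{|\vec{\theta}|}$ and absorbing the remaining components into an initial shift of $\vec{\theta}_B$, at the cost of an additional polynomial factor that the statement of Proposition~\ref{eq:prop-anticoncentration} can comfortably absorb.
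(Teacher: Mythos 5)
Your forward direction (anti-concentration implies no barren plateau) is essentially the paper's own argument: apply the exact parameter-shift rule to write $\Var_{\vec{\theta}}[\partial_\nu f(\vec{\theta})]=\frac{1}{4}\Var[f(\vec{\theta}+\frac{\pi}{2}\hat{\vec{e}}_\nu)-f(\vec{\theta}-\frac{\pi}{2}\hat{\vec{e}}_\nu)]$ and note that the displacement $\pi\hat{\vec{e}}_\nu$ has length in $\Omega(1/\poly(n))$, so the hypothesis of Theorem~\ref{theore:anticoncentration} applies directly. That part is correct and matches the paper.

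The converse is where your proposal has a genuine gap. The first-order Taylor expansion $f(\vec{\theta}_A)-f(\vec{\theta}_B)=l_{AB}\sum_\nu c_\nu\partial_\nu f(\vec{\theta}_B)+\mathcal{O}(l_{AB}^2)$ is only controlled when $l_{AB}$ is small, but $l_{AB}\in\Omega(1/\poly(n))$ is a \emph{lower} bound: $l_{AB}$ may be of order one or larger, in which case the remainder can dominate the linear term and the quadratic-form computation says nothing. Moreover, the claim that off-diagonal gradient covariances are ``polynomially suppressed'' relative to the diagonal is precisely the statement that needs proof, and your dominant-component fallback reintroduces the same problem as an uncontrolled covariance between $f(\vec{\theta}_A)-f(\vec{\theta}_B')$ and $f(\vec{\theta}_B')-f(\vec{\theta}_B)$. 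The paper closes this gap with no Taylor expansion at all: in the proof of Theorem~\ref{theore:anticoncentration} the path from $\vec{\theta}_B$ to $\vec{\theta}_A$ is decomposed exactly into a telescoping sum of single-coordinate moves, each finite difference is rewritten exactly as $\Delta f_{i+1,i}=2\sin(l_i/2)\,\partial_i f(\hat{\vec{\theta}}_i)$ (exact for Pauli generators, no remainder), and Lemmas~\ref{lem:p11} and~\ref{lem:p12} show by explicit Haar integration over the $2$-design gates that the cross-covariances $\Cov[\Delta f_{i+1,i},\Delta f_{j+1,j}]$ vanish identically for $i\neq j$. The variance of the total difference is then a sum of non-negative single-coordinate terms, each lower-bounded by the no-barren-plateau hypothesis; the Proposition's converse is literally this chain with Lemma~\ref{lem:p13} replaced by the assumed gradient bound. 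Your alternative of invoking Ref.~\cite{arrasmith2021equivalence} as a black box is reasonable in spirit (the paper cites it for this equivalence in the main text), but you would still need to verify that the notion of concentration proved there matches the specific finite-difference statement of Eq.~\eqref{lowerboundCerezo2} rather than leaving it as an appeal.
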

See App.~\ref{App: Theorem 2} for the proof.

Taken together, Theorem~\ref{theore:anticoncentration} and Proposition~\ref{eq:prop-anticoncentration} suggest that shallow HEAs are ideal for processing states with area law of entanglement, as the loss landscape is immune to barren plateaus. Evidently, the previous shallow HEAs are capable of achieving a quantum advantage. However, determining whether a quantum advantage is feasible or not, for such ansatzes is beyond the scope of this work (as it requires a detailed analysis of properties beyond the absence of barren plateaus such as quantifying the presence of local minima) we can still further identify scenarios where a quantum advantage could potentially exist. 

First, let us rule out certain scenarios where a provable quantum advantage will be unlikely. These correspond to cases where the input state $\ket{\psi}$ satisfies an area law of entanglement but also admits an efficient classical representation~\cite{orus2014practical,schollwock2011density,verstraete2008matrix,schuch2008entropy,verstraete2004renormalization}.  The key issue here is that if the input state admits a classical decomposition, then the expectation value  $f(\vec{\theta})$ for $U(\vec{\theta})$ being a shallow HEA can be efficiently classically simulated~\cite{kottmann2022optimized}. For instance, one can readily show that the following result holds.
\begin{prop}[Cost of classically computing $f(\vec{\theta})$]\label{theo:classical}
Let $U(\vec{\theta})$ be an alternating layered HEA of depth $D$, and $O=\sum_{i}c_iP_i$. Let $\ket{\psi}$ be an input stat that admits a Matrix Product State~\cite{orus2014tensornetwork} (MPS) description with bond-dimension $\chi$. Then, there exists a classical algorithm that can estimate $f(\vec{\theta})$ with a complexity which scales as $\mathcal{O}((\chi \cdot4^D)^3)$.
\end{prop}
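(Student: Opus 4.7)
The strategy is to exploit two ingredients already present in the paper: the light-cone bound of Eq.~\eqref{sizelambda}, which guarantees that the Heisenberg-evolved Pauli operators have support on only $\mathcal{O}(D)$ sites, and the well-known efficiency of contracting MPSs against local operators. The plan is to reduce the evaluation of $f(\vec{\theta})$ to a sum of local expectation values on the MPS $\ket{\psi}$, each computable in time $\mathcal{O}((\chi\cdot 4^D)^3)$.

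First I would expand
\be
f(\vec{\theta}) = \sum_i c_i \bra{\psi} U^{\dagger}(\vec{\theta}) P_i U(\vec{\theta}) \ket{\psi}
\ee
and treat each term separately. For each Pauli $P_i$, I would build a dense matrix representation of the Heisenberg-evolved operator $Q_i(\vec{\theta}) := U^{\dagger}(\vec{\theta}) P_i U(\vec{\theta})$ by iteratively conjugating $P_i$ with the $\mathcal{O}(D^2)$ two-qubit gates inside its backward light-cone. By Eq.~\eqref{sizelambda} the result is supported on a contiguous block $\Lambda_i$ of size $|\Lambda_i| \leq 2D\cdot|\supp(P_i)| = \mathcal{O}(D)$, so $Q_i(\vec{\theta})$ is completely described by a $2^{|\Lambda_i|}\times 2^{|\Lambda_i|}$ dense matrix of side at most $4^D$, and the cost of its explicit construction is $\poly(D)\cdot 4^{3D}$, comfortably within the claimed budget.

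Next, I would rewrite $\bra{\psi}Q_i(\vec{\theta})\ket{\psi} = \Tr[\rho_{\Lambda_i}\,\tilde{Q}_i(\vec{\theta})]$, where $\rho_{\Lambda_i}=\Tr_{\bar{\Lambda}_i}[\ketbra{\psi}]$ is the reduced density matrix of the MPS on $\Lambda_i$ and $\tilde{Q}_i(\vec{\theta})$ is the non-trivial $4^D\times 4^D$ block of $Q_i(\vec{\theta})$. Using the standard mixed-canonical gauge of the MPS (obtained by a sweep of cost $\mathcal{O}(N\chi^3)$) and contracting the $|\Lambda_i|$ open MPS tensors inside $\Lambda_i$ against the boundary environments, $\rho_{\Lambda_i}$ is built as a dense $4^D\times 4^D$ matrix in time $\mathcal{O}((\chi\cdot 4^D)^3)$ characteristic of MPS--MPO contractions: the dominant intermediate tensor carries one ``system'' bond of dimension $\chi$ together with an ``operator-support'' leg of dimension at most $4^D$, and the elementary cost rule for contracting two order-three tensors yields cubic scaling in their combined dimension. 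Taking $\Tr[\rho_{\Lambda_i}\tilde{Q}_i]$ is then subdominant, and since the Pauli decomposition of $O$ has only $\poly(n)$ terms, the total complexity stays $\mathcal{O}((\chi\cdot 4^D)^3)$ up to a polynomial prefactor that can be absorbed into the $\mathcal{O}$.

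The main technical care is the precise counting of the MPS contraction cost; I would verify that grouping the physical indices of the light-cone block with the doubled bond index of the MPS indeed yields the claimed cubic scaling in $\chi\cdot 4^D$ (and not, naively, something like $\chi^2\cdot 16^{2D}$). A slightly tighter alternative route would apply $U(\vec{\theta})$ gate-by-gate to the MPS, using the fact that the bond dimension grows by at most a factor of $2$ per brick-wall layer, so that the evolved MPS has bond dimension at most $\chi\cdot 2^D$ and the expectation value of the (still local) operator $O$ is obtainable in $\mathcal{O}(N(\chi\cdot 2^D)^3)$; either approach produces the advertised $\mathcal{O}((\chi\cdot 4^D)^3)$ bound.
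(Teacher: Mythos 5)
Your argument is correct and amounts to the standard tensor-network contraction bound that the paper itself does not spell out but simply attributes to the Or\'us review: both your light-cone/reduced-density-matrix route and your gate-by-gate MPS-update route land within the advertised $\mathcal{O}((\chi\cdot 4^D)^3)$ (indeed the reduced-density-matrix contraction actually costs only $\mathcal{O}(\chi^3\cdot 16^D)$ and the gate-by-gate route gives $\mathcal{O}(n(\chi\cdot 2^D)^3)$, both of which are dominated by the claimed bound). The only small point to tidy is that for a Pauli $P_i$ with scattered support the light-cone $\Lambda_i$ is a union of $\mathcal{O}(1)$ intervals rather than a single contiguous block, which merely adds cheap $\mathcal{O}(n\chi^3)$ transfer-matrix contractions between the blocks and does not change the scaling.
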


The proof of the above proposition can be found in~\cite{orus2014tensornetwork}. From the previous theorem, we can readily derive the following corollary. 
\begin{corollary}\label{coro:classical}
Shallow depth HEAs with depth $D\in \mathcal{O}(\log(n))$, and with an input state with a bond-dimension $\chi\in \mathcal{O}(\poly(n))$ can be efficiently classically simulated with a complexity that scales as $\mathcal{O}(\poly(n))$.
\end{corollary}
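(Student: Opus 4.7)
The plan is to derive Corollary~\ref{coro:classical} as a direct consequence of Proposition~\ref{theo:classical} by substituting the scaling assumptions on $D$ and $\chi$ into the complexity bound $\mathcal{O}((\chi\cdot 4^D)^3)$ and verifying that the resulting expression remains polynomial in $n$. There is no deep obstacle here: the whole content of the corollary is the arithmetic observation that a logarithmic depth exponentiates to a polynomial, so the bulk of the argument is bookkeeping.

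First I would invoke Proposition~\ref{theo:classical} to recall that, under the stated MPS structure of the input state, there exists a classical algorithm that estimates $f(\vec{\theta})$ in time $\mathcal{O}((\chi\cdot 4^D)^3)$. Next, using $D\in\mathcal{O}(\log(n))$, I would bound $4^D = 2^{2D}\le 2^{2c\log_2(n)} = n^{2c}$ for some constant $c>0$ coming from the $\mathcal{O}(\log(n))$ assumption on the depth, so that $4^D\in\mathcal{O}(\poly(n))$. Since by hypothesis $\chi\in\mathcal{O}(\poly(n))$, the product $\chi\cdot 4^D$ is a product of two polynomials in $n$ and therefore belongs to $\mathcal{O}(\poly(n))$ itself. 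Cubing a polynomial in $n$ again yields a polynomial, so $\mathcal{O}((\chi\cdot 4^D)^3)\subseteq \mathcal{O}(\poly(n))$, and the classical simulation complexity is polynomial in the number of qubits.

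The only subtlety worth flagging explicitly in the write-up is that the logarithmic scaling of $D$ must be taken in base $2$ (or any fixed base, since the change-of-base only alters the constant $c$), so that the exponent $2D$ translates to $\mathcal{O}(\log n)$ in the exponent and to $\mathrm{poly}(n)$ after exponentiation. With this remark the derivation is complete and no additional technical input beyond Proposition~\ref{theo:classical} is needed.
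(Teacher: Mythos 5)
Your proposal is correct and follows exactly the route the paper intends: the paper derives this corollary "readily" from Proposition~\ref{theo:classical} by substituting $D\in\mathcal{O}(\log(n))$ and $\chi\in\mathcal{O}(\poly(n))$ into the bound $\mathcal{O}((\chi\cdot 4^D)^3)$, which is precisely your bookkeeping argument that $4^{\mathcal{O}(\log n)}$ is polynomial and products and powers of polynomials remain polynomial. Nothing is missing.
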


Note that Proposition~\ref{theo:classical} and its concomitant Corollary~\ref{coro:classical} do not preclude the possibility that shallow HEA can be useful even if the input state admits an efficient classical description. This is due to the fact that, while requiring computational resources that scale polynomially with $n$ (if $\chi$ is at most polynomially large with $n$), the order of the polynomial can still lead to prohibitively large (albeit polynomially growing) computational resources. Still, we will not focus on discussing this fine line, instead, we will attempt to find scenarios where a quantum advantage can be achieved.

In particular, we highlight the seminal work of Ref.~\cite{ge2016area}, which indicates that  while states satisfying an area law of entanglement constitute just a very small fraction of all the states (which is expected from the fact that Haar random states --the vast majority of states-- satisfy a volume law),  the subset of such area law of entanglement states that admit an efficient classical representation is exponentially small. This result can be better visualized in Fig.~\ref{fig:advantage}. The previous gives hope that one can achieve a quantum advantage with  \textit{area law classically-unsimulable states}.

\begin{figure}
    \centering
    \includegraphics[width=.9\linewidth]{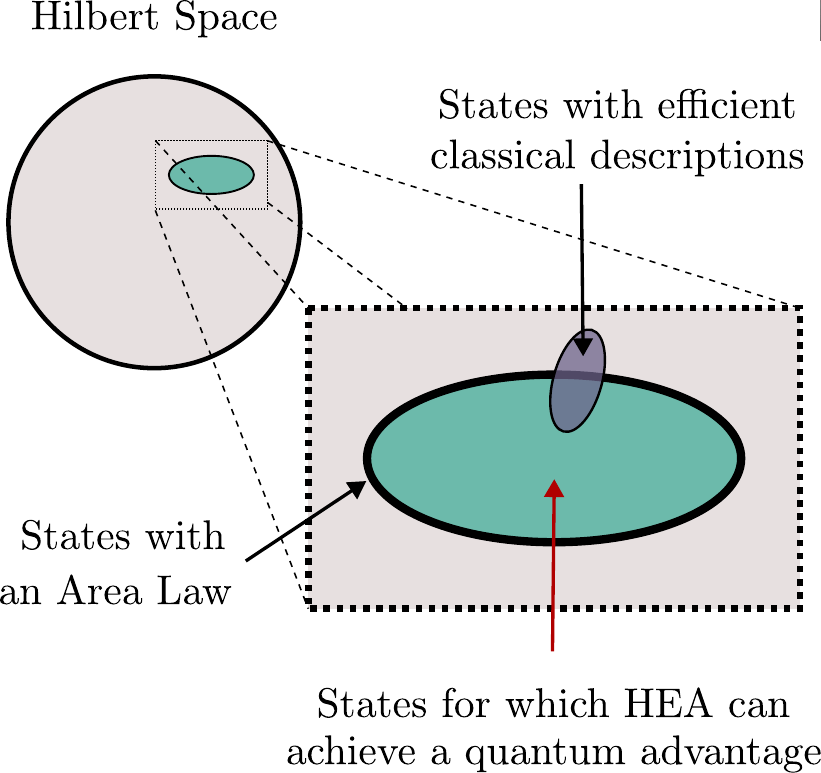}
    \caption{Schematic representation of the Hilbert space. The vast majority of states satisfy a volume law, and hence a shallow HEA cannot be used to extract information from them. From the set of states satisfying an area law, only a very small subset admits an efficient classical representation. For these states, the effect of a shallow HEA can be efficiently simulated. As such, there exists a Goldilocks regime where HEA can potentially be used to achieve a quantum advantage: non-classically-simulable area law states. }
    \label{fig:advantage}
\end{figure}

\begin{figure*}[t]
\centering
\includegraphics[width=\textwidth]{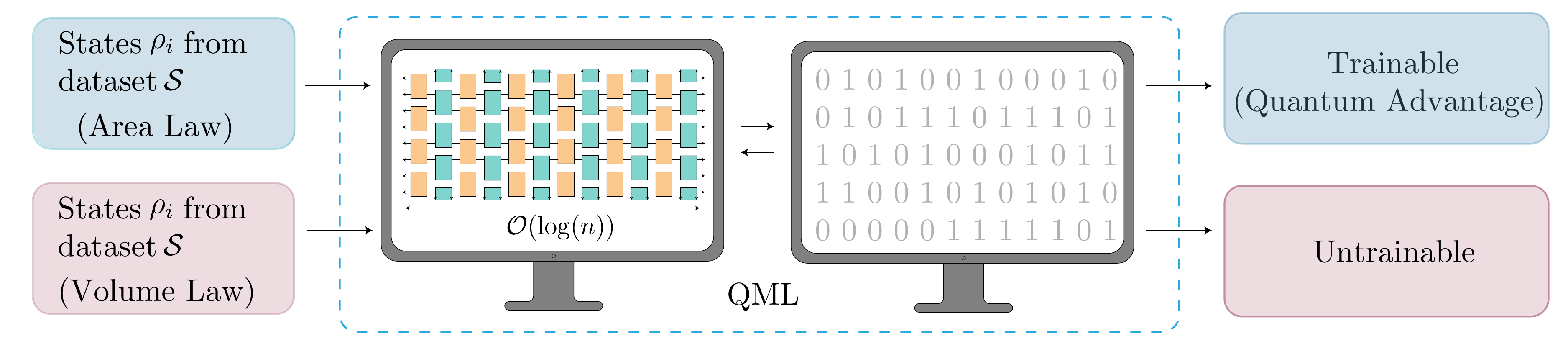}
\caption{Schematic representation of the trainability of two QML tasks with two different datasets $\mathcal{S}$. 
The first task is trainable since the dataset $\mathcal{S}$ is composed of states $\rho_i$ possessing an area law of entanglement. Conversely, the second task is untrainable being the dataset $\mathcal{S}$ composed of states $\rho_i$ possessing volume law of entanglement. We remark that the first task can enjoy a quantum advantage since not all the area-law states are classically simulable, see Ref.~\cite{ge2016area} and Fig.~\ref{fig:advantage}.}
\label{fig:fig4}
\end{figure*}

\section{Implications of our results}

Let us here discuss how our results can help identify scenarios where shallow HEA can be useful, and scenarios where they should be avoided.

\subsection{Implications to VQAs}

As indicated in Definition~\ref{def:VQA}, in a VQA one initializes the circuit to some easy-to-prepare fiduciary quantum state $\ket{\psi_0}$. For instance, in a variational quantum eigensolver~\cite{peruzzo2014variational} quantum chemistry application such an initial state is usually the un-entangled mean-field  Hartree-Fock state~\cite{romero2018strategies}. Similarly, when solving a combinatorial optimization task with the quantum optimization approximation algorithm~\cite{farhi2014quantum} the initial state is an equal superposition of all elements in the computational basis $\ket{+}^{\otimes n}$. In both of these cases, the initial states are separable, satisfy an area law, and admit an efficient classical decomposition. This means that while the shallow HEA will be trainable, it will also be classically simulable. This situation will arise for most VQA implementations as it is highly uncommon to prepare non-classically simulable initial states. From the previous, we can see that shallow HEA should likely be avoided in VQA implementations if one seeks to find a quantum advantage.  

\subsection{Implications to QML}

In a QML task according to Definition~\ref{def:QML}, one sends input states from some datasets into the shallow HEA. As shown in Fig.~\ref{fig:fig4}, the input states are problem-dependent, implying that the usability of the HEA  depends on the task at hand. Our results indicate that HEAs should be avoided when the input states satisfy a volume law of entanglement or when they follow an area law but also admit an efficient classical description. In fact, it is clear that while the HEA is widely used in the literature, most cases where it is employed fall within the cases where the HEA should be avoided~\cite{thanasilp2021subtleties}. As such, we expect that many proposals in the literature should be revised. However, the trainability guarantees pointed out in this work, narrow down the scenarios where the HEA should be used, and leave the door open for using shallow HEAs in QML tasks to analyze non-classically simulable area-law states. In the following section, we give an explicit example, based on state discrimination between area law states having no MPS decomposition, with a possible achievable quantum advantage.

\section{Random Hamiltonian discrimination}\label{sec: HamiltonianDiscrimination}

\subsection{General framework}

In this section, we present an application of our results in a QML setting based on Hamiltonian Discrimination. The QML problem is summarized as follows: the data contains states that are obtained by evolving an initial state either by a general Hamiltonian or by a Hamiltonian possessing a given symmetry. The goal is to train a QML model to distinguish between states arising from these two evolutions. In the example below, we show how the role of entanglement governs the success of the QML algorithm.

Let us begin by formally stating the problem. Consider two Hamiltonians $H_{G}, H_{S}$, and a local operator $S$ being a symmetry for $H_{S}$, i.e. $[H_{S},S]=0$. Let $\mathcal{V}_S\coloneqq \operatorname{span}\{\ket{z}\in\mathbb{C}^{2\otimes n}\,|\, S\ket{z}=\ket{z}\}$, i.e. the linear subspace filled by eigenvectors of $S$. Let us assume that $\operatorname{dim}(\mathcal{V}_S)\ge 2$. Then, we build the dataset following Algorithm~\ref{alg1}.
\begin{algorithm}[H]
\caption{Build the dataset $\mathcal{S}$}\label{alg1}
\algsetup{indent=2em}
\begin{algorithmic}[1]
\STATE $\mathcal{S}=\emptyset$
\STATE s=0
\WHILE{$s\leq N$}
\STATE take $\ket{z_s}\in\mathcal{V}_S$
\STATE $\ket{\psi_{s}}_t=\expf{-iH_St}\ket{z_s}$
\STATE $(1,\ket{\psi_{s}^{H_S}}_t)\in \mathcal{S}$
\STATE  $\ket{\psi_{s+1}}_t=\expf{-iH_Gt}\ket{z_s}$
\STATE $(0,\ket{\psi_{s+1}^{H_G}}_t)\in \mathcal{S}$
\STATE s=s+2
\ENDWHILE
\end{algorithmic}
\end{algorithm}

We consider the case when $U(\vec{\theta})$ is a parametrized shallow HEA, and is $O$ a local operator measured at the output of the circuit. We define
\be
L_s(H,\vec{\theta},t)\coloneqq \tr[\ket{\psi_{s}^{H}}_t\!\!\bra{\psi_{s}^{H}}O(\vec{\theta})]\,,
\label{expvalue-costfunctionLs}
\ee
where $\ket{\psi_{s}^{H}}_t\in(y_s,\ket{\psi_{s}^{H}}_t)\in\mathcal{S}$ for $H\in\{H_G,H_S\}$, and $O(\vec{\theta})\coloneqq U^{\dag}(\vec{\theta})OU(\vec{\theta})$. Here, $y_s=1(0)$ if $H=H_S(H_G)$. In the following, we will drop the superscript in $\ket{\psi_s}\in\mathcal{S}$ to light the notation, unless necessary. Then, the goal is to minimize the empirical loss function:
\be
L(\vec{\theta},t)=\frac{1}{N}\!\sum_{(y_s,\ket{\psi_s})\in\mathcal{S}}\!\left(y_{s}-L_s(H,\vec{\theta},t)\right)^2\ ,
\ee
where $N$ is the size of the dataset $\mathcal{S}$. There are two necessary conditions for the success of the algorithm: $(i)$ the parameter landscape is not exponentially concentrated around its trivial value, and  $(ii)$ there exists $\vec{\theta}_{0}$ such that the model outputs are different for data in distinct classes. For instance, this can be achieved if $U^{\dag}(\vec{\theta}_0)OU(\vec{\theta}_0)=S$; as here  $L_{s}(H_S,\vec{\theta}_0,t)=1$ for any $s$ such that $\ket{\psi_s}\equiv \ket{\psi^{H_S}_{s}}_t$. Then, one also needs to have  $L_{s}(H_G,\vec{\theta}_0,t)$ not being close to one with high probability.  Note that if the symmetry $S$ is a local operator, and $O$ is chosen to be local, there are cases in which a shallow-depth HEA can find the solution  $U^{\dag}(\vec{\theta}_0)OU(\vec{\theta}_0)=S$. Such an example is shown below.

\subsection{Gaussian Diagonal Ensemble Hamiltonian discrimination}

Let us now specialize the example to an analytically tractable problem. We first show how the growth of the evolution time $t$, and thus the entanglement generation, affects the HEA's ability to solve the task. Then,  we show that there exists a critical time $t^{*}$ for which the states in the dataset satisfy an area law, and thus for which the QML algorithm can succeed. Since classically simulating random Hamiltonian evolution is a difficult task, the latter constitutes an example where a QML algorithm can enjoy a quantum speed-up with respect to classical machine learning.

Let $H_G$ be a random Hamiltonian, i.e., $H_G=\sum_{i}E_{i}\Pi_i$, where $\Pi_{i}$ are projectors onto random Haar states, and $E_{i}$ are normally distributed around $0$ with standard deviation $1/2$, (see App. \ref{App: isospec} for additional details). This ensemble of random Hamiltonians is called Gaussian Diagonal Ensemble ($\gde$), and it is the simplest, non-trivial example where our results apply. In Fig.~\ref{Fig: gdeham} we explicitly show how the time evolution under such a Hamiltonian can be implemented in a quantum circuit. Generalizations to wider used ensembles, such as Gaussian Unitary Ensemble ($\gue$), Gaussian Symplectic Ensemble ($\operatorname{GSE}$), Gaussian Orthogonal Ensemble ($\goe$), or the Poisson Ensemble ($\poi$), will be straightforward. We refer the reader to Refs.~\cite{Oliviero2020random,leone2020isospectral} for more details on these techniques. 

Consider a bipartition of $n$ qubits, i.e., $A\cup B$ such that $|A|\ll |B|$. Let $H_{S}$ a Random Hamiltonian commuting with all the operators on a local subsystem $A$, i.e. $[H_S,P_A]=0$ for all $P_{A}$. We can choose $H_{S}$ as $H_{S}=\mathds{1}_A\otimes H_{B}$, with $H_{B}$ belonging to the $\gde$ in the subsystem $B$. Let $H_G$ be a random Hamiltonian belonging to the $\gde$ ensemble in the subsystem $A\cup B$. Since the Hamiltonian $H_S$ commutes with all the operators in $A$, we choose the symmetry $S$ to be $S\equiv P_{A}\otimes \mathds{1}_{B}$, i.e. a Pauli operator with local support on $A$. To build the data-set $\mathcal{S}$, we thus identify the vector space containing all the eigenvectors with eigenvalue $1$ of $P_A$, $\mathcal{V}_{P_A}=\operatorname{span}\{\ket{z}\in\mathbb{C}^{2\otimes n}\,|\, P_A\ket{z}=\ket{z}\}$ and follow Algorithm \ref{alg1}. Note that, with this choice, $\operatorname{dim}(\mathcal{V}_{P_A})=2^{|A|-1}$ and thus we take $|A|\ge 2$. 
The QML task is to distinguish states evolved in time by $H_{G}$ or by $H_S$. Let us choose $O$ a Pauli operator having support on a local subsystem. Then, the following proposition holds:
\begin{prop}\label{propgde1}
Let $L_{s}(H,\vec{\theta},t)$ be the expectation value defined in Eq.~\eqref{expvalue-costfunctionLs}, for $H\in\{H_G,H_S\}$.  If there exist $\vec{\theta}_0$ such that $U^{\dag}(\vec{\theta}_0)OU(\vec{\theta}_0)=S$, then
\begin{align}
L_s(H_{S},\vec{\theta}_{0},t)&=1, \quad \forall t\,,\label{symmtheta0}\\
\mathbb{E}_{\operatorname{GDE}}[L_s(H_G,\vec{\theta},t)]&=e^{-t^2/4}+\mathcal{O}(2^{-n}), \quad \!\! \forall \vec{\theta}\,.\label{averagegde}
\end{align}
\end{prop}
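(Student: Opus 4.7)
The two identities come from unrelated mechanisms, so I would address them separately. For Eq.~(\ref{symmtheta0}), the hypothesis $U^{\dagger}(\vec{\theta}_0)OU(\vec{\theta}_0)=S=P_A\otimes\bbbone_B$, together with the product structure $H_S=\bbbone_A\otimes H_B$, immediately gives $[H_S,S]=0$ and hence $[e^{-iH_St},S]=0$. Since $\ket{z_s}\in\mathcal{V}_S$, i.e.\ $S\ket{z_s}=\ket{z_s}$, we obtain
\begin{equation*}
L_s(H_S,\vec{\theta}_0,t)=\bra{z_s}e^{iH_St}\,S\,e^{-iH_St}\ket{z_s}=\bra{z_s}S\ket{z_s}=1
\end{equation*}
for every $t$, which disposes of the first claim.

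For Eq.~(\ref{averagegde}), the plan is to average explicitly, exploiting the definition of the GDE: diagonalize $H_G=U_H D U_H^{\dagger}$ with $D=\mathrm{diag}(E_1,\ldots,E_d)$, $d=2^n$, the $\{E_i\}$ i.i.d.\ Gaussian of standard deviation $1/2$, and $U_H$ Haar-random on $\mathcal{U}(d)$. Setting $M\equiv U^{\dagger}(\vec{\theta})OU(\vec{\theta})$ and $\ket{\phi_i}=U_H\ket{i}$, the loss expands as
\begin{equation*}
L_s(H_G,\vec{\theta},t)=\sum_{i,j}e^{i(E_i-E_j)t}\langle z_s|\phi_i\rangle\langle\phi_i|M|\phi_j\rangle\langle\phi_j|z_s\rangle.
\end{equation*}
Using $\mathbb{E}_{E}[e^{iE_jt}]=e^{-t^2/8}$, the Gaussian averages yield $1$ on the diagonal $i=j$ and $e^{-t^2/4}$ on the off-diagonal; completing the off-diagonal piece by $\sum_{i,j}\ket{\phi_i}\bra{\phi_j}(\cdots)=\bbbone(\cdots)$ produces
\begin{equation*}
\mathbb{E}_{E}[L_s]=e^{-t^2/4}\bra{z_s}M\ket{z_s}+(1-e^{-t^2/4})\sum_{i}|\langle z_s|\phi_i\rangle|^2\langle\phi_i|M|\phi_i\rangle.
\end{equation*}

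The only non-routine step is the Haar average of the remaining diagonal sum. Writing it as $\mathrm{Tr}\!\left[(\dya{z_s}\otimes M)\,U_H^{\otimes 2}\,Q\,U_H^{\dagger\otimes 2}\right]$ with $Q=\sum_{i}\ket{i,i}\!\bra{i,i}$, the standard second-moment identity $\mathbb{E}_{U_H}[U_H^{\otimes 2}\,Q\,U_H^{\dagger\otimes 2}]=(d+1)^{-1}(\bbbone+\mathbb{S})$---which follows from $\mathrm{Tr}[Q]=\mathrm{Tr}[\mathbb{S}Q]=d$ and the $a\bbbone+b\mathbb{S}$ ansatz on the symmetric subspace, $\mathbb{S}$ denoting the SWAP---evaluates the sum to $(d+1)^{-1}(\mathrm{Tr}[M]+\bra{z_s}M\ket{z_s})$. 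Because $O$ (and therefore $M=U^{\dagger}OU$) is a traceless Pauli, $\mathrm{Tr}[M]=0$, and assembling the two pieces gives
\begin{equation*}
\mathbb{E}_{\mathrm{GDE}}[L_s(H_G,\vec{\theta},t)]=\Bigl[e^{-t^2/4}+\tfrac{1-e^{-t^2/4}}{d+1}\Bigr]\bra{z_s}M(\vec{\theta})\ket{z_s}.
\end{equation*}
At $\vec{\theta}_0$ one has $M=S$ and $\bra{z_s}S\ket{z_s}=1$, so the bracket collapses to $e^{-t^2/4}+\mathcal{O}(2^{-n})$; for generic $\vec{\theta}$ the same prefactor multiplies the uniformly bounded quantity $\bra{z_s}M(\vec{\theta})\ket{z_s}$, which is how I would read the ``$\forall\vec{\theta}$'' clause.

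The main obstacle is the Haar second-moment bookkeeping: one must verify that after combining the Gaussian-average and Weingarten contributions the leading term is genuinely $e^{-t^2/4}$, that the residual correction is $\mathcal{O}(2^{-n})$ uniformly in $\vec{\theta}$, and that the tracelessness of $O$ is truly required to kill the $\mathrm{Tr}[M]$ piece (without it one would pick up an $\mathcal{O}(1)$ off-target contribution). Everything else reduces to the two explicit integrals---one Gaussian, one Haar---that are standard once the spectral decomposition of $H_G$ is in hand.
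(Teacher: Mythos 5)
Your proof is correct and arrives at the same intermediate formula as the paper, $\mathbb{E}_{\gde}[L_s]=e^{-t^2/4}\,\bra{z_s}M(\vec{\theta})\ket{z_s}+\mathcal{O}(2^{-n})$, but by a more self-contained route. The paper treats Eq.~\eqref{symmtheta0} as immediate (exactly as you do) and obtains Eq.~\eqref{averagegde} by invoking the isospectral-twirling machinery of Refs.~\cite{Oliviero2020random,leone2020isospectral}: it writes $L_s=\tr[T_{12}\,\psi_s\otimes O(\vec{\theta})\,W^{\otimes 1,1}(t)]$, substitutes the pre-computed second-order twirl $\mathcal{R}^{(2)}(t)$ of Eq.~\eqref{2isospectraltwirling}, and then inserts the $\gde$ average of the $2$-point spectral form factor, $\overline{\tilde c_2(t)}=e^{-t^2/4}+\mathcal{O}(d^{-1})$. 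Your derivation unpacks exactly these two ingredients from scratch: the Gaussian average of the phases $e^{i(E_i-E_j)t}$ (with $\sigma=1/2$, hence $e^{-t^2/8}$ per eigenvalue and $e^{-t^2/4}$ off-diagonal, $1$ on the diagonal) reproduces the spectral form factor, and your $(\bbbone+\mathbb{S})/(d+1)$ twirl of $Q=\sum_i\ketbra{i,i}$ is precisely the eigenvector (Weingarten) part of $\mathcal{R}^{(2)}$. Both arguments use tracelessness of $O$ in the same place, to kill the $\Tr[M]/(d+1)$ contribution, and your $a\bbbone+b\mathbb{S}$ bookkeeping is right. What your version buys is transparency --- every constant is derived rather than quoted --- and it also makes explicit a point the paper elides: the leading term is genuinely $e^{-t^2/4}\bra{z_s}M(\vec{\theta})\ket{z_s}$, so the statement's ``$e^{-t^2/4}+\mathcal{O}(2^{-n})$ for all $\vec{\theta}$'' holds verbatim only where $\bra{z_s}M(\vec{\theta})\ket{z_s}=1$, e.g.\ at $\vec{\theta}_0$; the paper's own proof indeed evaluates $\tr[\psi_s O(\vec{\theta}_0)]=1$ at that point. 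Your reading of the ``$\forall\vec{\theta}$'' clause is the defensible one.
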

See App.~\ref{App: isospec} for the proof. 

Notably, the symmetry of $H_{S}$ ensures that if the HEA is able to find $\vec{\theta}_0$, then  the output  $L_{s}(H_S,\vec{\theta}_0,t)=1$ is distinguishable from the expected value of $L_{s}(H_G,\vec{\theta}_0,t)$ which is exponentially suppressed in $t$. While in principle it is possible to minimize the loss function $L(\vec{\theta},t)$ for any $t$, the following theorem states that as the time $t$ grows, the parameter landscape get more and more concentrated, according to Definition~\ref{def: deterministic}.
\begin{theorem}[Concentration of loss for GDE Hamiltonians]\label{gdehamtheorem}
Let $L_{s}(H,\vec{\theta},t)$ be the expectation value defined in Eq.~\eqref{expvalue-costfunctionLs} for $H\in\{H_G,H_S\}$. For random  $\gde$ Hamiltonians one has
\be
\operatorname{Pr}\left(|L_s(H,\vec{\theta},s)|\le \epsilon\right)\ge 1-\epsilon^{-1} \,e^{-t^2/4}+\mathcal{O}(2^{-n})\,.
\ee
\end{theorem}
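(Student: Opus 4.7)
The plan is to apply Markov's inequality to the non-negative random variable $|L_s(H,\vec{\theta},t)|$, namely
\begin{equation}
\operatorname{Pr}_{H\sim\gde}\bigl(|L_s(H,\vec{\theta},t)| > \epsilon\bigr) \;\le\; \frac{\mathbb{E}_{\gde}\bigl[|L_s(H,\vec{\theta},t)|\bigr]}{\epsilon}\,,
\end{equation}
and reduce the theorem to showing that the expectation on the right is bounded by $e^{-t^2/4}+\mathcal{O}(2^{-n})$, uniformly in $\vec{\theta}$ and for both choices $H\in\{H_G,H_S\}$. With that bound in hand, the complementary probability $\operatorname{Pr}(|L_s|\le\epsilon)$ immediately satisfies the claimed inequality.

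To bound $\mathbb{E}_{\gde}[|L_s|]$, I would mimic the computation that produced Eq.~\eqref{averagegde} in Proposition~\ref{propgde1}. Writing the spectral decomposition $H=\sum_i E_i\ketbra{\phi_i}$ with $\{E_i\}$ i.i.d. $\mathcal{N}(0,1/4)$ and $\{\ket{\phi_i}\}$ an independent Haar-random basis, the expectation value unfolds as
\begin{equation}
L_s \;=\; \sum_{i,j} e^{i(E_j-E_i)t}\,\braket{z_s|\phi_i}\braket{\phi_i|O(\vec{\theta})|\phi_j}\braket{\phi_j|z_s}\,.
\end{equation}
Integrating the Gaussian eigenvalues first, the diagonal terms $i=j$ contribute $1$ and, after Haar averaging over $\{\ket{\phi_i}\}$, collapse to $\tr[O]/2^{n}=\mathcal{O}(2^{-n})$ since $O$ is a traceless local Pauli, whereas the off-diagonal terms pick up the characteristic factor $\mathbb{E}[e^{i(E_j-E_i)t}]=e^{-t^2/4}$ (because $E_j-E_i\sim\mathcal{N}(0,1/2)$). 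The triangle inequality together with $\norm{O(\vec{\theta})}_\infty\le 1$ then yields $\mathbb{E}_{\gde}[|L_s|]\le e^{-t^2/4}+\mathcal{O}(2^{-n})$. The symmetric case $H_S=\bbbone_A\otimes H_B$ is essentially identical, just restricted to the $B$ subsystem and with the factorization of $\ket{z_s}$ with respect to $A$ guaranteed by $\ket{z_s}\in\mathcal{V}_S$.

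The main technical obstacle is precisely the swap of absolute value and expectation in the step $\mathbb{E}[|L_s|]\le\cdots$: Jensen's inequality runs the wrong way, so one cannot simply inherit the bound on $|\mathbb{E}[L_s]|$ proved in Proposition~\ref{propgde1}. A clean route is to first compute $\mathbb{E}_{\gde}[L_s^2]$, checking by the same index-contraction argument that its leading contribution remains bounded by $e^{-t^2/4}+\mathcal{O}(2^{-n})$ (the surviving contractions in the four-index sum pair up precisely so as to reproduce one factor of the characteristic function), and then invoke Markov on $L_s^2$ or Cauchy--Schwarz to control $\mathbb{E}[|L_s|]$ without degrading the exponent. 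The bookkeeping over which pairings $(i,j,k,l)$ survive the Gaussian averaging and the Haar integration -- so that no spurious constant enters the exponent of $e^{-t^2/4}$ -- is the delicate part of the argument; once this is in place, the dichotomy $H_G$ versus $H_S$ follows by restricting the computation to the appropriate subsystem.
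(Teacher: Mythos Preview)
Your approach is essentially the same as the paper's: apply Markov's inequality to $|L_s|$, bound $\mathbb{E}_{\gde}[|L_s|]$ by $\sqrt{\mathbb{E}_{\gde}[L_s^2]}$ via Jensen, and compute the second moment. The paper carries out the second-moment computation through the isospectral twirling formalism ($\mathcal{R}^{(4)}$) rather than your explicit index-contraction, obtaining $\mathbb{E}_{\gde}[L_s^2]=e^{-t^2/2}\tr[\psi_s O(\vec{\theta})]^2+\mathcal{O}(2^{-n})$, whose square root gives the $e^{-t^2/4}$ in the statement.

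One small correction to your bookkeeping: in the four-index sum for $L_s^2$ the generic off-diagonal term carries the phase $e^{i(E_j-E_i+E_l-E_k)t}$, and for distinct indices this is a sum of four independent $\mathcal{N}(0,1/4)$ variables, so the Gaussian average produces $e^{-t^2/2}$ (equivalently, \emph{two} characteristic-function factors), not $e^{-t^2/4}$. The exponent $e^{-t^2/4}$ only appears after you take the square root. With that fix, your sketch and the paper's proof coincide.
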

See App.~\ref{App: isospec} for the proof.

Note the above concentration bound holds for both $H_{S}$ and $H_{G}$, provided that $|A|\ll|B|$. From the above, one can readily derive the following corollary:
\begin{corollary}\label{gdehamcor}
Let $L_{s}(H,\vec{\theta},t)$ be the expectation value defined in Eq.~\eqref{expvalue-costfunctionLs} for $H\in\{H_G,H_S\}$. Then for $t\ge4\alpha\sqrt{n}/\log_2 e$, $\alpha>0$, and $\epsilon=e^{-\beta n}$, then:
\be
\operatorname{Pr}\left(|L_s(H,\vec{\theta},s)-L_{trv}|\le 2^{-\beta n}\right)\ge 1-2^{-(\alpha-\beta)n}
\ee
for any $\beta<\alpha$.
See App.~\ref{App: isospec}
\end{corollary}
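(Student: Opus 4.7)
My plan is to derive Corollary~\ref{gdehamcor} as a direct specialization of Theorem~\ref{gdehamtheorem}, obtained by substituting particular choices for the threshold $\epsilon$ and the evolution time $t$, after first pinning down the trivial loss value $L_{trv}$.

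First I would fix $L_{trv}$. By analogy with $f_{trv}$ in Definition~\ref{def: deterministic}, $L_{trv}=\tr[O]/2^n$, and the construction of Section~\ref{sec: HamiltonianDiscrimination} takes $O$ to be a Pauli operator with local, necessarily non-identity, support (an identity choice trivializes the discrimination task, since then $L_s\equiv 1$). Any non-identity Pauli is traceless, so $L_{trv}=0$ and therefore $|L_s(H,\vec{\theta},t)-L_{trv}|=|L_s(H,\vec{\theta},t)|$. The event whose probability I must control thus coincides exactly with the event estimated in Theorem~\ref{gdehamtheorem}. I would then instantiate that theorem with $\epsilon=2^{-\beta n}$, giving
\be
\operatorname{Pr}\!\left(|L_s(H,\vec{\theta},t)|\le 2^{-\beta n}\right)\ge 1-2^{\beta n}\,e^{-t^2/4}+\mathcal{O}(2^{-n})\,.
\ee

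The remaining task is to convert the Gaussian tail $e^{-t^2/4}$ into the base-two exponential $2^{-\alpha n}$. Using the hypothesis $t\ge 4\alpha\sqrt{n}/\log e$, a direct substitution into $t^{2}/4$ calibrates the exponent so that $e^{-t^{2}/4}\le 2^{-\alpha n}$, hence the leading failure term obeys $2^{\beta n}\cdot e^{-t^2/4}\le 2^{\beta n}\cdot 2^{-\alpha n}=2^{-(\alpha-\beta)n}$. Combining with the subleading $\mathcal{O}(2^{-n})$ contribution coming from the finite-dimensional corrections in the $\gde$ calculation yields the announced bound.

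The proof is essentially bookkeeping and I do not expect any genuine obstacle. The only spots requiring a little care are (i) verifying the Gaussian-to-base-two conversion with the stated constants in the hypothesis on $t$, and (ii) checking that the residual $\mathcal{O}(2^{-n})$ correction does not dominate $2^{-(\alpha-\beta)n}$. Point (ii) is automatic in the meaningful regime $\alpha-\beta<1$, which is the range of interest for the corollary; for $\alpha-\beta\ge 1$ the stated bound $2^{-(\alpha-\beta)n}$ is weaker than the $\mathcal{O}(2^{-n})$ remainder itself, so the inequality holds trivially there as well.
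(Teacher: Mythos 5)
Your proposal is correct and is precisely the derivation the paper intends: the appendix contains no separate argument for this corollary, which is meant to follow by noting $L_{trv}=0$ (since $O$ is a traceless Pauli), substituting $\epsilon=e^{-\beta n}$ into Theorem~\ref{gdehamtheorem}, and converting the Gaussian tail $e^{-t^2/4}$ into $2^{-\alpha n}$ via the hypothesis on $t$. The one caveat is the point you already flagged as needing care: with $t\ge 4\alpha\sqrt{n}/\log e$ one gets $e^{-t^2/4}\le 2^{-4\alpha^2 n\ln 2}$, which is bounded by $2^{-\alpha n}$ only when $\alpha\ge 1/(4\ln 2)$, a constant-factor looseness inherited from the paper's own statement rather than a defect of your argument.
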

Taken together, Theorem~\ref{gdehamtheorem} and Corollary~\ref{gdehamcor}, provide a no-go theorem for the success of the QML task, as they indicate that beyond $t\sim \sqrt{n}$ one encounters deterministic concentration with overwhelming probability. Crucially, the role of the entanglement generated by $H\in\{H_G,H_S\}$ is hidden in the variable $t$ of the bound in Theorem~\ref{gdehamtheorem}. Indeed, as shown in Refs.~\cite{Oliviero2020random,leone2020isospectral} the entanglement for random  $\gde$ Hamiltonians is monotonically growing with $t$.

\begin{figure}[t]
    \centering
    \includegraphics[width=\linewidth]{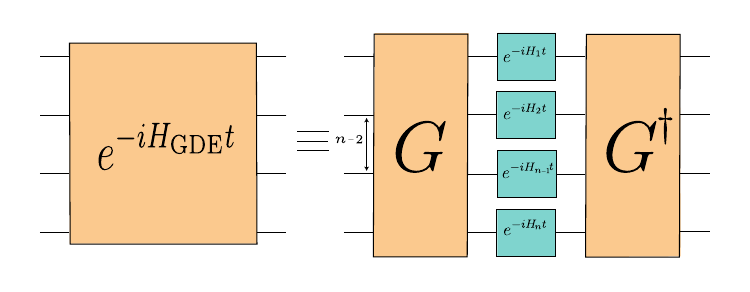}
    \caption{Circuit representation of the exponentiation of a $\gde$ Hamiltonian. The light-blue gates $e^{-iH_{k}t}$, are single qubit gates, exponentiation of $H_{k}\equiv \epsilon_{k}\st{0_k}+\epsilon_{k+1}\st{1_{k}}$. Labeling $i_{k}\in\{0_k,1_k\}$, the numbers $E_{\operatorname{dec}(\vec{i})}=\sum_{k=0}^{n-1}\epsilon_{k+i_k}$ are $2^n$ random numbers sampled from $\mathcal{N}(0,1/2)$, and corresponds to the eigenvalues of $H_{\gde}$; where $\operatorname{dec}(\vec{i})=\sum_{k}2^{k}i_{k}$. $G$ is a deep-random quantum circuit, building the eigenvectors of $H_{\gde}$.}
    \label{Fig: gdeham}
\end{figure}

While the previous results indicate that a HEA-based QML model will fail on the random Hamiltonian discrimination QML task for $t\sim \sqrt{n}$ (due to high levels of entanglement), this does not preclude the possibility of the model succeeding for smaller evolution times. Notably, here we can show that for $t=\mathcal{O}(\sqrt{\log (n)})$ the conditions are ideal for a quantum advantage: the states in the dataset will satisfy an area law of entanglement, and since $\gde$ Hamiltonians are built out of a very deep random quantum circuit, their time evolution can be classically hard. 
In particular, the following theorem holds.
\begin{theorem}[Area law and short-time evolution under $\gde$ Hamiltonians]\label{propgde2}
Let $H_{G}$ be a  $\gde$ Hamiltonian, and let $\ket{\psi_{0}}$ be any factorized state over the bipartition $\Lambda\cup \bar{\Lambda}$, with $\ket{\psi_t}=\expf{-iHt}\ket{\psi_0}$ being its unitary evolution under $H_G$. Let $\Lambda$ be a subsystem, where $|\Lambda|=\mathcal{O}(\log (n))$. Then,
\be
\operatorname{Pr}\left[\mathcal{I}_{\Lambda}(\psi_t)\ge \frac{e^{-t^2/8}}{2}\sqrt{1-2^{-|\Lambda|}}\right]\ge 1-2^{-n/2}\,,
\ee
thus for $t=\mathcal{O}(\sqrt{\log (n)})$, the evolved state satisfies area law of the entanglement with overwhelming probability
\be
\operatorname{Pr}\left[\mathcal{I}_{\Lambda}(\psi_t)\in\Omega\left(\frac{1}{\poly(n)}\right)\right]\ge 1-2^{-n/2}\,.
\ee
\end{theorem}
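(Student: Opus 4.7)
The plan is to lower-bound $\mathcal{I}_{\Lambda}(\psi_t)$ by the Hilbert--Schmidt distance to the maximally mixed state and then to control the purity of the reduced state through a first-moment/variance computation over the GDE ensemble.

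First I would use the elementary inequality $\norm{A}_1\geq\norm{A}_2$ to write
\be
\mathcal{I}_{\Lambda}(\psi_t)\,\geq\,\norm{\psi_t^\Lambda-\bbbone_\Lambda/2^{|\Lambda|}}_2\,=\,\sqrt{\pur(\psi_t^\Lambda)-2^{-|\Lambda|}}\,,
\ee
so it suffices to prove that with probability at least $1-2^{-n/2}$ one has $\pur(\psi_t^\Lambda)-2^{-|\Lambda|}\geq\tfrac{1}{4}\,e^{-t^2/4}(1-2^{-|\Lambda|})$; taking the square root then yields the claimed bound.

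Next I would express the purity as a two-copy expectation via the swap trick,
\be
\pur(\psi_t^\Lambda)=\bra{\psi_0}^{\otimes 2}(e^{iHt})^{\otimes 2}(F_\Lambda\otimes\bbbone_{\bar\Lambda}^{\otimes 2})(e^{-iHt})^{\otimes 2}\ket{\psi_0}^{\otimes 2}\,,
\ee
and decompose $H=GDG^\dagger$ with $G$ Haar and $D=\operatorname{diag}(E_1,\ldots,E_{2^n})$, $E_j\sim\mathcal{N}(0,1/4)$ i.i.d. Averaging first over $\{E_j\}$, the Gaussian characteristic function yields $\mathbb{E}[e^{i(E_j+E_k-E_{j'}-E_{k'})t}]=1$ whenever the multisets $\{j,k\}$ and $\{j',k'\}$ coincide and is suppressed by at least $e^{-t^2/4}$ otherwise; at leading order only the two ``paired'' configurations $(j',k')=(j,k)$ and $(j',k')=(k,j)$ survive. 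The remaining Haar average over $G$ reduces to a $U^{\otimes 2}$ twirl, computable via the standard $2$-design (Weingarten) formulas applied to $F_\Lambda\otimes\bbbone_{\bar\Lambda}^{\otimes 2}$. Exploiting the factorization $\ket{\psi_0}=\ket{\psi_0^\Lambda}\otimes\ket{\psi_0^{\bar\Lambda}}$ together with the hypothesis $|\Lambda|=\mathcal{O}(\log n)$ (so $|\Lambda|\ll n$), the two paired contributions combine to
\be
\mathbb{E}_\gde[\pur(\psi_t^\Lambda)]=2^{-|\Lambda|}+e^{-t^2/4}(1-2^{-|\Lambda|})+\mathcal{O}(2^{-n})\,,
\ee
the two-copy analogue of the single-copy identity already established in Proposition~\ref{propgde1}.

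To promote this first-moment identity into a high-probability statement, I would estimate $\Var_\gde[\pur(\psi_t^\Lambda)]$ through the analogous four-copy computation: eight-index Gaussian phase averages combined with a $U^{\otimes 4}$ Weingarten twirl yield, after extensive cancellation among matched pairings, $\mathbb{E}_\gde[\pur^2]=\mathbb{E}_\gde[\pur]^2+\mathcal{O}(2^{-n})$, so $\Var_\gde[\pur(\psi_t^\Lambda)]\leq\mathcal{O}(2^{-n})$. This is the step I expect to be the main obstacle: the explicit enumeration of non-matched Gaussian pairings interleaved with permutations from $S_4$ is combinatorially heavy, but the isospectral-twirling machinery of Refs.~\cite{Oliviero2020random,leone2020isospectral} packages precisely these integrals into a tractable form. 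With the variance bound in hand, Chebyshev's inequality at threshold $2^{n/4}\sigma=\mathcal{O}(2^{-n/4})$ gives $\pur(\psi_t^\Lambda)\geq\mathbb{E}_\gde[\pur]-\mathcal{O}(2^{-n/4})$ with probability $\geq 1-2^{-n/2}$; for any $t\lesssim\sqrt{n}$ this deviation is absorbed as $\pur(\psi_t^\Lambda)-2^{-|\Lambda|}\geq\tfrac{1}{4}\,e^{-t^2/4}(1-2^{-|\Lambda|})$, yielding the first inequality of the theorem. Specializing to $t=\mathcal{O}(\sqrt{\log n})$ together with $|\Lambda|=\mathcal{O}(\log n)$ (so that $1-2^{-|\Lambda|}\in\Omega(1)$ and $e^{-t^2/8}\in\Omega(1/\poly(n))$) immediately delivers $\mathcal{I}_\Lambda(\psi_t)\in\Omega(1/\poly(n))$ with overwhelming probability, completing the area-law statement.
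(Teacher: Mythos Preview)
Your overall strategy—bound $\mathcal{I}_\Lambda$ below by the purity deficit, compute the GDE mean and variance of the purity, then apply a one-sided tail bound—is exactly the paper's route. Two technical points deserve correction, though neither breaks the argument.

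First, the Haar degree is twice what you state. Since $W=Ge^{-iDt}G^\dagger$, the purity $\tr[T_\Lambda W^{\otimes2}\psi_0^{\otimes2}W^{\dagger\otimes2}]$ already carries four $G$'s and four $G^\dagger$'s; its Haar average is the fourth-moment isospectral twirl $\mathcal{R}^{(4)}$, not a ``$U^{\otimes2}$ twirl''. The governing spectral function is $\tilde c_4(t)$, whose GDE average is $e^{-t^2/2}$, so the correct mean is $\mathbb{E}_{\gde}[\pur]=d_\Lambda^{-1}+e^{-t^2/2}(1-d_\Lambda^{-1})+\mathcal{O}(d^{-1})$, with exponent $t^2/2$ rather than your $t^2/4$. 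Second, and more importantly, the variance step is an \emph{eighth} moment in $G$, not $U^{\otimes 4}/S_4$: $P^2$ is degree four in $W$ and hence degree eight in $G$. The paper explicitly deems the $8!$-term Weingarten intractable and instead expands $T_\Lambda$ and $\st{0}$ in the Pauli basis so that $P^2(\psi_\Lambda)$ becomes a sum of eight-point OTOCs $\tr[\prod_l A_lB_l(t)]$, to which the asymptotic formula of Ref.~\cite{cotler2017chaos} applies directly and yields $Q_{\gde}-P_{\gde}^2=\mathcal{O}(d^{-1})$. This OTOC reduction is the technical crux your sketch is missing.

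Your use of $\norm{\cdot}_1\ge\norm{\cdot}_2$ in place of the paper's bound from~\cite{coles2019strong}, and of Chebyshev in place of Cantelli, are both fine and arguably cleaner. With the corrected mean and your norm inequality one would in fact land on $\mathcal{I}_\Lambda\gtrsim e^{-t^2/4}\sqrt{1-d_\Lambda^{-1}}$ rather than the stated $e^{-t^2/8}$ prefactor, but this is immaterial for the final $\Omega(1/\poly(n))$ conclusion at $t=\mathcal{O}(\sqrt{\log n})$.
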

See App.~\ref{App: isospec} for the proof. Moreover,  the following corollary holds
\begin{corollary}
Let $H_{G}$ be a  $\gde$ Hamiltonian, and let $\ket{\psi_{0}}$ a factorized state, with $\ket{\psi_t}=\expf{-iHt}\ket{\psi_0}$ being its unitary evolution under $H_G$. Let $\Lambda$ be a subsystem, where $|\Lambda|=\mathcal{O}(\log (n))$. Then  the probability that the loss function $L(\vec{\theta},t)$ anti-concentrating is overwhelming.
\begin{proof}
The corollary easily descends from Proposition~\ref{propgde2}, Theorem~\ref{theore:anticoncentration} and Proposition~\ref{eq:prop-anticoncentration}.
\end{proof}
\end{corollary}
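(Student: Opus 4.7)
The plan is to chain the three preceding results in order. First, I would apply Theorem~\ref{propgde2} to every state produced by Algorithm~\ref{alg1}: for any factorized reference vector $\ket{z_s}\in\mathcal{V}_{P_A}$ and any time $t=\mathcal{O}(\sqrt{\log n})$, the evolved state $\ket{\psi_s}_t = e^{-iHt}\ket{z_s}$ (with $H\in\{H_G,H_S\}$) satisfies $\mathcal{I}_{\Lambda}(\psi_s)\in\Omega(1/\poly(n))$ on every subsystem $\Lambda$ with $|\Lambda|=\mathcal{O}(\log n)$, except with probability at most $2^{-n/2}$. A union bound over the $N=\poly(n)$ samples in $\mathcal{S}$ then places the entire dataset in the area-law regime simultaneously with probability $1-N\cdot 2^{-n/2}=1-2^{-\Omega(n)}$.

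Conditioned on this overwhelmingly likely event, the hypotheses of Theorem~\ref{theore:anticoncentration} are met sample by sample: $U(\vec{\theta})$ is the shallow alternating-layered HEA with depth $D\in\mathcal{O}(\log n)$, $O$ is a local Pauli operator (so $\sum_i c_i^2\in\mathcal{O}(\poly(n))$), and each input state $\ket{\psi_s}_t$ follows an area law. Applied to each individual contribution $L_s(H,\vec{\theta},t)$, the theorem yields
\be
\Var_{\vec{\theta}_B}[L_s(H,\vec{\theta}_A,t)-L_s(H,\vec{\theta}_B,t)]\in\Omega(1/\poly(n))
\ee
for any pair of parameters separated by $\Omega(1/\poly(n))$.

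Finally, I would lift the per-sample statement to the empirical loss $L(\vec{\theta},t)=\frac{1}{N}\sum_s (y_s-L_s(H,\vec{\theta},t))^2$ using Proposition~\ref{eq:prop-anticoncentration}. The cleanest route is through its gradient-variance formulation: anti-concentration of each $L_s$ implies $\Var_{\vec{\theta}}[\partial_\nu L_s]\in\Omega(1/\poly(n))$, and by the chain rule $\partial_\nu L=-\frac{2}{N}\sum_s (y_s-L_s)\,\partial_\nu L_s$ is a polynomially-bounded affine combination of $\poly(n)$ non-vanishing gradient terms. Invoking the converse direction of Proposition~\ref{eq:prop-anticoncentration} then upgrades the non-vanishing gradient variance back into anti-concentration of $L$ itself.

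The main obstacle is precisely this last aggregation step: per-sample anti-concentration does not a priori rule out destructive cancellation among the $L_s$ when they are summed. Handling it requires showing that the cross-covariances $\Cov_{\vec{\theta}}[\partial_\nu L_s,\partial_\nu L_{s'}]$ do not conspire to cancel the diagonal variance. The ensemble structure helps here: the states $\ket{\psi_s}_t$ are generated from \emph{independent} $\gde$ Hamiltonian draws acting on distinct reference vectors $\ket{z_s}$, so the $L_s$ are effectively decorrelated across $s$ when viewed as functions of $\vec{\theta}$, and the aggregate variance inherits the per-sample $\Omega(1/\poly(n))$ scaling up to a harmless $1/N$ prefactor that is itself $1/\poly(n)$. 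A more conservative fallback, sufficient for the statement, is to exhibit one direction $\hat{\vec{e}}_{AB}$ and one sample $s^\star$ such that the variance contribution of the $s^\star$-term dominates the Cauchy--Schwarz bound on all other terms, which follows from the bounded prefactor $|y_s-L_s|\le 2$ together with the polynomial lower bound on $\Var[\partial_\nu L_{s^\star}]$.
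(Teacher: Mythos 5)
Your proposal follows essentially the same route as the paper, whose entire proof is the single sentence that the corollary ``easily descends'' from Theorem~\ref{propgde2}, Theorem~\ref{theore:anticoncentration} and Proposition~\ref{eq:prop-anticoncentration}; you chain exactly these three results in the same order. The only difference is that you fill in details the paper leaves implicit --- the union bound over the $\poly(n)$ dataset elements and, more importantly, the aggregation from per-sample anti-concentration of $L_s$ to anti-concentration of the empirical loss $L(\vec{\theta},t)$, a step where you correctly flag the possibility of cancellation among samples that the paper's one-line proof does not address at all.
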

As shown above, for $t\in\mathcal{O}(\sqrt{\log(n)})$, the states generated by the time evolution of $\gde$ Hamiltonians obey to area law of the entanglement with overwhelming probability. Thanks to Theorem \ref{theore:anticoncentration}, we also have that the loss function $L(\vec{\theta},t)$ anti-concentrates, giving strong evidence of the success of the Hamiltonian Discrimination QML task.

\section{Numerical simulations}

In this section, we present numerical results which further explore the connection between the entanglement in the input state, and the phenomenon of gradient concentration. In particular, we are interested in showing how the parameter landscape of a QML problem becomes more and more concentrated as the entanglement in the input state grows.

\begin{figure}[t]
    \centering
    \includegraphics[width=0.5\columnwidth]{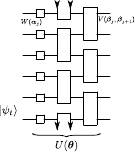}
\caption{Circuit of HEA used in numerical simulations. The architecture is composed of an initial layer of general single qubit unitaries. We denote these gates as $W(\vec{\alpha}_{j})\equiv e^{-i\vec{\sigma}_j\cdot \vec{\alpha}_j}$, and we parametrize them by $3$ angles $\vec{\alpha}_j\equiv(\alpha^{x}_{j},\alpha^{y}_{j},\alpha^{z}_{j})$. Here $\vec{\sigma}_j=(X_j,Y_j,Z_j)$. Then, two-qubit gates act on neighboring pairs of qubits. Each two-qubit gate denoted as $V_{j,j+1}(\vec{\beta}_j,\vec{\beta}_{j+1})$, is composed of a CNOT followed by a general single qubit gate on each qubit. That is, $V_{j,j+1}(\vec{\beta}_j,\vec{\beta}_{j+1})= W(\vec{\beta}_j)W(\vec{\beta}_{j+1})\cnot_{j,j+1}$. }
    \label{Fig: LfigHEA}
\end{figure}

To create $n$-qubit states with different amounts of entanglement, we will consider time-evolved states of the form  
\begin{equation}\label{eq:input_state}
    \ket{\psi_t}\coloneqq \expf{-iHt}\ket{\psi_0}\,,
\end{equation}
where $\ket{\psi_0}$ is a random product state, and where $H$ is the Heisenberg model with  first-neighbor interactions
\be
H=\sum_{i=1}^{n}X_iX_{i+1}+Y_{i}Y_{i+1}+2Z_{i}Z_{i+1}+X_i\, ,
\label{Hamiltoniannumerics}
\ee
with periodic boundary conditions ($n+1\equiv 1$). Here, $\sigma_i$ with $\sigma=X,Y,Z$, denotes a Pauli operator acting on qubit $i$. As we will see below, as $t$ increases, so does the entanglement in $\ket{\psi_t}$. 

Next, we will consider a learning task where we want to minimize a cost-function of the form
\be
L(\vec{\theta})=1-\tr[U(\vec{\theta})\st{\psi_t}U^{\dag}(\vec{\theta})O_Z] \,,
\label{costnumerics}
\ee
where $O_Z=\sum_{i}Z_i$ (i.e., $O_Z$ is a sum of $1$-local operators), and where $U(\vec{\theta})$ is a shallow HEA. Specifically, we employ the HEA architecture shown in Fig.~\ref{Fig: LfigHEA} which is composed of an initial layer of general single qubit rotations, followed by two-qubit gates on alternating pairs of qubits. The two-qubit gates are themselves composed of a CNOT gate followed by general single-qubit gates on each qubit.  

\begin{figure*}
    \centering
    \includegraphics[width=.8\textwidth]{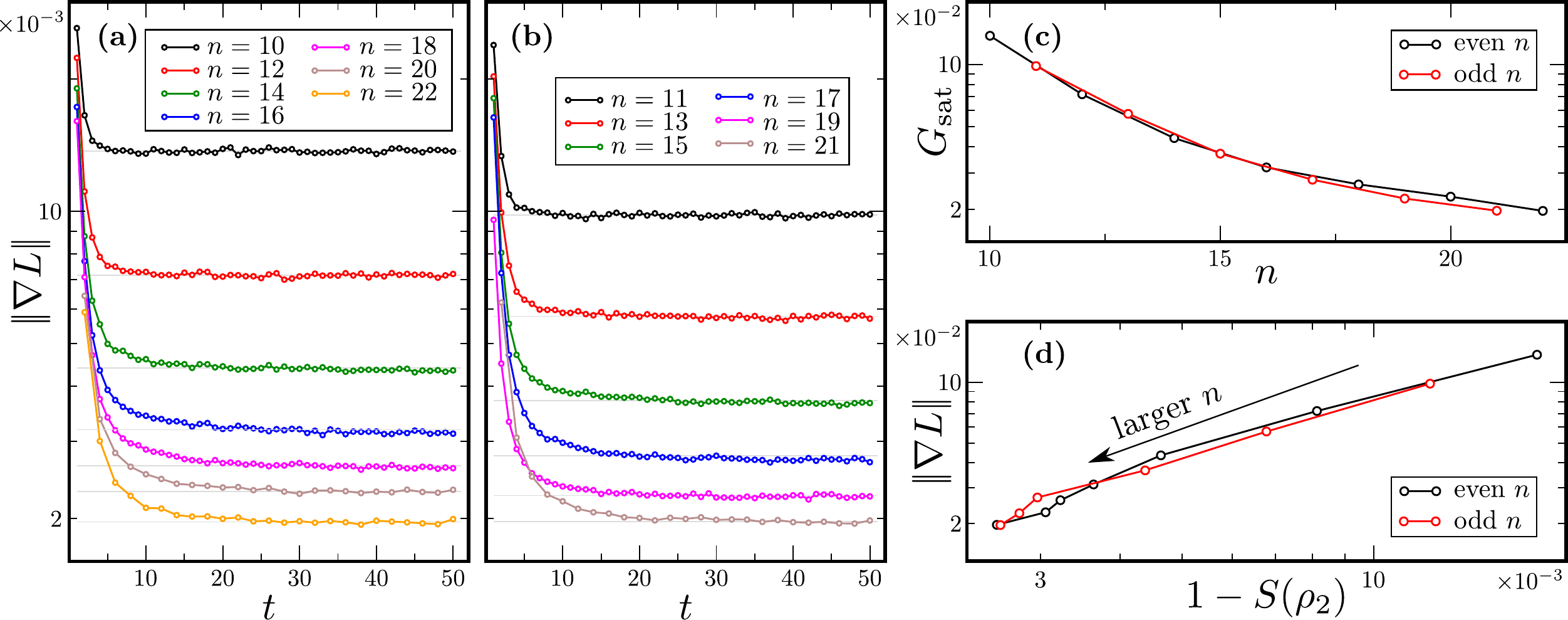}
    \caption{Numerical results. We consider a problem where the input states $\ket{\psi_t}$ of the HEA are determined by Eqs.~\eqref{eq:input_state} and~\eqref{Hamiltoniannumerics} and where  the loss function is given by~\eqref{costnumerics}. In panels (a) and (b) we respectively show (averaged over $\ket{\psi_t}$ and $\vec{\theta}$) norm of the gradient $\partial_\mu L(\vec{\theta})$ as a function of the evolution time $t$ for different system sizes  with $n$ even and odd. Panel (c) shows norm saturation value $G_{\text{sat}}$ of the results presented in (a) and (b) versus system size $n$. Panel (d) shows the norm of the gradient versus $1-S(\rho_2)$, where $S(\rho_2)$ is the entropy of two-qubit subsystem for an evolution time such that the saturation value $G_{\text{sat}}$ is achieved. Different points correspond to different values of $n$. The results are averaged over 400 initial states $\ket{\psi_0}$ in Eq.~\eqref{eq:input_state} and two sets of angles $\vec{\theta}$ for every initial state. }
    \label{Fig: numerics}
\end{figure*}

In Figs.~\ref{Fig: numerics}(a,b) we show averaged norm of the gradient $\partial_{\mu}L(\vec{\theta})$, i.e. $\norm{\nabla L}_\infty \equiv \max_\mu |\partial_{\mu}L(\vec{\theta})|$, as a function of the evolution time $t$ used to prepare the input state of the HEA for different problem sizes. Gradients are computed by averaging over $400$ random product states $\ket{\psi_0}$, and two sets of random parameters in the HEA for each initial state. Here we can see that for small evolution times  the cost exhibits large gradients independently of the system size. This result is expected as we recall that in the limit $t\rightarrow 0$ the input state $\ket{\psi_0}$ is a tensor product state, which, along with $1$-local measurements  and the HEA structure, leads to the gradients which norms are independent of $n$. As $t$ increases, we can see that the gradient norm decrease until a saturation value $G_{\text{sat}}$ is achieved. Moreover, we can see that the value of $G_{\text{sat}}$ depends on the number of qubits in the system. In fact, as shown in Fig.~\ref{Fig: numerics}(c), $G_{\text{sat}}$ decays polynomially with $n$. We can further understand this behavior by noting that as $t$ increases, the time-evolution $\expf{-iHt}$ produces larger amounts of entanglement in the input state, and concomitantly smaller gradients (as indicated by our main results above). To see that this is the case, we compute rescaled entropy $S(\rho_2)=-\Tr[\rho_2 \log(\rho_2)]/2$ where $\rho_2$ is the reduced state on two nearest-neighbor qubits for a sufficiently large time $t$ such that $G_{\text{sat}}$ is achieved. Results are shown in 
Fig. \ref{Fig: numerics}(c). It shows a positive correlation between the decay of gradients and the increase in reduced state entropy. Thus, the more entanglement in the input state, the smaller the gradients, and the more concentrated the landscape.

\section{Discussion and conclusions}

Understanding the capabilities and limitations of VQA and QML algorithms is crucial to developing strategies that can be used to achieve a quantum advantage. One of the most relevant ingredients in ensuring the success of a VQA/QML model is the choice of ansatzes for the parametrized quantum circuit. In this work, we focused our attention on the shallow HEA, as it can avoid barren plateaus, and since it is perhaps one of the most NISQ-friendly ansatzes. Currently, the HEA is widely used for a plethora of problems, irrespective of whether it is well-fit for the task and data at hand. In a sense, the HEA is still a ``\textit{solution in search of a problem}'' as there was no rigorous study of the tasks where it should, or should not be used. In this work, we establish rigorous results, showing how, and in which contexts, HEAs are (and are not) useful and can eventually provide a signature of quantum advantage.

We first review relevant results from the literature, discussing the notion of cost and loss function concentration and necessary conditions for trainability of HEAs -- i.e. shallowness and locality of measurements. Here we highlight the existence of a new source of untrainability of shallow HEAs:  the entanglement of the input states. On one hand, we proved that HEAs are untrainable if the input states satisfy a volume law of entanglement, as the cost function is deterministically concentrated around its trivial value. On the other hand, if the input states follow an area law of entanglement, the HEA is trainable. In fact, here we prove that the loss function  anti-concentrates, i.e., it differs, at least polynomially, at sufficiently different points of the parameters landscape.

While the role of entanglement in the trainability of VQA and QML models has  been explored in Refs.~\cite{marrero2020entanglement,patti2020entanglement}, the results found therein are conceptually different from ours. Namely, in these references the authors point out that deep parametrized quantum circuit ansatzes create volume law for the entanglement entropy, making the parameter landscape exponentially flat in the number of qubits and thus giving rise to entanglement-induced barren plateaus. As such, these results study the entanglement created \textit{during} the circuit, but not that \textit{already present} in the input states. For instance, the shallow HEA cannot create volume law of entanglement, yet, it is still untrainable if such entanglement exists in the input state. Hence, our work provides a new source of untrainability for certain datasets.

Next, we also analyzed the still open question of whether the HEA is able to achieve a quantum advantage in a VQA/QML setting. While the answer is far beyond the scope of the paper, we identified regimes in which the HEA can or cannot provide quantum speed-ups. Here we  proved that thanks to the shallowness of HEAs, input states with bond dimensions at most polynomially in the number of qubits can be simulated with only a polynomial overhead on classical machines. This result rules out the use of HEA in VQAs: as many examples show, the typical input state for a VQA is an easy-to-prepare product state, thus allowing an efficient classical decomposition. Conversely, for QML algorithms the question still remains open: the portion of area law states admitting an efficient classical description is exponentially small~\cite{ge2016area}. While this is not a guarantee for achieving quantum advantage, this is definitely the window to look at for  applications beyond those solvable by classical capabilities. 

We indeed push forward the latter intuition and provide an example to which our results apply. Namely,  we present a Hamiltonian discrimination QML problem, where initial product states are evolved in time by two types of Hamiltonians, one possessing a given local symmetry, and one completely general. We show that, while the task becomes less and less feasible if the evolution time is long (as entanglement growing in time), for a given time window (scaling logarithmically with the number of qubits) such states possess area law of entanglement, ensuring the absence of barren plateaus in the loss landscape.

Such an example serves as a pivotal one for future, and hopefully fruitful, usages of the HEA. Our recipe to prepare a Barren-plateau-free QML problem is the following: consider entangled enough quantum input data, pass it through a shallow-depth circuit, and then measure with local operators. Importantly, if one wishes to use this scheme for classical data, it will be extremely important to find data-embedding schemes that lead to area-law states, but which are not themselves classically simulable. We expect that the search for such entangled-tamed embedding could be a fruitful research direction.

There are still several directions to be explored after the analysis of the present work. We indeed emphasize that, while for unstructured HEAs, this paper definitely rules out volume law states as input states of QML algorithms with HEA ansatzes, there is the fascinating possibility that, with even a little prior knowledge of the input states, a problem-aware HEA  could avoid exponential concentration in the parameter landscapes. Indeed, the choice of some structured, and problem-dependent ansatz can avoid barren plateaus: a prominent example is the Geometric Quantum Machine Learning, which exploits the geometric symmetries of the input data-set to design symmetry-aware parametrized ansatzes~\cite{larocca2022group,meyer2022exploiting,skolik2022equivariant,sauvage2022building,
nguyen2022atheory,schatzki2022theoretical,ragone2022representation}.

\acknowledgements

This work was supported by the U.S.Department of Energy (DOE) through a quantum computing program sponsored by the Los Alamos National Laboratory Information Science \& Technology Institute.  L.L and S.F.E.O. acknowledge support from NSF award no. 2014000 and by the  Center for Nonlinear Studies at Los Alamos National Laboratory (LANL). L.C. was partially supported by the U.S. DOE, Office of Science, Office of Advanced Scientific Computing Research, under the Accelerated Research in Quantum Computing (ARQC) program. M.C. was initially supported by the Laboratory Directed Research and Development (LDRD) program of LANL under project number 20230049DR.
The authors also acknowledge support by the U.S. DOE through a quantum computing program sponsored by the LANL Information Science \& Technology Institute.

\bibliographystyle{quantum}
\bibliography{VQAent}

\appendix
\onecolumngrid
\section{Proof of Theorem~\ref{th1} and Corollaries~\ref{cor2}, \ref{application2}, and~\ref{app2copies}}\label{App: proofth1}
\subsection{Proof of Theorem~\ref{th1}}
Let us consider the following distance $|f(\vec{\theta})-f_{trv}|$, where $f(\vec{\theta})=C(\vec{\theta})$ and $f_{trv}=2^{-n}\tr[O]$. Our task is to upper-bound this distance with the information-theoretic measure of information scrambling $\mathcal{I}_{\Lambda}(\psi)$.
Let us decompose the measurement operator $O$ on the Pauli basis:
\be
O=\sum_{i=1}^{N}c_{i}P_{i}\,,
\label{operatordef}
\ee
where $c_{i}\coloneqq \tr[OP_i]/2^n$, and $\tr[P_iP_j]=2^n\delta_{ij}$. Then, we define the support $\supp(P_i)$ as the ordered set of qubits containing non-identity operators in the tensor product structure of each Pauli operator $P_i=\bigotimes_{j=1}^{n}\sigma_{j}^{(i)}$, where $\sigma_{j}^{(i)}$ is a single qubit Pauli operator acting on the $j$-th qubit. Explicitly, $\supp(P_i)=\{q_{1}^{(i)},\ldots, q_{S_i}^{(i)}\,|\, q_{1}^{(i)}< \ldots<q_{S_i}^{(i)}, q_{k}^{(i)}\in [1,n]\}$ is an ordered subset of natural numbers labeling the qubits on which $P_i$ acts non-trivially, and $S_i$ labels the number of qubits on which $P_i$ acts, see Fig.~\ref{fig:cluster}. 
\begin{figure}[t]
    \centering
    \includegraphics[width=1\linewidth]{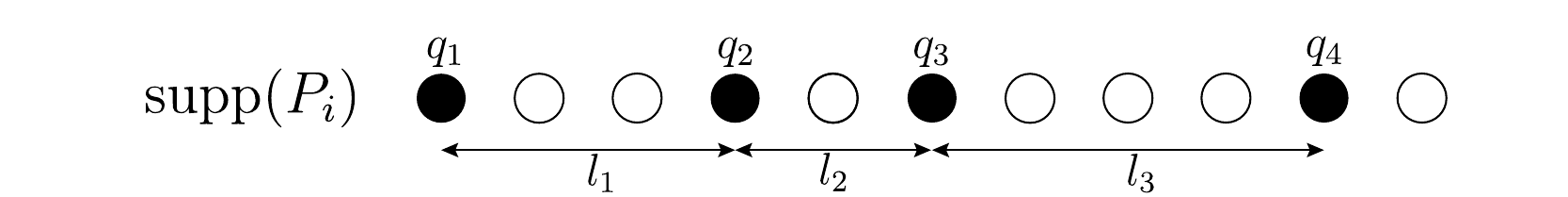}
    \caption{ The support $\supp(P_{i})$ contains $q_{1},\ldots, q_{4}$ qubits, whose relative distances are $l_{1},\ldots, _{3}$ qubits, used to define clusters when compared to $2D$.}
    \label{fig:cluster}
\end{figure}
From this, we also define the support of $O$ as the ordered subset of qubits given by the union of the supports of each $P_i$, i.e., $\supp(O)=\bigcup_{i}\supp(P_i)$.  Moreover, we note that $\sum_{i}|c_i|\le N\max_{i}|c_i|\le N\norm{O}_{\infty}$ (which the latter follows from Holder's inequality: $|c_{i}|= 2^{-n}|\tr(OP_i)|\le 2^{-n}\|O\|_{\infty}\|P_i\|_1=\|O\|_{\infty}$), and $N\le (3n)^{\max_i|\supp(P_i)|}$ (which follows from a counting argument). Let us define the pairwise relative distance between the qubits $q_{k}$ and $q_{k+1}$ belonging to $\supp(P_{i})$ as:
\be
l_{k}^{(i)}\coloneqq |q_{k+1}^{(i)}-q_{k}^{(i)}|, \quad k=1,\ldots, S_{i}-1\,.
\ee
It is now possible to define clusters, i.e., subsets of contiguous qubits whose pairwise relative distance is less than $2D$. The definition can be done recursively: let $C_{1}^{(i)}$ be the first cluster, then $q_{1}^{(i)}\in C_{1}^{(i)}$; if $l_{1}^{(i)}\le 2D$ then $q_{2}^{(i)}\in C_{1}^{(i)}$, otherwise $q_{2}^{(i)}\in C_{2}^{(i)}$. This procedure defines $C_{1}^{(i)},\ldots, C_{L_i}^{(i)}$ clusters of qubits for any $i=1,\ldots,N$, such that $1\le L_{i}\le S_i$. Note that, we can write the operator $O$ as
$O=\sum_{i}c_i\bigotimes_{\alpha=1}^{L_i}O_{\alpha}^{(i)}$, with $\supp(O_{\alpha}^{(i)})= C_{\alpha}^{(i)}$.
Consequently, we can rewrite the cost function by expanding $O=\sum_i c_i P_i=\sum_{i}c_i(\bigotimes_{\alpha}O_{\alpha}^{(i)})$:
\be
f(\vec{\theta})=\sum_{i}c_{i}\tr[\st{\psi}U^{\dag}(\vec{\theta})(\bigotimes_{\alpha}O_{\alpha}^{(i)})U(\vec{\theta})]\,.
\ee
It is possible to evaluate the distance between $f(\vec{\theta})$ and $f_{trv}$, that can be rewritten as:
\be
f_{trv}= 2^{-n}\sum_{i}c_{i}\tr[U^{\dag}(\vec{\theta})(\bigotimes_{\alpha}O_{\alpha}^{(i)})U(\vec{\theta})]\,.
\ee 
Defining $\Lambda_i\coloneqq \supp(U^{\dag}(\vec{\theta})(\bigotimes_{\alpha}O_{\alpha}^{(i)})U(\vec{\theta}))$, the distance between $f(\vec{\theta})$ and $f_{trv}$ reads:
\begin{equation}
\begin{aligned}
|f(\vec{\theta})-f_{trv}|&\le \sum_{i}|c_i||\tr_{\Lambda_i}[\psi_{\Lambda_i}U^{\dag}(\vec{\theta})(\bigotimes_{\alpha}O_{\alpha}^{(i)})U(\vec{\theta})]-2^{-|\Lambda_i|}\tr_{\Lambda_i}[U^{\dag}(\vec{\theta})(\bigotimes_{\alpha}O_{\alpha}^{(i)})U(\vec{\theta})]|\\
&\le\sum_{i}|c_{i}|\norm{O_{\alpha}^{(i)}}_{\infty}\norm{\psi_{\Lambda_i}-\frac{\mathds{1}_{\Lambda_i}}{2^{|\Lambda_i|}}}_{1}\equiv\sum_{i}|c_{i}|\mathcal{I}_{\Lambda_i}(\psi)\,,
\end{aligned}
\end{equation}
where $\psi_{\Lambda}\equiv \tr_{\bar{\Lambda}}\st{\psi}$. In the first inequality, we use the triangle inequality of absolute value, while in the second one, we used $|\tr[AB]|\le \norm{A}_{\infty}\norm{B}_{1}$, and the fact that unitary operators preserve the trace and any Schatten $p$-norm. Finally, we used  $|\!|\mathcal{O}_{\alpha}^{(i)}|\!|_{\infty}=1$. To complete the proof it is still necessary to bound $|\Lambda|\equiv \max_{i}|\Lambda_i|$.

In what follows we will use the following lemma and corollary, which we will prove at the end of this Appendix (see App.~\ref{App:lem}).
\begin{lemma}\label{lemma1}
Let $U(\vec{\theta})$ be a HEA with depth $D$ as in Fig.~\ref{fig:HEA}(b), and $O=\sum_{i}c_i P_i=\sum_{i}c_i\bigotimes_{\alpha}O_{\alpha}^{(i)}$ be an operator, as in Eq.~\eqref{operatordef}. Then for any $O_{\alpha}^{(i)}$:
\be
|\supp(U(\vec\theta)O_{\alpha}^{(i)}U^{\dag}(\vec\theta))|\le \sum_{\substack{k\in C_{\alpha}^{(i)}\\ l_{k}^{(i)}< 2D}} l_{k}^{(i)}+2D\,.
\ee
\end{lemma}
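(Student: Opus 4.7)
The plan is a Lieb--Robinson-style light-cone argument carried out in the Heisenberg picture. The first step is to establish the atomic case: for a single-qubit Pauli $M$ supported on qubit $q$, the conjugated operator $U(\vec\theta) M U^{\dagger}(\vec\theta)$ has its support contained in the interval $[q-D,q+D]$. This is a straightforward induction on the number of brick layers, using that conjugation by a nearest-neighbor two-qubit gate $V_{j,j+1}$ can extend the support of any operator only onto sites adjacent to its previous support; since the alternating layered pattern of Fig.~\ref{fig:HEA}(b) visits each bond periodically, the cumulative spread after $D$ layers is at most $D$ sites on each side of $q$, giving $|\supp(U M U^{\dagger})| \le 2D$.

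Next I would reduce the multi-qubit case to the single-qubit one. Because $P_i$ is a Pauli string, the block $O_{\alpha}^{(i)}$ factorizes as a tensor product $\bigotimes_{k} M_{q_k}$ of single-qubit Paulis on the qubits $q_1 < q_2 < \cdots < q_{S_\alpha}$ of the cluster $C_{\alpha}^{(i)}$. Conjugation distributes across the product, so
\[
\supp\!\left(U(\vec\theta)\, O_{\alpha}^{(i)}\, U^{\dagger}(\vec\theta)\right)
\;\subseteq\; \bigcup_{k} \supp\!\left(U(\vec\theta)\, M_{q_k}\, U^{\dagger}(\vec\theta)\right)
\;\subseteq\; \bigcup_{k} [q_k - D,\, q_k + D].
\]
The cluster-defining inequality $l_k^{(i)} = q_{k+1}-q_k < 2D$ is precisely the condition that consecutive light-cones $[q_k-D,\,q_k+D]$ and $[q_{k+1}-D,\,q_{k+1}+D]$ overlap. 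The union therefore collapses to the single interval $[q_1 - D,\, q_{S_\alpha}+D]$, whose size is $q_{S_\alpha} - q_1 + 2D = \sum_{k\in C_\alpha^{(i)}} l_k^{(i)} + 2D$, which matches the advertised bound. The extremal case of a singleton cluster (empty sum) correctly recovers the single-qubit estimate $2D$.

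The only delicate point I anticipate is pinning down the numerical constant in the single-qubit light-cone, which depends on the paper's convention for what counts as one \emph{layer} of the HEA --- a single sub-layer of bricks (spreading the support by one site on one side per layer) versus a full even/odd pair of sub-layers (spreading by one site on each side). Once that convention is fixed, the rest of the proof is purely combinatorial: factor the cluster operator into commuting single-qubit pieces, bound each by its Lieb--Robinson window, and observe that the clustering rule $l_k^{(i)} < 2D$ is tailored so that the union of these windows is a single interval of the claimed width. No additional input (such as structure of the gates beyond nearest-neighbor locality, or assumptions on the parameters $\vec\theta$) should be required.
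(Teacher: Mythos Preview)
Your proposal is correct and follows essentially the same route as the paper's proof: establish the single-qubit light-cone bound by a layer-by-layer induction on the brick structure, then factor $O_{\alpha}^{(i)}$ into its single-qubit Pauli constituents, take the union of the individual light-cones, and use the cluster condition $l_k^{(i)}<2D$ to collapse the union into one interval of width $\sum_k l_k^{(i)}+2D$. The paper keeps the gate width as a parameter $m$ (specializing to $m=2$ for the stated bound) and is slightly more explicit about the per-layer growth, but the logical structure is identical; your caveat about the layer convention is exactly the only subtlety, and the paper resolves it so that the single-qubit spread after $D$ layers is $mD=2D$.
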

The following corollary descends from the above lemma and the clustering of qubits. 
\begin{corollary}\label{corollary1}
Let $U(\vec{\theta})$ be HEA with depth $D$, with $D\in \mathcal{O}(\log (n))$ and $O=\sum_{i}c_{i}P_{i}=\sum_{i}c_i\bigotimes_{\alpha}O_{\alpha}^{(i)}$. Then for any $O_{\alpha}^{(i)},O_{\alpha^\prime}^{(i)}$, with $\alpha\neq \alpha^{\prime}$, one finds that asymptotically (in $n$), 
\be
\supp(U(\vec\theta)O_{\alpha}^{(i)}U^{\dag}(\vec\theta))\cap \supp(U(\vec\theta)O_{\alpha^\prime}^{(i)}U^{\dag}(\vec\theta))=\emptyset\,.
\ee
\end{corollary}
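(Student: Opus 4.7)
\textbf{Proof proposal for Corollary~\ref{corollary1}.} The plan is to exploit the causal structure of the brick-wall HEA: each layer is made of two-qubit nearest-neighbor gates, so a single Heisenberg step grows the support of any local operator by at most one qubit on each side. Iterating this $D$ times, the Heisenberg-evolved operator $U(\vec\theta) O_{\alpha}^{(i)} U^{\dagger}(\vec\theta)$ is contained in the interval
\begin{equation}
\bigl[\min(C_{\alpha}^{(i)}) - D,\; \max(C_{\alpha}^{(i)}) + D\bigr]\,,
\end{equation}
which is the light-cone bound already underlying Lemma~\ref{lemma1}. I would first write this ``bounding interval'' statement cleanly, because it is the only ingredient needed from the circuit geometry.

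The second step is to invoke the cluster definition directly. By construction, two qubits $q_k^{(i)}$ and $q_{k+1}^{(i)}$ belong to the same cluster $C_{\alpha}^{(i)}$ iff $l_k^{(i)} \leq 2D$; consequently, if $\alpha \neq \alpha'$ then the rightmost qubit of one cluster and the leftmost qubit of the other are strictly more than $2D$ apart. Combining this separation with the per-side expansion of $D$ from the first step, the evolved bounding intervals leave a gap of at least one qubit between them, hence their supports are disjoint. The strict inequality in the clustering rule is what makes the argument work; with a non-strict inequality the two light cones would merely touch.

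Finally, the ``asymptotically in $n$'' qualifier is there to rule out the pathological boundary case in which the two intervals $[\min(C_\alpha^{(i)})-D,\max(C_\alpha^{(i)})+D]$ and $[\min(C_{\alpha'}^{(i)})-D,\max(C_{\alpha'}^{(i)})+D]$ wrap around the chain and collide on the far side, or saturate the whole $n$-qubit system. Since $|\supp(P_i)|\leq \max_i|\supp(P_i)|\in\mathcal{O}(\log n)$ is implicit in the regime considered, and $D\in\mathcal{O}(\log n)$, the total width $2D\cdot \max_i|\supp(P_i)|$ is $o(n)$, and the caveat is harmless for all sufficiently large $n$.

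The main obstacle I anticipate is purely bookkeeping: making the light-cone bound and the cluster separation interact cleanly at the level of \emph{sets of qubit indices} rather than intervals, since $\supp(U(\vec\theta)O_\alpha^{(i)}U^\dagger(\vec\theta))$ need not be a contiguous set. The resolution is to upper-bound each evolved support by its convex hull (an interval), prove disjointness of the hulls via the cluster gap, and conclude disjointness of the actual supports as subsets of the hulls. Everything else is a direct consequence of Lemma~\ref{lemma1} and the recursive definition of the clusters $C_{\alpha}^{(i)}$.
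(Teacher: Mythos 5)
Your proposal is correct and follows essentially the same route as the paper's own proof: bound each evolved support by its light-cone interval (growth of at most $D$ qubits per side over $D$ brick-wall layers), note that distinct clusters are by construction separated by strictly more than $2D$, and conclude disjointness, with the asymptotic caveat covering the wrap-around/saturation issue at small $n$. If anything, your per-side bookkeeping and the explicit reduction of the (possibly non-contiguous) evolved supports to their bounding intervals are slightly more careful than the paper's one-line argument, which loosely says the support ``can increase no more than $2D$ away from any qubit'' while in fact relying on the same total-width-$2D$ light cone.
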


 Thanks to Corollary~\ref{corollary1}, we know that $\supp(U^{\dag}(\vec\theta)O_{\alpha}U(\vec\theta))\cap \supp(U^{\dag}(\vec\theta)O_{\alpha^{\prime}}U(\vec\theta))=\emptyset$, and thus:
\be
|\Lambda|=\max_i\sum_{\alpha}|\supp(U^{\dag}(\vec\theta)O_{\alpha}^{(i)}U(\vec\theta))|
\ee
then, from Lemma~\ref{lemma1}:
\begin{align}
|\Lambda|&\le\max_i \sum_{\alpha}\sum_{\substack{k\in C_{\alpha}^{(i)}\\ l_{k}<mD}}l_{k}^{(i)}+mD \le\max_i \sum_{\alpha}(|C_{\alpha}^{(i)}|-1)mD+mD\,,
\end{align}
where the second inequality follows by assuming maximum distance between qubits in the cluster. Clearly $\sum_{\alpha}|C_{\alpha}^{(i)}|=|\supp(P_{i})|$ by definition, and $\sum_{\alpha} 1=L_i\ge 1$, thus
 $mD(1-\sum_{\alpha} 1)\le 0$, and we obtain the following
 \be
 |\Lambda|\le mD\max_{i}|\supp(P_{i})|\,.
 \ee

\subsection{Proof of Corollary~\ref{cor2}}
By theorem~\ref{th1}, we have
\be
|f(\vec{\theta})-f_{trv}|\le\norm{O}_{\infty} (3n)^{\max_{i}\supp(P_i)}\max_{i}\mathcal{I}_{\Lambda_i}(\psi)\,,
\ee
if the information is scrambled, according to Definition~\ref{def-scrambling}, we have that, since $\max_{i}|\Lambda_i|\in\mathcal{O}(\log (n))$, then $\max_{i}\mathcal{I}_{\Lambda_i}(\psi)\in\mathcal{O}(2^{-cn})$ for some $c>0$. Then, if $\max_{i}|\supp(P_i)|\in\mathcal{O}(\log (n))$,  there exists a constant $c^{\prime}$ such that $n^{\max_i\supp(P_i)}\in \mathcal{O} (n^{c^{\prime}\log (n)})$, and therefore:
\be
(3n)^{\max_i\supp(P_i)}\in\mathcal{O} (n^{2c^{\prime}\log (n)})\, ,
\ee
where we crudely upperbounded $3n<n^2$ (for $n>3$). Now, we have that $n^{2c^{\prime}\log (n)}=2^{2c^{\prime}\log^2n}\in\mathcal{O}(2^{kn})$ for any $k\in\Omega(1)$. Choosing $k=c/2$ for simplicity, we finally reach the final bound:
\be
|f(\vec{\theta})-f_{trv}|\in\mathcal{O}(2^{-cn/2})\,.
\ee
where we considered $\norm{O}_{\infty}\in\Omega(1)$.

\subsection{Proof of Corollary~\ref{application2}}
The proof of Corollary~\ref{application2} follows from the proof in the work of Popescu et al.~\cite{popescu2006entanglement}. Here we report it for completeness.

Let $\psi_{\Lambda}=\tr_{\bar\Lambda}[\st{\psi}]$, consider a complete set of (Hermitian) observables $O_{\Lambda}^{(i)}$ on $\Lambda$ with $\norm{O_{\Lambda}^{(i)}}_\infty=1$, and decompose the state on the complete set of observables $O_{\Lambda}^{(i)}$:
\be
\psi_{\Lambda}=\frac{1}{d_{\Lambda}}\sum_{i}\tr[\psi_{\Lambda}O_{\Lambda}^{(i)}]O_{\Lambda}^{(i)}\,.\label{eq:psi_lambda_expansion}
\ee
Consider the $1$-norm distance between $\psi_{\Lambda}$ and the completely mixed state $\mathds{1}_{\Lambda}d_{\Lambda}^{-1}$:
\begin{equation}
\begin{aligned}
\mathcal{I}_{\Lambda}(\psi)&\le \sqrt{d_{\Lambda}}\norm{\psi_{\Lambda}-\frac{\mathds{1}_{\Lambda}}{d_{\Lambda}}}_2\le\sqrt{\sum_{i}\left(\tr\left[\psi_{\Lambda}O_{\Lambda}^{(i)}\right]-\frac{\tr\left[O_{\Lambda}^{(i)}\right]}{d_{\Lambda}}\right)^2}\\
&\le d_{\Lambda}\max_{i}\left|\tr[\psi_{\Lambda}O^{(i)}_{\Lambda}]-\frac{\tr\left[O_{\Lambda}^{(i)}\right]}{d_{\Lambda}}\right|\,.
\end{aligned}
\end{equation}
In the first inequality, we have used the norm equivalence, while in the second we have expanded $\psi_{\Lambda}$ as in Eq.~\eqref{eq:psi_lambda_expansion}. The third inequality follows by simply taking the maximum over $i$ in the summation.

Thus:
\begin{align}
&\operatorname{Pr}\left(\norm{\psi_{\Lambda}(\psi)-\mathds{1}_{\Lambda}d_{\Lambda}^{-1}}_1\ge d_{\Lambda}\epsilon\right)\le\operatorname{Pr}\left(\max_{i}\left|\tr\left[\psi_{\Lambda}O^{(i)}_{\Lambda}\right]-\frac{\tr\left[O_{\Lambda}^{(i)}\right]}{d_{\Lambda}}\right|\ge \epsilon\right)\,.
\end{align}
Since the expectation values are Lipschitz functions~\cite{popescu2006entanglement}, by exploiting Levy's lemma  one can easily prove that:
\be
\operatorname{Pr}\!\left(\!\max_{i}\left|\tr\left[\psi_{\Lambda}O^{(i)}_{\Lambda}\right]-\frac{\tr\left[O_{\Lambda}^{(i)}\right]}{d_{\Lambda}}\right|\ge \epsilon\!\right)\!\le 2d_{\Lambda}^2e^{-\frac{C2^n\epsilon^2}{2}}\nonumber\,,
\ee
where $C=(18\pi^3)^{-1}$. By choosing $\epsilon=2^{-1/3n}$, one gets:
\be
\operatorname{Pr}(\mathcal{I}_{\Lambda}<2^{|\Lambda|-1/3n})\ge1-2e^{-C2^{n/3-1}+2|\Lambda|\ln 2}\,.
\ee
To conclude, in virtue of Theorem~\ref{th1}, we have:
\be
|f(\vec{\theta})-f_{trv}|\le\norm{O}_{\infty} n^{\max_{i}\supp(P_i)}\max_{i}\mathcal{I}_{\Lambda_i}(\psi)\,.
\ee
Thanks to the above equation, we can bound $\mathcal{I}_{\Lambda_i}<2^{|\Lambda_i|}2^{-n/3}$ with overwhelming probability. Thus, $\max_{i}\mathcal{I}_{\Lambda_i}<2^{|\Lambda|}2^{-n/3}$, for $\Lambda$ being such that $|\Lambda|=\max_{i}|\Lambda_i|$. Thus, we have $|f(\vec{\theta})-f_{trv}|\le\norm{O}_{\infty} (3n)^{\max_{i}\supp(P_i)}2^{\max_{i}\supp(P_i)+mD}2^{-n/3}$. Now, given $\max_{i}\supp(P_i)\in\mathcal{O}(\log (n))$, and $D\in\mathcal{O}(\log (n))$, it means that there exists two constants $c,c^{\prime}$ such that 
\be
|f(\vec{\theta})-f_{trv}|\le\norm{O}_{\infty} n^{c+c^{\prime}\log (n)}2^{-n/3}\,.
\ee
To conclude, we can loosely write that for any constant $k$ we have $n^{c+c^{\prime}\log (n)}\in\mathcal{O}(2^{n/k})$. Choosing $k=6$ for simplicity, we finally reach the final bound:
\be
|f(\vec{\theta})-f_{trv}|\in\mathcal{O}(2^{-n/6})\,,
\ee
holding with probability 
\be
\operatorname{Pr}(|f(\vec{\theta})-f_{trv}|\in\mathcal{O}(2^{-n/6}))\ge 1-e^{-C2^{n/3}+(2n+1)\ln 2}\,,
\ee
where we bound $|\Lambda|<n$.
\subsection{Proof of Corollary~\ref{app2copies}}
Now suppose one has as input $k$ copies of a Haar random state $\ket{\psi}$ on $n$ qubits, denote it as:
\be
\ket{\Psi^{(k)}}=\ket{\psi}^{\otimes k}\,,
\ee
and denote $d=2^{n}$ the dimension of the Hilbert space in which $\ket{\psi}$ is living. Let us decompose its reduced density matrix $\Psi^{(k)}_{\Lambda}\coloneqq \tr_{\bar{\Lambda}}\st{\Psi^{(k)}}$ in a Hermitian operator basis $P_i$:
\be
\Psi^{(k)}_{\Lambda}=\frac{1}{d_{\Lambda}}\sum_{i}\tr[\Psi^{(k)}_{\Lambda}P_{i}]P_{i}\,.
\ee
Let us prove that each expectation value on $\ket{\Psi^{(k)}}$ is a Lipschitz function with respect to $\ket{\psi}$. Given a second state $\tilde{\Psi}^{(k)}_{\Lambda}$, we have
\begin{align}
|\tr[\Psi^{(k)}_{\Lambda}P_{i}]-\tr[\tilde{\Psi}^{(k)}_{\Lambda}P_{i}]|&\le\norm{\st{\psi}^{\otimes k}-\st{\tilde{\psi}}^{\otimes k}}_1 \nonumber\\
&\le k\norm{\st{\psi}-\st{\tilde{\psi}}}_{1}\nonumber\\
&\le2k\sqrt{1-|\braket{\psi|\tilde{\psi}}|^2}\nonumber\\
&\le2k \norm{\ket{\psi}-\ket{\tilde{\psi}}}\,,
\end{align}
in the first inequality we used the fact that $\norm{P_{i}}_{\infty}=1$. In the second line, we used $k$ times the following trick. Denote $\psi=\st{\psi}$:
\begin{equation}
\begin{aligned}
\norm{\psi^{\otimes k}-\tilde{\psi}^{\otimes k}}_1&=\norm{\psi^{\otimes k}-\psi\otimes\tilde{\psi}^{\otimes k-1}+\psi\otimes\tilde{\psi}^{\otimes k-1}-\tilde{\psi}^{\otimes k}}_1\\
&\le \norm{\psi^{\otimes k-1}-\tilde{\psi}^{\otimes k-1}}_1+\norm{\psi-\tilde{\psi}}_1\,,
\end{aligned}
\end{equation}
and we used $\norm{\psi^{\otimes k}}_{1}=1$ for any $k$. Thanks to the fact that $\tr[\Psi_{\Lambda}^{(k)}P_{i}]$ is Lipschitz, and denoting as $\overline{\tr\left[\Psi_{\Lambda}^{(k)}P_{i}\right]}$ the Haar average over the input state,  we know that:
\be
\operatorname{Pr}\left(\left|\tr\left[\Psi_{\Lambda}^{(k)}P_{i}\right]-\overline{\tr\left[\Psi_{\Lambda}^{(k)}P_{i}\right]}\right|\ge \epsilon\right)\le e^{-C2^{n}\epsilon^2/2k^2}\,,
\ee
where $C=(18\pi^3)^{-1}$. We need to bound the probability that $\mathcal{I}_{\Lambda}(\psi)\ge \epsilon$. Using the same trick as in Corollary~\ref{application2}, we can write:
\begin{equation}
\operatorname{Pr}\left(\mathcal{I}_{\Lambda}(\Psi^{(k)})\ge d_{\Lambda}\epsilon\right)\le\operatorname{Pr}\left(\max_{i}\left|\tr\left[\Psi^{(k)}_{\Lambda}O^{(i)}_{\Lambda}\right]-\frac{\tr\left[O_{\Lambda}^{(i)}\right]}{d_{\Lambda}}\right|\ge \epsilon\right)\nonumber\,.
\end{equation}
 Note that the following inequality holds
 \begin{equation}
\begin{aligned}
\left|\tr\left[\Psi^{(k)}_{\Lambda}O^{(i)}_{\Lambda}\right]-\frac{\tr\left[O_{\Lambda}^{(i)}\right]}{d_{\Lambda}}\right|\le  
\left|\tr\left[\Psi^{(k)}_{\Lambda}O^{(i)}_{\Lambda}\right]-\overline{\tr\left[\Psi^{(k)}_{\Lambda}O^{(i)}_{\Lambda}\right]}\right|+\left|\overline{\tr\left[\Psi^{(k)}_{\Lambda}O^{(i)}_{\Lambda}\right]}-\frac{\tr\left[O^{(i)}_{\Lambda}\right]}{d_{\Lambda}}\right|\,,
\end{aligned}
\end{equation}
then
\begin{equation}
\begin{aligned}
\left|\overline{\tr\left[\Psi^{(k)}_{\Lambda}O^{(i)}_{\Lambda}\right]}-\frac{\tr\left[O^{(i)}_{\Lambda}\right]}{d_{\Lambda}}\right|&=\frac{\left|\tr\left[\tr_{\bar{\Lambda}}\Pi_{\sym}O^{(i)}_{\Lambda}\right]-\frac{\tr\left[O^{(i)}_{\Lambda}\right]}{d_{\Lambda}}\right|}{D_{\sym}}\nonumber\\&\le \norm{\frac{\tr_{\bar{\Lambda}}[\Pi_{\sym}^{(k)}]}{D_{\sym}^{(k)}}-\frac{\mathds{1}_{\Lambda}}{d_{\Lambda}}}\,,
\end{aligned}
\end{equation}
where the last inequality follows from Holder's inequality. 
Here we have also used the known relation~\cite{harrow2013church}
\be
\frac{\Pi_{\sym}^{(k)}}{D^{(k)}_{\sym}}=\int\de\psi\st{\psi}^{\otimes k}=\frac{1}{D_{\sym}^{(k)}}\sum_{\pi}T_{\pi}\,,
\ee
where $T_{\pi}$ are permutation operators between the copies of $\ket{\psi}$. Let us compute $\tr_{\bar{\Lambda}}[\Pi_{\sym}^{(k)}]/D_{\sym}^{(k)}$, for the simplified case of $k=2$. For $k=2$ we have $\Pi_{\sym}^{(k)}/D_{\sym}^{(k)}=\frac{\mathds{1}+T}{d(d+1)}$, where $T$ denotes the SWAP operator. Then:
\be
\tr_{\bar{\Lambda}}\left[\frac{\mathds{1}+T}{d(d+1)}\right]=\frac{d_{\bar{\Lambda}}}{d(d+1)}\mathds{1}_\Lambda+\frac{\tr_{\bar{\Lambda}}[T]}{d(d+1)}\,.
\ee
To compute the partial trace, let us define $\Lambda_{12}\coloneqq \{(i,n+i)\,|\, i\in\Lambda,\, n+i\in\Lambda\}$, and $\Lambda_{12}^{c}\coloneqq \{(i,n+i)\,|\, i\in\bar{\Lambda},\, n+i\in\bar{\Lambda}\}$, then $\Lambda_{1}\coloneqq \{(i,n+i)\,|\, i\in\Lambda,\, n+i\in\bar{\Lambda}\}$, and $\Lambda_{2}\coloneqq \{(i,n+i)\,|\, i\in\bar{\Lambda},\, n+i\in{\Lambda}\}$. Note that $\Lambda_{12}\cap\Lambda_{12}^{c}=\emptyset$,$\Lambda_{12}\cap\Lambda_{1}=\emptyset$ etc., but $\Lambda_{12}\cup\Lambda_{12}^{c}\cup\Lambda_{1}\cup\Lambda_2=2n$. We can thus write the swap $T$ as:
\be
T=T_{\Lambda_{12}}T_{\Lambda_{12}^{c}}T_{\Lambda_{1}}T_{\Lambda_2}\,,
\ee
the partial trace thus can be written as:
\be
\tr_{\bar{\Lambda}}[T]=T_{\Lambda_{12}}\tr_{12}[T_{\Lambda_{12}^{c}}]\tr_{2}[T_{\Lambda_1}]\tr_{1}[T_{\Lambda_2}]\,,
\ee
where the labels mean, trace over all the pair of qubits $\tr_{12}$, trace over the first qubits $\tr_1$, and trace over the second qubits $\tr_{2}$. We have $\tr_{12}[T_{\Lambda_{12}^{c}}]=2^{|\Lambda_{12}^{c}|}<d_{\bar{\Lambda}}$, $\tr_{1}[T_{\Lambda_2}]=\mathds{1}_{2}$, and $\tr_{2}[T_{\Lambda_1}]=\mathds{1}_1$. At the end, we have:
\begin{equation}
\begin{aligned}
\norm{\tr_{\bar{\Lambda}}[\frac{\mathds{1}+T}{d(d+1)}]-\frac{\mathds{1}_{\Lambda}}{d_{\Lambda}}}_1\nonumber&=\left|\frac{d_{\bar{\Lambda}}d_{\Lambda}}{d(d+1)}-1\right|\norm{\frac{\mathds{1}_{\Lambda}}{d_{\Lambda}}}_1 +\norm{\frac{\tr_{\bar{\Lambda}}[T]}{d(d+1)}}_1\\
&\le\mathcal{O}(d^{-1})+\frac{d_{\bar{\Lambda}}d_{\Lambda}}{d(d+1)}=\mathcal{O}(d^{-1})\nonumber\,,
\end{aligned}
\end{equation}
where we used the fact that $T_{\Lambda_{12}}\tr_{2}[T_{\Lambda_1}]\tr_{1}[T_{\Lambda_2}]$ is unitary and $\norm{T_{\Lambda_{12}}\tr_{2}[T_{\Lambda_1}]\tr_{1}[T_{\Lambda_2}}_{1}]=d_{\Lambda}$, and that $d_{\Lambda}d_{\bar{\Lambda}}=d^2$, hence $|d/(d+1)-1|=\mathcal{O}(d^{-1})$. We thus find that for $\epsilon=2^{-1/3n}$, one has:
\begin{equation}
\begin{aligned}
\operatorname{Pr}\left(\mathcal{I}(\Psi^{(2)})\ge d_{\Lambda}\epsilon\right)&\le\operatorname{Pr}\left(\max_i\left|\tr\left[\Psi^{(2)}_{\Lambda}O_{\Lambda}^{(i)}\right]-\frac{\tr\left[O_{\Lambda}^{(i)}\right]}{d_{\Lambda}}\right|\ge \epsilon\right)\\&\le
\operatorname{Pr}\left(\max_i\left|\tr\left[\Psi^{(2)}O_{\Lambda}^{(i)}\right]-\overline{\tr\left[\Psi^{(2)}_{\Lambda}O_{\Lambda}^{(i)}\right]}\right|\ge \epsilon\right)\nonumber\\
&\le 2d_{\Lambda}^2e^{-C2^n\epsilon^2/4}\equiv e^{-C2^{1/3n-4}+(2n+1)\ln 2}\,,
\nonumber
\end{aligned}
\end{equation}
where we used the fact that (asymptotically):
\be
\operatorname{Pr}\left(\max_i\left|\overline{\tr\left[\Psi^{(2)}_{\Lambda}O_{\Lambda}^{(i)}\right]}-\tr[\mathcal{O}^{(i)}_{\Lambda}/d_{\Lambda}]\right|\ge 2^{-n/3}\right)=0\,.
\ee
Thus:
\be
\operatorname{Pr}\left(\mathcal{I}_{\Lambda}(\psi^{\otimes 2}\right)\le 2^{|\Lambda|-1/3n})\ge1-e^{-C2^{1/3n-4}+(2n+1)\ln 2}\,.
\ee
The calculation becomes more intricate for $k>2$. However, making the assumption that $\Lambda$ is symmetric between the copies of $\ket{\psi}$, with $|\Lambda|=k\lambda$ qubits, and setting $d_{\Lambda}\equiv 2^{k\lambda}$, we can use 
\be
\frac{\tr_{\bar{\Lambda}}[\Pi_{\sym}^{(k)}]}{D_{\sym}^{(k)}}=\frac{\mathds{1}_{\Lambda}2^{(n-\lambda) k}}{D_{\sym}}+\sum_{c(\pi)}T_{\pi}\frac{2^{(n-\lambda)(k-c)}}{D_{\sym}^{(k)}}\,,
\ee
where the sum is over the conjugacy class $c(\pi)$ of the symmetric group $S_{k}$. As evidenced by the above formula, the order of the trace depends on the conjugacy class $c$. For example for $c(\pi)=(12), (23),\ldots$ one has $c=1$, for $c(\pi)=(1234),(1243),\ldots$ one has $c=3$ and so on. Finally:
\begin{equation}
\begin{aligned}
\norm{\frac{\tr_{\bar{\Lambda}}[\Pi_{\sym}^{(k)}]}{D_{\sym}^{(k)}}-\frac{\mathds{1}_{\Lambda}}{d_{\Lambda}}}&\le \left|\frac{2^{(n-\lambda)}}{D_{\sym}^{(k)}}\right|+(k!-1)\frac{2^{(n-\lambda)(k-1)}}{D_{\sym}}\\&=\frac{1}{2^{k\lambda}}-\frac{1}{d_{\Lambda}}+\mathcal{O}(2^{-n})+\mathcal{O}((k-1)!2^{-n})\\
&=\mathcal{O}(2^{-n})\,.
\end{aligned}
\end{equation}
Thus, we have:
\begin{equation}
\operatorname{Pr}\left(\left|\tr\left[\Psi^{(k)}_{\Lambda}\right]-\frac{\tr\left[O_{\Lambda}^{(i)}\right]}{d_{\Lambda}}\right|\ge \epsilon\right)\le
\operatorname{Pr}\left(\left|\tr\left[\Psi^{(k)}_{\Lambda}\right]-\overline{\tr\left[\Psi^{(k)}_{\Lambda}O_{\Lambda}^{(i)}\right]}\right|\ge \epsilon\right)\,,
\end{equation}
provided that $\epsilon>\mathcal{O}(2^{-n})$. By applying Levy's lemma thanks to the typicality of $\tr[\Psi^{(k)}_{\Lambda}]$, choosing $\epsilon=2^{-1/3n}$ we have:
\be
\operatorname{Pr}(\mathcal{I}_{\Lambda}(\Psi^{(k)})\le 2^{-1/3n})\ge1-e^{\tilde{C}2^{n/3}/2k+\lambda}\,.
\ee

\section{Proof of Theorem~\ref{theore:anticoncentration} and Proposition~\ref{eq:prop-anticoncentration}}\label{App: Theorem 2}
\subsection{Proof of Theorem~\ref{theore:anticoncentration}}
Let us consider the following quantity
\begin{equation}
    \Delta f_{A,B}=f(\vec{\theta}_A) - f(\vec{\theta}_B)\,,
\end{equation}
where $\vec{\theta}_A$ and $\vec{\theta}_B$ are such that $\vec{\theta_A}=\vec{\theta_B}+\hat{\vec{e}}_{AB}l_{AB}$ with $l_{AB}\in\Omega(1/\poly(n))$.  Our goal will be to show that the variance of this quantity is at most polynomially vanishing. 

First, we will use the following notation $\vec{\theta}_n\equiv \vec{\theta}_A$ and $\vec{\theta}_0\equiv \vec{\theta}_B$.  Moreover, it is useful to further divide the path in  the parameter space by a sequence of single parameter changes so that
\begin{equation}
    \vec{\theta}_0\rightarrow \vec{\theta}_1\rightarrow\ldots\rightarrow \vec{\theta}_m\,,
\end{equation}
where $m\in\mathcal{O}(\poly(n))$. Here we have defined 
\begin{equation}
    \vec{\theta}_{i+1}=\vec{\theta}_0+\sum_{j=1}^{i} \hat{\vec{e}}_{j} l_{j}=\vec{\theta}_{i}+ \hat{\vec{e}}_{i} l_{i}\,,
\end{equation}
where $\hat{\vec{e}}_{i}$ is a vector with a single one and where at least one $l_i$ is such that $\sin^2(l_{i})\in\Omega(1/\poly(n))$. Note that we can guarantee both $m\in\mathcal{O}(\poly(n))$ and that there exists an $l_i$ such that $\sin^2(l_{i})\in\Omega(1/\poly(n))$ from the fact that we have at most a polynomial number of parameters, and that $\vec{\theta}_A$ and $\vec{\theta_B}$ are at most polynomially close. The previous allows us to note that
\begin{equation}
    \Delta f_{n,0}=\sum_{i=0}^{m-1} f(\vec{\theta}_{i+1}) - f(\vec{\theta}_{i})=\sum_{i=0}^{m-1} \Delta  f_{i+1,i}\,.\label{eq:deltafs}
\end{equation}
where we defined $\Delta f_{i+1,i}\coloneqq f(\vec{\theta}_{i+1})-f(\vec{\theta}_i)$. Taking the variance of Eq.~\eqref{eq:deltafs} we have
\begin{equation}
\begin{aligned}
    \Var_{\vec{\theta}_0}[\Delta f_{n,0}]&=\Var_{\vec{\theta}_0}[\sum_{i=0}^{m-1} \Delta  f_{i+1,i}] \\
    &= \sum_{i,j=0}^{n-1} \Cov[ \Delta  f_{i+1,i},\Delta  f_{j+1,j}]\,.\label{eq:lower}
\end{aligned}
\end{equation}
Here we recall that $\Cov[ \Delta  f_{i+1,i},\Delta  f_{i+1,i}]=\Var_{\vec{\theta}_0}[\Delta  f_{i+1,i}]$. 

In what follows we will use the following lemmas, which we will prove at the end of this appendix( See App.~\ref{App:lem}).

\begin{lemma}\label{lem:p11}
Let $U(\vec{\theta})$ be a shallow HEA with depth $D\in\mathcal{O}(\log(n))$, and $O=\sum_{i}c_i\bigotimes_{\alpha}O_{\alpha}^{(i)}$ with $O_{\alpha}^{(i)}$ being traceless operators having support on at most two neighboring qubits. Then, 
\begin{equation}
    \mathbb{E}_{\vec{\theta}_0}[f(\vec{\theta}_{i})]=0\,,\quad \forall i\,.
\end{equation}
\end{lemma}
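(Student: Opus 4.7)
The strategy is to exploit the $1$-design property implied by the $2$-design assumption on each two-qubit block: for any operator $M$ and $1$-design $V$ on two qubits, $\mathbb{E}_V[V^\dagger M V] = (\Tr M/4)\,\bbbone$, which vanishes whenever $M$ is traceless. The plan is to identify, for each term in the Pauli expansion of $O$, a single two-qubit gate whose Haar average annihilates that term's contribution.

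First, since $\vec{\theta}_i = \vec{\theta}_0 + \sum_{j<i}\hat{\vec{e}}_j l_j$ is a deterministic shift of $\vec{\theta}_0$ and each HEA parameter lives on a periodic domain, $\mathbb{E}_{\vec{\theta}_0}[f(\vec{\theta}_i)] = \mathbb{E}_{\vec{\theta}}[f(\vec{\theta})]$, so it suffices to prove the latter vanishes. Using $O=\sum_i c_i\bigotimes_\alpha O_\alpha^{(i)}$ and writing $f$ in the Heisenberg picture gives
\begin{equation}
\mathbb{E}_{\vec{\theta}}[f(\vec{\theta})] = \sum_i c_i\,\bra{\psi}\,\mathbb{E}_{\vec{\theta}}\bigl[\,\bigotimes_\alpha U^\dagger(\vec{\theta})\,O_\alpha^{(i)}\,U(\vec{\theta})\,\bigr]\,\ket{\psi}.
\end{equation}
By Corollary~\ref{corollary1}, the supports $\supp(U^\dagger(\vec{\theta})\,O_\alpha^{(i)}\,U(\vec{\theta}))$ are pairwise disjoint across $\alpha$, so the parametrized gates entering each Heisenberg-evolved factor are disjoint, the inner expectation factorizes over $\alpha$, and the task reduces to showing that $\mathbb{E}_{\vec{\theta}}[U^\dagger O_\alpha^{(i)} U] = 0$ for a single traceless Pauli $O_\alpha^{(i)}$ supported on at most two neighboring qubits.

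The key step is to integrate out one carefully chosen gate first. Decomposing $U = U_1 \cdots U_D$ by layers, the gates of the last layer $U_D$ are the first to act on $O_\alpha^{(i)}$ in the Heisenberg picture. Suppose $O_\alpha^{(i)}$ is supported on neighboring qubits $(k,k+1)$; the brick-like alternation of the HEA leaves two cases for $U_D$: either it contains a gate $W$ supported exactly on $(k,k+1)$, or it contains gates $W_L$, $W_R$ supported on $(k-1,k)$ and $(k+1,k+2)$. In the first case, $\mathbb{E}_W[W^\dagger O_\alpha^{(i)} W] = 0$ directly. In the second, decompose $O_\alpha^{(i)} = P_k \otimes P_{k+1}$; tracelessness of $O_\alpha^{(i)}$ forces at least one of $P_k$, $P_{k+1}$ to be non-identity, so one of the factors $W_L^\dagger(\bbbone_{k-1}\otimes P_k)W_L$ or $W_R^\dagger(P_{k+1}\otimes \bbbone_{k+2})W_R$ contains a traceless two-qubit operator and averages to zero under the $1$-design on that gate. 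The single-qubit support case is handled identically with either adjacent gate in $U_D$. By Fubini, this single-gate average commutes with the integration over the remaining parameters, so the whole expectation vanishes.

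The main obstacle is the case analysis above and verifying that the gate selected for the first average is parametrized by its own independent angles, so that its Haar average is well-defined within the joint $\vec{\theta}$-integration. Both are handled by the HEA structure, in which each two-qubit block is independently parametrized and forms a $2$-design. Combining these observations yields $\mathbb{E}_{\vec{\theta}_0}[f(\vec{\theta}_i)] = 0$ for every $i$.
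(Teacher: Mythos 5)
Your proof is correct, and it reaches the conclusion by a genuinely more economical route than the paper's. The paper's proof integrates over \emph{every} two-qubit gate in the circuit sequentially, using the first-moment Weingarten formula to convert each conjugation into a partial trace, and only invokes tracelessness at the very end through the global factor $\Tr[O]=0$. You instead exploit the locality of the measurement: after splitting $O$ into its cluster factors (the appeal to the disjointness of the evolved supports to factorize the expectation is convenient but not strictly necessary, since by linearity one can argue Pauli string by Pauli string, and the restriction of the string to any single gate's support is traceless whenever it contains a non-identity factor), you average only the gate or pair of gates in the layer adjacent to $O_{\alpha}^{(i)}$ and annihilate the term on the spot, because $\mathbb{E}_{V}[V^{\dagger}MV]=\tfrac{1}{4}\Tr[M]\,\bbbone$ vanishes for traceless $M$. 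This buys a shorter argument that uses the $1$-design property of only one or two gates rather than the whole circuit; the paper's sequential peeling, by contrast, sets up the tensor-network bookkeeping that it then reuses for Lemma~\ref{lem:p12}. Two cosmetic points to tidy up: with the product written as $U=U_{1}\cdots U_{D}$ it is $U_{1}$, not $U_{D}$, that is adjacent to $O$ in $U^{\dagger}OU$, so state the convention explicitly; and if a qubit in $\supp(O_{\alpha}^{(i)})$ carrying the non-identity factor happens to idle in the adjacent layer (a boundary effect of the brickwork), one simply passes to the next layer back, which changes nothing in the argument.
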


\begin{lemma}\label{lem:p12}
Let $U(\vec{\theta})$ be a shallow HEA with depth $D\in\mathcal{O}(\log(n))$, and $O=\sum_{i}c_i\bigotimes_{\alpha}O_{\alpha}^{(i)}$ with $O_{\alpha}^{(i)}$ being traceless operators having support on at most two neighboring qubits. Then, 
\begin{equation}
    \mathbb{E}_{\vec{\theta}_0}[\Delta f_{i+1,i}\Delta f_{j+1,j}]=0\,,\quad \forall i\neq j\,.
\end{equation}
\end{lemma}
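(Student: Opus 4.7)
The key intuition is that, although $\Delta f_{i+1,i}$ and $\Delta f_{j+1,j}$ depend on overlapping sets of the random parameters in $\vec{\theta}_0$, the $2$-design property of each two-qubit gate, together with translation invariance of the uniform parameter measure, makes the two increments effectively uncorrelated when $i\neq j$. My plan is to first expand
\begin{equation*}
\mathbb{E}_{\vec{\theta}_0}[\Delta f_{i+1,i}\Delta f_{j+1,j}]=C_{i+1,j+1}-C_{i+1,j}-C_{i,j+1}+C_{i,j},\quad C_{a,b}:=\mathbb{E}_{\vec{\theta}_0}[f(\vec{\theta}_a)f(\vec{\theta}_b)],
\end{equation*}
and then, using that the uniform measure on the parameter torus is translation invariant, rewrite each $C_{a,b}=G(\vec{d}_{a,b})$ as a common function of the deterministic relative shift $\vec{d}_{a,b}:=\vec{\theta}_b-\vec{\theta}_a$. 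WLOG taking $i<j$ so that the two shifted parameters $\theta_{\nu_i},\theta_{\nu_j}$ are distinct, the four $\vec{d}_{a,b}$'s are the corners of a rectangle in the $(\hat{\vec{e}}_{\nu_i},\hat{\vec{e}}_{\nu_j})$ plane with sides $l_i\hat{\vec{e}}_{\nu_i}$ and $l_j\hat{\vec{e}}_{\nu_j}$, so the signed sum above is exactly the discrete mixed increment of $G$ along these two directions. It therefore suffices to show that $G(\vec{d})$ is independent of its $\nu_j$-coordinate.

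The core step carries out this independence by first conditioning on all parameters except those of the two-qubit gate $V_{k_j}$ that carries $\theta_{\nu_j}$, and then averaging over the parameters of $V_{k_j}$. Writing $U(\vec{\theta})=U_RV_{k_j}U_L$, each factor of $f$ becomes a trace linear in $V_{k_j}\otimes V_{k_j}^*$, with coefficients determined by the other gates, $\rho$, and $O=\sum_pc_pP_p$; shifting $\theta_{\nu_j}$ by $l_j$ factorizes as a left multiplication $V_{k_j}(\vec{\beta}+l_j\hat{\vec{e}}_{\nu_j})=W_{l_j}V_{k_j}(\vec{\beta})$ by a fixed two-qubit unitary $W_{l_j}$. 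Invoking the $2$-design hypothesis on $V_{k_j}$ to replace the parameter average by a Haar integral and then expanding via the Weingarten second-moment formula, the conditional correlator becomes a linear combination of products of traces of the ``state'' and ``measurement'' operators on the support of $V_{k_j}$. Every $W_{l_j}$-dependent contribution either collapses by cyclicity to a $W_{l_j}$-independent trace, or picks up a factor $\Tr[P_p]=0$ from the tracelessness of a local $P_p$ on the support of $V_{k_j}$, so the surviving expression is $W_{l_j}$-independent; equivalently $\partial_{d_{\nu_j}}G(\vec{d})\equiv 0$, and the mixed increment vanishes.

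The main obstacle is the Weingarten bookkeeping in the second step: gate $V_{k_j}$ appears in \emph{both} copies of $f$, so one has to write out the joint second Haar moment of the form $\mathbb{E}[V\otimes V^*\otimes W_{l_j}V\otimes W_{l_j}^*V^*]$ and contract it against $\rho^{\otimes 2}$ and $O^{\otimes 2}$, verifying that each $W_{l_j}$-bearing contraction is annihilated either by cyclicity (when $W_{l_j}^\dagger\cdots W_{l_j}$ surrounds a single block) or by the traceless-Pauli hypothesis (when $W_{l_j}$ is separated from $W_{l_j}^\dagger$ by an operator that, after taking the partial trace onto the support of $V_{k_j}$, is a traceless local Pauli). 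Here the assumption that each $O_\alpha^{(i)}$ is traceless and supported on at most two neighboring qubits is exactly what delivers the latter cancellation, since the local traceless structure fits cleanly inside the support of $V_{k_j}$; without it a surviving identity piece in the Haar twirl would spoil the rectangular cancellation. Once this single-gate independence identity is established, the mixed-increment cancellation is automatic and Lemma~\ref{lem:p12} follows.
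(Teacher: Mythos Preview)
Your reduction to a discrete mixed increment of the stationary correlation $G(\vec d):=\mathbb{E}_{\vec\theta_0}[f(\vec\theta_0)f(\vec\theta_0+\vec d)]$ is fine, but the central claim---that $G$ is independent of its $\nu_j$-coordinate---is too strong and in fact false. Observe that your Haar argument over the single gate $V_{k_j}$ never uses that $\nu_j$ differs from $\nu_i$; if it worked it would apply to every coordinate and force $G$ to be constant, which would also kill the diagonal term $\Var[\Delta f_{i+1,i}]=2\big(G(\vec 0)-G(l_i\hat{\vec e}_{\nu_i})\big)$. Concretely, after Weingarten-averaging over $V_{k_j}$ the surviving $W_{l_j}$-dependent contribution is proportional to $\Tr_S\!\big[(\widetilde O^{(a)})_{pq}\,W_{l_j}\,(\widetilde O^{(b)})_{p'q'}\,W_{l_j}^{\dag}\big]$, where $(\widetilde O)_{pq}$ are the partial-trace blocks of $U_R^{\dag}OU_R$ on the support $S$ of $V_{k_j}$. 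Neither of your two mechanisms removes this term: cyclicity cannot collapse it because $W_{l_j}$ and $W_{l_j}^{\dag}$ are separated by two distinct operators, and the blocks $(\widetilde O)_{pq}$ are not traceless in general (only $\Tr[O]=0$ globally, not $\Tr_S[U_R^{\dag}OU_R]=0$). Even in the best case where $O$ is a traceless Pauli supported exactly on $S$ and $U_R$ acts trivially there, one obtains $\Tr_S[O_S W O_S W^{\dag}]$, an OTOC-type quantity that genuinely depends on $W$ (e.g.\ $O_S=Z_1Z_2$, $W=e^{-isX_1X_2}$ gives $4\cos(2s)$).

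What must vanish is only the \emph{mixed} second increment of $G$ in the $\nu_i$ and $\nu_j$ directions, not a first increment. The paper obtains this by first using the parameter-shift rule to write $\Delta f_{i+1,i}=2\sin(l_i/2)\,\partial_i f(\hat{\vec\theta}_i)$, so that the product $\partial_i f\cdot\partial_j f$ carries a commutator $[H_i,\,\cdot\,]$ in \emph{one} factor only (since $j\neq i$). Splitting the gate as $V_i=V_A^i\,e^{-i\theta_i H_i}\,V_B^i$ into independent $2$-design sub-blocks and integrating over $V_A^i$ then reduces everything to \emph{first}-moment expressions carrying a bare factor $\Tr[H_i]=0$. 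It is precisely this asymmetry---one derivative at $\theta_i$, the other at $\theta_j\neq\theta_i$---that enables a first-moment cancellation; your second-moment computation over $V_{k_j}$ cannot see it and therefore cannot distinguish the off-diagonal case from the diagonal one.
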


\begin{lemma}\label{lem:p13}
Let $U(\vec{\theta})$ be a shallow HEA with depth $D\in\mathcal{O}(\log(n))$ where each local two-qubit gates forms a $2$-design on two qubits, and let $O=\sum_{i}c_i\bigotimes_{\alpha}O_{\alpha}^{(i)}$ be the measurement composed of, at most, polynomially many traceless Pauli operators  $O_{\alpha}^{(i)}$ having support on at most two neighboring qubits, and where $\sum_i c_i^2\in\mathcal{O}(\poly(n))$. Then, if the input state follows an area law of entanglement, we have 
\be
\Var[\partial_{\mu}f(\vec{\theta})]\in\Omega\left(\frac{1}{\poly(n)}\right)\,.
\ee
\end{lemma}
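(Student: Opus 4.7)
The plan is to reduce the lemma to the single-Pauli gradient-variance lower bound of Cerezo \textit{et al.}~\cite{cerezo2020cost}, restated in Eqs.~\eqref{lowerboundCerezo0}--\eqref{lowerboundCerezo}, and then show that every factor on the right-hand side is $\Omega(1/\poly(n))$ under the present hypotheses. First, since $D\in\mathcal{O}(\log n)$, the depth prefactor $(2/5)^{2D}$ is itself $\Omega(1/\poly(n))$. Second, I would decompose $\partial_\mu f(\vec\theta)=\sum_i c_i\,\partial_\mu f_i(\vec\theta)$ with $f_i(\vec\theta)=\Tr[U(\vec\theta)\dya{\psi}U^\dagger(\vec\theta)P_i]$, and note that only those $P_i$'s whose reverse light-cone contains the gate of $\theta_\mu$ contribute; each such light cone has size $|\Lambda_i|\le 4D\in\mathcal{O}(\log n)$ by the argument already used in Theorem~\ref{th1}, and by assumption at least one of the $P_i$ is supported in that cone.

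Next, I would apply the Cerezo \textit{et al.} bound to each single-Pauli derivative $\partial_\mu f_i$, obtaining
\begin{equation}
\Var_{\vec\theta}[\partial_\mu f_i]\;\ge\;\Bigl(\tfrac{2}{5}\Bigr)^{2D}\frac{2}{225}\,\eta(P_i)\!\sum_{k,k'}\!\eta(\psi_{k,k'}),
\end{equation}
with $\eta(P_i)\in\Omega(1)$ because $P_i$ is a non-identity two-local Pauli. The area-law hypothesis (Definition~\ref{def: volumelaw2}) provides $\mathcal{I}_{k,k'}(\psi)\in\Omega(1/\poly(n))$ for the reduced density matrices $\psi_{k,k'}$ on the $\mathcal{O}(\log n)$ qubits inside the light-cone; combining this with the Schatten inequality $\norm{A}_2\ge \norm{A}_1/\sqrt{d_A}$ and the fact that $d_{k,k'}\le 2^{4D}\in\poly(n)$ yields $\eta(\psi_{k,k'})\in\Omega(1/\poly(n))$, and hence $\Var_{\vec\theta}[\partial_\mu f_i]\in\Omega(1/\poly(n))$ for every relevant $i$.

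To pass from the single-term variances to the full gradient, I would write
\begin{equation}
\Var_{\vec\theta}[\partial_\mu f]=\sum_i c_i^2\,\Var_{\vec\theta}[\partial_\mu f_i]+\sum_{i\neq j}c_i c_j\,\Cov_{\vec\theta}[\partial_\mu f_i,\partial_\mu f_j],
\end{equation}
and exploit the 2-design assumption on each two-qubit block: when the expectation over any block in the backward light-cone of exactly one of $P_i,P_j$ is taken first, every Weingarten contraction leaves a traceless single-block Pauli string and the cross-covariance vanishes. The surviving off-diagonal contributions come only from pairs $P_i,P_j$ whose light-cones fully overlap, and such pairs number at most $\poly(n)$ since $O$ contains at most polynomially many two-local Pauli terms and $\sum_i c_i^2\in\mathcal{O}(\poly(n))$; a Cauchy--Schwarz estimate then ensures the diagonal $\Omega(1/\poly(n))$ contribution cannot be cancelled by the off-diagonal terms.

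The main obstacle is the last step: rigorously controlling the off-diagonal covariances. The cleanest resolution is likely to evaluate $\mathbb{E}_{\vec\theta}[(\partial_\mu f)^2]$ directly via block-by-block Weingarten calculus, recognizing the result as a sum of non-negative polynomials in the $\eta(\psi_\Lambda)$'s for reductions to light-cone subsystems; this would inherit positivity and the area-law lower bound simultaneously, without having to separate diagonal from off-diagonal pieces. If such a structural identity is out of reach, the alternative is to specialise $\mu$ to lie inside a gate whose light cone intersects the support of a single $P_{i_\star}$, in which case $\partial_\mu f=c_{i_\star}\partial_\mu f_{i_\star}$ on the nose and the lemma follows immediately from the single-term bound and the polynomial lower bound on $|c_{i_\star}|^2$ guaranteed by the assumed sparsity of $O$.
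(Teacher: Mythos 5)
There is a genuine gap here, and you named it yourself: the off-diagonal covariances $\Cov_{\vec{\theta}}[\partial_\mu f_i,\partial_\mu f_j]$ are never actually controlled. The vanishing argument you sketch (a final Weingarten average over a block lying in the cone of exactly one of $P_i,P_j$ leaving a traceless string) only applies to pairs with disjoint light-cones; for a brick-like HEA of depth $D\in\mathcal{O}(\log n)$ and an observable built from two-local terms on neighboring qubits, essentially all nearby terms have fully overlapping cones, and for those the cross moments $\mathbb{E}_{\vec{\theta}}[\partial_\mu f_i\,\partial_\mu f_j]$ generically do \emph{not} vanish (this is different from Lemma~\ref{lem:p12}, which concerns derivatives with respect to \emph{different} parameters). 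Your Cauchy--Schwarz remark cannot rescue this: it bounds each surviving cross term by the geometric mean of the diagonal variances, i.e., by quantities of exactly the same order as the contribution you are trying to preserve, so cancellation is not excluded. Your first fallback is the right idea but is not executed, and in fact it is precisely how the literature result is proved: Theorem 2 of Ref.~\cite{cerezo2020cost} is stated for the \emph{full} multi-term observable $O=\sum_i c_i\tilde{O}_i$, with the lower bound of Eq.~\eqref{eq:varMaint22}, $G_n(D,\ket{\psi})\propto(1/5)^{2D}\sum_{i\in i_{\mathcal{L}}}\sum_{(k,k')}c_i^2\,\eta(\psi_{k,k'})\,\eta(\tilde{O}_i)$, a manifestly non-negative sum in which no diagonal/off-diagonal separation ever appears. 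The paper's proof simply cites this multi-term form and then runs exactly the estimates you have: $(1/5)^{2D}\in\Omega(1/\poly(n))$ from $D\in\mathcal{O}(\log n)$, $\eta(\tilde{O}_i)$ constant for two-local Paulis, and the passage from the area-law bound $\mathcal{I}_{\Lambda}(\psi)\in\Omega(1/\poly(n))$ to $\eta(\psi_{k,k'})\in\Omega(1/\poly(n))$ via norm equivalence on an $\mathcal{O}(\log n)$-qubit subsystem (the paper uses $\mathcal{I}_{\Lambda}(\psi)^2\le 2^{|\Lambda|-1}\eta(\psi_\Lambda)$ from~\cite{coles2019strong}, equivalent to your $\norm{A}_2\ge\norm{A}_1/\sqrt{d_A}$ step). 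So the fix is not to re-derive a positivity identity but to invoke the theorem at the level of $O$ rather than of each $P_i$.

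Your second fallback does not prove the lemma as stated. The lemma is consumed in the proof of Theorem~\ref{theore:anticoncentration} at the specific parameter $\theta_i$ singled out by the path decomposition (the one with $\sin^2(l_i)\in\Omega(1/\poly(n))$) and in Proposition~\ref{eq:prop-anticoncentration} for every $\theta_\nu$, so you do not get to choose $\mu$ to lie in a cone meeting a single $P_{i_\star}$. Moreover, the hypothesis $\sum_i c_i^2\in\mathcal{O}(\poly(n))$ is an \emph{upper} bound and supplies no polynomial lower bound on $|c_{i_\star}|^2$. In fairness, the paper's own proof is also tacit about coefficient mass: the bound $G_n$ is only nontrivial if some term $i\in i_{\mathcal{L}}$ in the forward light-cone carries $c_i^2\in\Omega(1/\poly(n))$ (and if no $P_i$ meets the cone of $\theta_\mu$ then $\partial_\mu f\equiv 0$ and the claim fails trivially), an assumption satisfied in the paper's applications (e.g., $O_Z=\sum_i Z_i$) but not spelled out in the lemma. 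Your fallback, however, makes this unstated assumption load-bearing rather than incidental, which is why it cannot stand in for the multi-term variance bound.
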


Let us now go back to Eq.~\eqref{eq:lower}. Note that 
\begin{equation}
\begin{aligned}
    \Cov[ \Delta  f_{i+1,i},\Delta  f_{j+1,j}]=&\mathbb{E}[\Delta  f_{i+1,i}\Delta  f_{j+1,j}]-\mathbb{E}[\Delta  f_{i+1,i}]\mathbb{E}[\Delta  f_{j+1,j}]=0
\end{aligned}
\end{equation}
where in the second line we have used Lemmas~\ref{lem:p11} and~\ref{lem:p12}. Thus, we have 
\begin{align}
    \Var_{\vec{\theta}_0}[\Delta f_{n,0}]
    &= \sum_{i=0}^{m-1} \Var[ \Delta  f_{i+1,i}]\,.
\end{align}\label{eq:lower23}
Using the  fact that $\vec{\theta}_{i+1}=\vec{\theta}_{i}+ \hat{\vec{e}}_{i} l_{i}$, and leveraging the  parameter shift-rule for computing gradients~\cite{mitarai2018quantum,schuld2019evaluating}, we then get 
\begin{equation}
    \Delta  f_{i+1,i}=f(\vec{\theta}_{i+1})-f(\vec{\theta}_{i})=2\sin\left(\frac{l_i}{2}\right)\partial_i f(\vec{\theta}_i)\,.
\end{equation}
and 
\begin{align}
    \Var_{\vec{\theta}_0}[\Delta f_{n,0}]
    &= \sum_{i=0}^{m-1}4\sin^2\left(\frac{l_i}{2}\right) \Var[ \partial_i f(\hat{\vec{\theta}}_i)]\,,
\end{align}
where we have defined  $\hat{\vec{\theta}}_{i}=\vec{\theta}_{i-1}+ \hat{\vec{e}}_{i} \frac{l_{i}}{2}$. Then, recalling that $m\in\mathcal{O}(\poly(n))$ and that there exists an $l_i$ such that $\sin^2(l_{i})\in\Omega(1/\poly(n))$, we can use Lemma~\ref{lem:p13} to find 
\begin{equation}
    \Var_{\vec{\theta}_0}[\Delta f_{n,0}]\in\Omega\left(\frac{1}{\poly(n)}\right)\,.
\end{equation}
\subsection{Proof of Proposition~\ref{eq:prop-anticoncentration}}
Here we note that in the previous section, where we have proved Theorem~\ref{theore:anticoncentration} we have shown that the absence of barren plateaus (through Lemma~\ref{lem:p13}) implies cost values anti-concentration. In this section, we prove the converse.

First, we note that  by anti-concentration we mean that for any set of parameters $\vec{\theta_B}$ and $\vec{\theta_A}=\vec{\theta_B}+\hat{\vec{e}}_{AB}l_{AB}$ with $l_{AB}\in\Omega(1/\poly(n))$ one has
\be\label{eq:anticon}
 \Var_{\vec{\theta}_B}[f(\vec{\theta}_A)-f(\vec{\theta}_B)]\in\Omega\left(\frac{1}{\poly(n)}\right)\,.
\ee 
Then, let us use the fact that  
\begin{equation}
    \Var_{\vec{\theta}}[\partial_\nu f(\vec{\theta})]=\frac{1}{4}\Var[f(\vec{\theta}^+)-f(\vec{\theta}^-)]\,,
\end{equation}
where we have used the parameter-shift rule, and where $\vec{\theta}^+=\vec{\theta}^+\pm \hat{\vec{e}}_\nu\frac{\pi}{2}$ such that $\hat{\vec{e}}_\nu$ is a unit vector with a one at the $\nu$-th entry. Then, we have that since the difference between $\vec{\theta}^+$ and $\vec{\theta}^-$ is $\mathcal{O}(1)$, we can use Eq.~\eqref{eq:anticon} to find 
\begin{equation}
    \Var_{\vec{\theta}}[\partial_\nu f(\theta)]\in\Omega\left(\frac{1}{\poly(n)}\right)\,,
\end{equation}
which completes the proof of  Proposition~\ref{eq:prop-anticoncentration}.

\section{Isospectral twirling and proof of Proposition~\ref{propgde1},Thereom~\ref{gdehamtheorem} and Proposition~\ref{propgde2}\label{App: isospec}}
In this section, we aim to prove Proposition~\ref{propgde1}, Theorem~\ref{gdehamtheorem}, Corollary~\ref{gdehamcor}, and Proposition~\ref{propgde2}.
\subsection{Isospectral twirling}
We first, review the useful notion of the Isospectral twirling introduced in Refs.~\cite{leone2020isospectral,Oliviero2020random}. Consider an Hamiltonian $H$ written in its spectral decomposition $H=\sum_{k}E_{k}\Pi_{k}$, where $\Pi_{k}$ are its eigenvectors and $E_{k}$ are its eigenvalues. Consider the time-evolution generated by $H$, i.e., $W(t)= \exp(-iHt)=\sum_{k}e^{-iE_{k}t}\Pi_{k}$. Denote $\mathbb{U}(n)$ the unitary group on $n$ qubits. Define the following ensemble of isospectral unitary evolutions:
\be
\mathcal{E}_{H}=\{G^{\dag}\expf{-iHt}G\,|\, G\in\mathbb{U}(n)\}
\ee
whose representative element is $H$. The Isospectral twirling of order $k$, denoted as $\mathcal{R}^{(2k)}(W(t))$ is the $2k$-fold Haar channel of the operator $W^{\otimes k,k}(t)\coloneqq W^{\otimes k}(t)\otimes W^{\dag\otimes k}(t)$:
\be
\mathcal{R}^{(2k)}(W(t))\coloneqq \int_{\mathbb{U}(n)}\de G\, G^{\dag\otimes 2k}W^{\otimes k,k}(t)G^{\otimes 2k}\,.
\ee
Using the Weingarten functions~\cite{weingarten78asym,collins2003moments,collins2006integration}, one can compute the isospectral twirling as:
\be
\mathcal{R}^{(2k)}(W(t))=\sum_{\pi\sigma\in S_{2k}}(\Omega^{-1})_{\pi\sigma}\tr[W^{\otimes k,k}(t)T_{\pi}]T_{\sigma}\,,
\ee
where $S_{2k}$ is the symmetric group of order $2k$, $T_{\pi}$ is the unitary representation of the permutation $\pi\in S_{2k}$, and $\Omega_{\pi\sigma}\coloneqq \tr[T_{\pi}T_{\sigma}]$. Note that $\tr[W^{\otimes k,k}(t)T_{\pi}]$ are spectral functions of the representative Hamiltonian $H$. $\pi=e$ (the identity) defines the $2k$-spectral form factor $c_{2k}(t)\equiv|\tr[W(t)]|^{2k}$:
\be
c_{2k}(t)=\left(\sum_{kl}e^{i(E_{k}-E_{l})t}\right)^{k}
\label{2kpointspectralformfactor}
\ee
which governs the behavior of many figures of merit of random isospectral Hamiltonians, as noted in~\cite{Oliviero2020random,leone2020isospectral}. While the expression for Isospectral twirling of order $k$, for $k\ge 2$ is cumbersome to report, we do recall  the expression for the Isospectral twirling for $k=1$:
\be
\mathcal{R}^{(2)}(t)=\frac{c_{2}(t)-1}{d^2-1}\mathds{1}^{\otimes 2}+\frac{d^2-c_{2}(t)}{d^{2}-1}\frac{T_{12}}{d}\,,
\label{2isospectraltwirling}
\ee
where $T_{12}$ is the swap operator between the two copies of $\mathcal{H}$, and $c_{2}(t)$ is the $2$-point spectral form factor in Eq.~\eqref{2kpointspectralformfactor}.
One can consider the isospectral twirling of a scalar function of the time evolution operator $W(t)$ (characterized by the operator of interest $O$), i.e., $F_{O}[W(t)]$, which can be written after the isospectral twirling as $\aver{F_{O}[G^{\dag}W(t)G]}_{G}\coloneqq \tr[T_{\sigma}O\mathcal{R}^{(2k)}(t)]$, where $T_{\sigma}$ is a particular permutation operator. Its value (depending on the evolution time $t$) characterizes the average behavior in the ensemble $\mathcal{E}_{H}$ of all those Hamiltonians sharing the same spectrum of $H$. Since the Isospectral twirling of a scalar function depends upon the particular choice of the spectrum of $H$, one then averages over spectra of a given ensemble of Hamiltonians $E$. Relevant examples are the Gaussian unitary ensemble $E\equiv \gue$, the Poisson ensemble $E\equiv \poi$, or the Gaussian diagonal ensemble $E\equiv \gde$. As one can see, in this picture, spectra and eigenvectors become completely unrelated, since the average over the full unitary group erases the information about the eigenvectors.
Although, in Refs. \cite{Oliviero2020random,leone2020isospectral} many ensembles of Hamiltonians have been considered, in this paper, we are particularly interested in the  $\gde$ ensemble which is the simplest ensemble of Hamiltonians: the $2k$-point spectral form factors can readily be computed. Let $\operatorname{sp}(H)\coloneqq \{E_{k}\}_{k=1}^{d}$, where $d\equiv 2^n$. The  $\gde$ ensemble is characterized by the following probability distribution for $\operatorname{sp}(H)$:
\be
P_{\gde}(\{E_{k}\})\coloneqq \left(\frac{2}{\pi}\right)^{d/2}e^{-2\sum_{k}E_{k}^{2}}\,,
\ee
i.e., all the eigenvalues $E_{k}$ are independent identically distributed Gaussian random variables with zero mean and standard deviation $1/2$. Define $\tilde{c}_{2k}(t)\coloneqq \frac{c_{2k}(t)}{d^{2k}}$, then the average of the normalized $2k$-spectral form factors $\tilde{c}_{2k}(t)$ reads
\be
\overline{\tilde{c}_{2k}(t)}^{\gde}=e^{-kt^2/4}+\mathcal{O}(d^{-1})\,.
\label{spectralform2kgde}
\ee
Let us now build a notation useful throughout the following proofs. Let $F[W(t)]$ be a scalar function of the unitary evolution $W(t)=\expf{-iHt}$. Then we denote with $\mathbb{E}_{\gde}$ the Isospectral twirling of the scalar function $F[W(t)]$ followed by the average over the $\gde$ ensemble of Hamiltonian, namely:
\be
\mathbb{E}_{\gde}[F[W(t)]]\coloneqq \overline{\int\de G F[G^{\dag}W(t)G]}^{\gde}\,.
\ee
In the following section, we use the techniques introduced in order to prove Proposition~\ref{propgde1}.
\subsection{Proof of Proposition~\ref{propgde1}}
While the proof of Eq.~\eqref{symmtheta0} is straightforward, here we prove Eq.~\eqref{averagegde} via the Isospectral twirling technique. Recall $L_{s}(H_{G},\vec{\theta},t)=\tr[W(t)\st{\psi_s}W^{\dag}(t)O(\vec{\theta})]$, where $\ket{\psi_s}$ is a completely factorized state. We are interested in computing $\mathbb{E}_{\gde}[L_{s}(H_G,\vec{\theta},t)]$. Note that $L_{s}(H_{G},\vec{\theta},t)$ can be written as:
\be
L_{s}(H_{G},\vec{\theta},t)=\tr[T_{12}\psi_{s}\otimes O(\vec{\theta})W^{\otimes 1,1}(t)]\,.
\ee
The average over $\gde$ Hamiltonian can be readily taken out of Eq. \eqref{2isospectraltwirling}, and Eq. \eqref{spectralform2kgde}. By using that $\tr[O(\vec{\theta})]=0$, one gets:
\be
\mathbb{E}_{\gde}[L_{s}(H_{G},\vec{\theta}_0,t)]=e^{-t^2/4}\tr[\psi_{s}O(\vec{\theta}_0)]+\mathcal{O}(d^{-1})\,,
\ee
to recover Eq. \eqref{averagegde} it is sufficient to note that $\tr[\psi_{s}O(\vec{\theta}_0)]=1$ because $O(\vec{\theta}_0)\equiv S$, and $S\ket{\psi_s}=\ket{\psi_s}$ by definition.

\subsection{Proof of Theorem \ref{gdehamtheorem}}
To prove the theorem, we use the well-known Markov inequality: let $x$ be a positive semi-definite random variable with average $\mu$, then:
\be
\operatorname{Pr}(x\ge \epsilon)\le \frac{\mu}{\epsilon}\,.
\label{markovinequality}
\ee
To prove the concentration consider $|L_{s}(H,\vec{\theta},t)|$ for $H\in\{H_{G},H_{S}\}$ defined in the main text. Then by Jensen's inequality, we have:
\be
\mathbb{E}_{\gde}[|L_{s}(H,\vec{\theta},t)|]\le \sqrt{\mathbb{E}_{\gde}[L_{s}(H,\vec{\theta},t)^2]}\,.
\ee
Note that, one can write 
\be
L_{s}(H,\vec{\theta},t)^2=\tr[T_{(13)(24)} \psi_{s}^{\otimes 2}\otimes O(\vec{\theta})W^{\otimes 2,2}(t)]\,.
\ee
Then, taking the isospectral twirling, one has:
\be
\aver{L_{s}(G^{\dag}HG,\vec{\theta},t)^{2}}_{G}=\tr[T_{(13)(24)} \psi_{s}^{\otimes 2}\otimes O(\vec{\theta})\mathcal{R}^{(4)}(t)]\,,
\ee
where the Isospectral twirling operator $\mathcal{R}^{(4)}(t)$ can be found in Eq. $(165)$ of Ref. \cite{Oliviero2020random}. Taking the average over the $\gde$ spectra, one has
\begin{align}
\mathbb{E}_{\gde}[L_{s}(H_{G},\vec{\theta},t)^{2}]&=e^{-t^2/2}\tr[\psi_{s}O(\vec{\theta})]^2+\mathcal{O}(d^{-1})\nonumber\\
\mathbb{E}_{\gde}[L_{s}(H_{S},\vec{\theta},t)^{2}]&=e^{-t^2/2}\tr[\psi_{s}O(\vec{\theta})]^2+\mathcal{O}(d^{-1})\,,\nonumber
\end{align}
where the result is obtained after some algebra, with the hypothesis of $\psi\equiv\psi_A\otimes \psi_B$ and that $\tr [O]=0$. Note that the above result is obtained for $H\in\{H_{G},H_{S}\}$; we indeed exploited the fact that $|A|\ll |B|$, and thus $d_{B}=\mathcal{O}(d)$. Using Markov's inequality Eq. \eqref{markovinequality}, the result can be readily derived.

\subsection{Proof of Theorem \ref{propgde2}}
\subsubsection{An anti-concentration inequality}
To prove the theorem, we make use of Cantelli's inequality: let $x$ be a random variable with average $\mu$ and standard deviation $\sigma$. Then:
\be
\operatorname{Pr}(x\ge \mu-\theta\sigma)\ge \frac{\theta^2}{1+\theta^2}
\label{cantelliinequality}
\ee
To bound the measure $\mathcal{I}_{\Lambda}(\psi)$, we make use of the bound proven in~\cite{coles2019strong} between the Hilbert-Schmidt distance and the trace distance, namely:
\be
\mathcal{I}_{\Lambda}(\psi)\ge \frac{1}{2}\sqrt{\norm{\psi_{\Lambda}-\frac{\mathds{1}_{\Lambda}}{d_{\Lambda}}}_{2}}\,,
\label{hilbertschmidttracenormbound}
\ee
where $\psi_{\Lambda}\coloneqq \tr_{\bar{\Lambda}}[\st{\psi}]$, and $d_{\Lambda}=2^{|\Lambda|}$. Note that
\be
\norm{\psi_{\Lambda}-\frac{\mathds{1}_{\Lambda}}{d_{\Lambda}}}_{2}=\sqrt{P(\psi_{\Lambda})-d_{\Lambda}^{-1}}\,,
\ee
where $P(\psi_\Lambda)\equiv \tr[\psi_{\Lambda}^{2}]$ is the purity of the reduced density matrix $\psi_{\Lambda}$. Let $\ket{\psi_{t}}=\expf{-iHt}\ket{\psi_s}$ the state resulting from the time evolution under a $\gde$ Hamiltonian acting on a completely factorized state. Let $\Lambda$ be a subsystem such that $|\Lambda|=\mathcal{O}(\log (n))$. Note that if $\operatorname{Pr}(P(\psi_{\Lambda})\ge \epsilon)\ge \delta$, then
\be
\operatorname{Pr}\left(\norm{\psi_{\Lambda}-\frac{\mathds{1}_{\Lambda}}{d_{\Lambda}}}_{2}\ge \sqrt{\epsilon-d_{\Lambda}^{-1}}\right)\ge \delta\,,
\ee
and, because of Eq. \eqref{hilbertschmidttracenormbound}, one has:
\be
\operatorname{Pr}\left(\mathcal{I}_{\Lambda}(\psi)\ge \frac{1}{2}(\epsilon-d_{\Lambda}^{-1})^{1/4}\right)\ge \delta\,.
\label{mathcalIanticoncentration}
\ee
Thus, from the above chain of inequality, it is clear that it is sufficient to obtain a bound on the anti-concentration of the purity $P(\psi_{\Lambda})$.
Denoting $P_{\gde}\coloneqq \mathbb{E}_{\gde}[P(\psi_\Lambda)]$, and $Q_{\gde}=\mathbb{E}_{\gde}[P(\psi_\Lambda)^{2}]$, we can use Eq. \eqref{cantelliinequality} to write:
\be
\operatorname{Pr}\left(P(\psi_{\Lambda})\ge P_{\gde}-\theta \sqrt{Q_{\gde}-P_{\gde}^{2}}\right)\ge \frac{\theta^{2}}{1+\theta^2}\,.
\label{purityanticoncentration}
\ee
\subsubsection{Computing $P_{\gde}$}
First, let us note the following fact. $P(\psi_{\Lambda})=\tr[T_{\Lambda}\psi^{\otimes 2}_{t}]$, where $\psi_{t}=W_{G}(t)\psi_{0}W_{G}^{\dag}(t)$, and where $W_{G}(t)\equiv G^{\dag}W(t)G$. The Isospectral twirling of $P(\psi_{\Lambda})$ is the average with respect to $G\in\mathbb{U}(n)$.
\be
\aver{P(\psi_{\Lambda})}_G=\aver{\tr[T_{\Lambda}W_{G}(t)\psi_{0}^{\otimes 2}W_{G}^{\dag}(t)]}_G\,.
\ee
Thanks to the right/left invariance of the Haar measure, we can insert unitaries of the form $V\equiv V_{\Lambda}\otimes V_{\bar{\Lambda}}$ because $[V,T_{\Lambda}]=0$. This means that 
\be
\aver{P(\psi_{\Lambda})}_G=\braket{P(\psi^{(V)}_{\Lambda})}_G\,,
\label{equivalency}
\ee
where $\psi_{\Lambda}^{(V)}$ is defined as $\psi_{\Lambda}^{(V)}\coloneqq \tr_{\bar{\Lambda}}W_{G}(t)\psi^{(V)}W_{G}^{\dag}(t)$, where
\be
\psi^{(V)}\coloneqq \int_{\mathbb{U}(n_{\Lambda})}\!\!\!\!\!\!\!\!\!\!\!\de V_{\Lambda}\int_{\mathbb{U}(n_{\bar{\Lambda}})}\!\!\!\!\!\!\!\!\!\!\!\de V_{\bar{\Lambda}}(V_{\Lambda}\otimes V_{\bar{\Lambda}})^{\otimes 2}\psi_{0}(V_{\Lambda}\otimes V_{\bar{\Lambda}})^{\dag\otimes 2}\,,
\ee
i.e., we can compute the average purity over $\gde$ Hamiltonian on the average product state between $\Lambda$ and $\bar{\Lambda}$. A straightforward computation leads us to
\begin{equation}
\braket{P(\psi_{\Lambda})}_G=\frac{d(d_{\Lambda}+d_{\bar{\Lambda}})+\braket{\tr[T_{\Lambda}W_{G}^{\otimes 2}(t)T_{\Lambda}W_{G}^{\dag\otimes 2}(t)]}_{G}}{d(d_{\Lambda}+1)(d_{\bar{\Lambda}}+1)}+\frac{\braket{\tr[T_{\bar{\Lambda}}W_{G}^{\otimes 2}(t)T_{\Lambda}W_{G}^{\dag\otimes 2}(t)]}_G}{d(d_{\Lambda}+1)(d_{\bar{\Lambda}}+1)}\,.
\label{averagepurity}
\end{equation}
Following the calculations in Sec.~$3.3.1$ of Ref.~\cite{Oliviero2020random}, the result can be proven to be:
\be
P=\frac{1}{d_{\Lambda}}+e^{-t^2/2}\left(1-\frac{1}{d_{\Lambda}}\right)+\mathcal{O}(d^{-1})\,.
\label{averagepuritygde}
\ee

\subsubsection{Computing $Q_{\gde}$}
In this section, we compute the second moment of the purity, i.e. $\braket{P(\psi_{\Lambda})^2}$. Thanks to the left/right invariance of the Haar measure over $G$, and from the commutation with $T_{\Lambda}$, we can compute the average of the second moment for any factorized state $\ket{\psi_0}\equiv\ket{\psi_{\Lambda}}\otimes\ket{\psi_{\bar{\Lambda}}}$, by computing it with the input state $\ket{\psi_0}\equiv\ket{0}^{\otimes n}$. Namely:
\be
\aver{P(\psi_{\Lambda})^2}=\braket{\tr^2(T_{\Lambda}W^{\otimes 2}_{G}(t)\st{0}^{\otimes 2n}W^{\dag\otimes 2}_{G})(t)}_G
\label{messyP2formula}
\ee
Computing the above expression is trickier than computing $P_{\gde}$ because the latter involves averaging over the $8$-th fold power of $G$. It is well known that the permutation group $S_{8}$ contains $8!$ elements, making the calculation too expensive. It is also known that purity and OTOCs possess many similarities (see Refs.~\cite{hosur2016chaos,leone2021quantum,oliviero2021transitions,ding2016conditional,liu2018entanglement}). Indeed, the strategy to do the calculation will be: $(i)$ reduce each term of Eq. \eqref{messyP2formula} to a sum of high-order OTOCs, and $(ii)$ use the following asymptotic formula proven in~\cite{cotler2017chaos}: let $A_{1},\ldots, A_{k},B_{1},\ldots, B_{k}$ non-identity Pauli operators, and let $B_{l}(t)\equiv W_{G}^{\dag}(t)B_{l}W_G(t)$:
\be
\aver{\tr\left[\prod_{l}A_{l}B_{l}(t)\right]}_{G}=\tr\left[\prod_{l}A_{l}B_{l}\right]\tilde{c}_{2k}(t)+\mathcal{O}(d^{-1})\,,
\label{cotler}
\ee
where $\tilde{c}_{2k}(t)$ is the normalized $2k$-spectral form factor defined in Eq.~\eqref{2kpointspectralformfactor}. The strategy is thus to write \eqref{messyP2formula} in terms of $8-$OTOCs. First, note that we can write the swap operator as:
\be
T_{\Lambda}=\frac{\mathds{1}_{\Lambda}}{d_{\Lambda}}+\frac{1}{d_{\Lambda}}\sum_{P_{\Lambda}\neq \mathds{1}}P_{\Lambda}^{\otimes 2}\,,
\ee
and, in the same fashion, we can write the state $\st{0}$ as:
\be
\st{0}=\sum_{P\in\{\mathds{1},Z\}^n}P\,.
\ee
Thus, we can write:
\begin{align}
P^{2}(\psi_{\Lambda})=\!\!\!\!\!\!\sum_{P_{i},P_{\Lambda}^{a},P_{\Lambda}^{b}}\!\!\!\!\!\!\tr[P_{\Lambda}^{a}\tilde{P}_{1}]\tr[P_{\Lambda}^{a}\tilde{P}_{2}]\tr[P_{\Lambda}^{b}\tilde{P}_{3}]\tr[P_{\Lambda}^{b}\tilde{P}_{4}]\,.\nonumber
\end{align}
Note that the sum is over $P_{i}$ for $i=1,\ldots, 4$ and runs over the subgroup $P\in\{\mathds{1},Z\}^n$; moreover we defined $\tilde{P}_{i}\coloneqq U_{G}P_{i}U_{G}^{\dag}$ for $i=1,\ldots, 4$ to light the notation. It is useful for the following to split the identity part of the sum, and write $P^{2}(\psi_{\Lambda})$ as:
\begin{equation}
P^{2}(\psi_{\Lambda})=\sum_{P_{i},P_{\Lambda}^{a},P_{\Lambda}^{b}\neq\mathds{1}}\tr[P_{\Lambda}^{a}\tilde{P}_{1}]\tr[P_{\Lambda}^{a}\tilde{P}_{2}]\tr[P_{\Lambda}^{b}\tilde{P}_{3}]\tr[P_{\Lambda}^{b}\tilde{P}_{4}]+\frac{1}{d_A}P(\psi_{\Lambda})-\frac{2}{d_{A}^2}\,,
\end{equation}
where each term on the first sum is different from $\mathds{1}$. To write it in terms of OTOCs, let us use the following identity: let $A$ and $B$ two operators, and $P$ Pauli operators then:
\be
\frac{1}{d}\sum_{P}\tr[APBP]=\tr[A]\tr[B]\,,
\ee
where the summation is on the whole Pauli group. Thus
\begin{equation}
P^{2}(\psi_{\Lambda})=\sum_{\substack{P_{i},P_{\Lambda}^{a},P_{\Lambda}^{b}\neq \mathds{1}\\ Q,K,L}}\!\!\!\!\!\!\tr\left[P_{\Lambda}^{a}\tilde{P}_{1}QP_{\Lambda}^{a}\tilde{P}_{2}QKP_{\Lambda}^{b}\tilde{P}_{3}KLP_{\Lambda}^{b}\tilde{P}_{4}L\right]+\frac{1}{d_A}P(\psi_{\Lambda})-\frac{2}{d_{A}^2}
\label{e1}\,.
\end{equation}
Here, $Q,K,L$ are labeling global Pauli operators, running on the whole Pauli group. In order to recover OTOCs, we still need to split the sum of $Q,K,L$ between the identity and the other non-identity Paulis: this operation gives rise to $8$ different terms, labeled $t_{1},\ldots, t_{8}$, each of which proportional to the spectral function $\tilde{c}_{8}(t)$ thanks to Eq. \eqref{cotler}. The coefficients are listed below:
\begin{align}
t_{1}&\equiv\sum\tr[P_{\Lambda}^{a}P_1P_{\Lambda}^{a}P_2P_{\Lambda}^{b}P_3P_{\Lambda}^{b}P_4]\\\nonumber
t_{2}&\equiv\sum\tr[P_{\Lambda}^{a}P_1QP_{\Lambda}^{a}P_2QP_{\Lambda}^{b}P_3P_{\Lambda}^{b}P_4]\nonumber\\
t_{3}&\equiv\sum\tr[P_{\Lambda}^{a}P_1P_{\Lambda}^{a}P_2KP_{\Lambda}^{b}P_3KP_{\Lambda}^{b}P_4]\nonumber\\
t_{4}&\equiv\sum\tr[P_{\Lambda}^{a}P_1P_{\Lambda}^{a}P_2P_{\Lambda}^{b}P_3LP_{\Lambda}^{b}P_4L]\nonumber\\
t_{5}&\equiv\sum\tr[P_{\Lambda}^{a}P_1QP_{\Lambda}^{a}P_2QKP_{\Lambda}^{b}P_3KP_{\Lambda}^{b}P_4]\nonumber\\
t_{6}&\equiv\sum\tr[P_{\Lambda}^{a}P_1QP_{\Lambda}^{a}P_2QP_{\Lambda}^{b}P_3LP_{\Lambda}^{b}P_4L]\nonumber\\
t_{7}&\equiv\sum\tr[P_{\Lambda}^{a}P_1P_{\Lambda}^{a}P_2KP_{\Lambda}^{b}P_3KLP_{\Lambda}^{b}P_4L]\nonumber\\
t_{8}&\equiv\sum\tr[P_{\Lambda}^{a}P_1QP_{\Lambda}^{a}P_2QKP_{\Lambda}^{b}P_3KLP_{\Lambda}^{b}P_4L]\nonumber
\end{align}
where we adopted the following convention: the sum $\sum$ contains the sum over all the Pauli's appearing on the summation with the exception on the identity, running on their respective support; more precisely, $P_{1},\ldots, P_{4}$ runs in the subgroup $P_{i}\in\{\mathds{1},Z\}^{n}$ but the identity, $Q,K,L$ run over the whole Pauli group but the identity, while $P_{\Lambda}^{a}, P_{\Lambda}^{b}$ run over the whole Pauli group defined on $\Lambda$ qubits but the identity.
Using Eq. \eqref{averagepuritygde}, from Eq. \eqref{e1}, one thus arrives to
\be
\braket{P^{2}(\psi_{\Lambda})}_{G}=-\frac{1}{d_{\Lambda}^{2}}+2\tilde{c}_{4}\frac{d_{\Lambda}-1}{d_{\Lambda}^{2}}+\tilde{c}_{8}(t)\sum_{i=1}^{8}t_{i}+\mathcal{O}(d^{-1})\,.\nonumber
\ee
After further algebraic simplifications, one gets:
\be
\sum_{i=1}^{8}t_{i}=\frac{(d_{\Lambda}-1)^{2}}{d_{\Lambda}^{2}}+\mathcal{O}(d^{-2})\,.
\ee
In order to properly use Eq. \eqref{cotler}, one needs to be sure to absorb all the term $\mathcal{O}(d^{-1})$. The final result is:
\be
\braket{P^{2}(\psi_{\Lambda})}=\frac{1+2\tilde{c}_{4}(d_{\Lambda}-1)+\tilde{c}_{8}(d_{\Lambda}-1)^{2}}{d_{\Lambda}^{2}}+\mathcal{O}(d^{-1})\,.
\label{averagesquarepurity}
\ee
Hence, we can readily compute $Q_{\gde}$ by using Eq. \eqref{spectralform2kgde}, and obtain:
\be
Q_{\gde}=\frac{1+2e^{-t^2/2}(d_{\Lambda}-1)+e^{-t^2}(d_{\Lambda}-1)^{2}}{d_{\Lambda}^{2}}+\mathcal{O}(d^{-1})\,.
\label{averagesquarepuritygde}
\ee
\subsubsection{Final bound}
Putting Eq.~\eqref{averagepurity} and Eq.~\eqref{averagesquarepurity} together, we obtain the fluctuations of the purity with respect to an isospectral ensemble of Hamiltonian as a function of $t$:
\be
\Delta_G [P(\psi_{\Lambda})]=(\tilde{c}_{8}(t)-\tilde{c}_{4}^{2}(t))\frac{(d_{\Lambda}-1)^{2}}{d_{\Lambda}^2}+\mathcal{O}(d^{-1})\,,
\ee
where $\Delta_G [P(\psi_{\Lambda})]\coloneqq \braket{P^{2}(\psi_\Lambda)}_G-\braket{P(\psi_\Lambda)}_{G}^{2}$. Note that the above result is completely general, and valid for any isospectral family of Hamiltonians. Further, substituting Eq.~\eqref{averagepuritygde} and Eq. \eqref{averagesquarepuritygde} we get that the purity fluctuations with respect to $\gde$ Hamiltonians are
\be
Q_{\gde}-P_{\gde}^{2}=\mathcal{O}(d^{-1})\,.
\label{fluctuationspuritygde}
\ee
By using Eq. \eqref{purityanticoncentration}, and the fact that the fluctuations in Eq.~\eqref{fluctuationspuritygde} are $\mathcal{O}(d^{-1})$, we can choose $\theta=\mathcal{O}(d^{1/4})$, to write the following anti-concentration bound:
\begin{align}
\operatorname{Pr}[P(\psi_{\Lambda})&\ge d_{\Lambda}^{-1}+e^{-t^2/2}(1-d_{\Lambda}^{-1})/2+\mathcal{O}(d^{-1/4})]\nonumber\\&\ge1-\mathcal{O}(d^{-1/2}) \,.
\end{align}
Now, we can finally use Eq. \eqref{mathcalIanticoncentration} to prove the theorem:
\be
\operatorname{Pr}(\mathcal{I}_{\Lambda}(\psi)\ge e^{-t^2/8}(1-d_{\Lambda}^{-1})^{1/4}/2)\ge 1-\mathcal{O}(d^{-1/2})\,. 
\ee
Thus choosing $t=\mathcal{O}(\sqrt{\log (n)})$, and recalling that $|\Lambda|=\mathcal{O}(\log (n))$:
\be
\operatorname{Pr}\left[\mathcal{I}_{\Lambda}(\psi)\in\Omega\left(\frac{1}{\poly(n)}\right)\right]\ge 1-\mathcal{O}(2^{-n/2}) \,,
\ee
which concludes the proof.

\section{Useful Lemmas\label{App:lem}}
Before proceeding to prove Lemmas, we recall that 
if a given two-qubit gate $V$ in the HEA forms a $2$-design, one can employ the following element-wise formula of the Weingarten calculus in Refs.~\cite{collins2006integration,puchala2017symbolic} to explicitly evaluate averages over $V$ up to the second moment:
{\small
\begin{equation}
\begin{aligned}
    \int dV v_{\vec{i}\vec{j}}v_{\vec{i}'\vec{j}'}^*&=\frac{\delta_{\vec{i}\vec{i}'}\delta_{\vec{j}\vec{j}'}}{4}        \label{eq:Haar2moment}\\
\!\!\int dV v_{\vec{i}_1\vec{j}_1}v_{\vec{i}_2\vec{j}_2}v_{\vec{i}_1'\vec{j}_1'}^{*}v_{\vec{i}_2'\vec{j}_2'}^{*}&=\frac{1}{15}\left(\Delta_1-\frac{\Delta_2}{4}\right)
\end{aligned}
\end{equation}}
where $v_{\vec{i}\vec{j}}$ are the matrix elements of $V$, and
\begin{equation}
\begin{aligned}
\Delta_1&=\delta_{\vec{i}_1\vec{i}_1'}\delta_{\vec{i}_2\vec{i}_2'}\delta_{\vec{j}_1\vec{j}_1'}\delta_{\vec{j}_2\vec{j}_2'}+\delta_{\vec{i}_1\vec{i}_2'}\delta_{\vec{i}_2\vec{i}_1'}\delta_{\vec{j}_1\vec{j}_2'}\delta_{\vec{j}_2\vec{j}_1'}\,,\\
\Delta_2&=\delta_{\vec{i}_1\vec{i}_1'}\delta_{\vec{i}_2\vec{i}_2'}\delta_{\vec{j}_1\vec{j}_2'}\delta_{\vec{j}_2\vec{j}_1'}+\delta_{\vec{i}_1\vec{i}_2'}\delta_{\vec{i}_2\vec{i}_1'}\delta_{\vec{j}_1\vec{j}_1'}\delta_{\vec{j}_2\vec{j}_2'}\,.    \label{eq:Haar2moment2}
\end{aligned}
\end{equation}
Here the integration is taken over $\mathbb{U}(2)$, the unitary group of degree $2$. 
\setcounter{lemma}{0}
\setcounter{corollary}{4}
\begin{lemma}\label{lemma1b}
Let $U(\vec{\theta})$ be a HEA with depth $D$ as in Fig.~\ref{fig:HEA}(b), and $O=\sum_{i}c_i P_i=\sum_{i}c_i\bigotimes_{\alpha}O_{\alpha}^{(i)}$ be an operator, as in Eq.~\eqref{operatordef}. Then for any $O_{\alpha}^{(i)}$:
\be
|\supp(U(\vec\theta)O_{\alpha}^{(i)}U^{\dag}(\vec\theta))|\le \sum_{\substack{k\in C_{\alpha}^{(i)}\\ l_{k}^{(i)}< 2D}} l_{k}^{(i)}+2D\,.
\ee
\begin{proof}
Given $O=\sum_{i}c_{i}P_{i}=\sum_{i}c_{i}\bigotimes_{\alpha}O_{\alpha}^{(i)}$ (from the clustering of qubits), to prove the statement, one has to look at a given operator $O_{\alpha}^{(i)}$, having support on the cluster $C_{\alpha}^{(i)}$ defined in Sec.~\ref{App: proofth1}. 

As a starting point, let us introduce the local Pauli operator $\sigma_{j}^{(i)}$ possessing support only on the $j-$qubit, and then write $U(\vec{\theta})$ as in Eq.~\eqref{HEAequation} layer by layer, as 
\be
U(\vec{\theta})=V_{1}\ldots V_{D-1}V_{D}
\ee
where $V_{k}$ are the unitaries acting on each layer $k=1,\ldots, D$. For sake of simplicity, we drop the parameter dependence. Each $V_{k}$
can be further decomposed as:
\be
V_{k}=V_{1}^{(k)}\otimes\ldots \otimes V_{\frac{n}{2m}}^{(k)}\otimes V_{\frac{n}{2m}-1}^{(k)}
\ee
where each unitaries $V_{\alpha}$, $\alpha=1,\ldots, n/2m$, acts on $m$ qubits, being $n$ a multiple of $m$ ($m$ even), with periodic boundary conditions; so that either $(i)$ $V_{1}$ acts on the first $m$ qubits, $V_{2}$ acts on the second $m$ qubits, up to $V_{n/2m}$ acting on the last $m$ qubits; or $(ii)$ $V_{1}$ is acting from qubit $m/2+1$ to qubit $3m/2$, $V_{2}$ is acting from qubit $3m/2+1$ to qubit $2m$, up to $V_{n/2m}$ acting from qubit $n-m/2+1$ to qubit $m/2$. At the first layer, only one $V_{\alpha}^{(1)}$ for some $\alpha$ is acting on $\sigma_{j}^{(i)}$:
\be
V_{1}^{\dag}\sigma_{j}^{(i)}V_{1}=V_{\alpha}^{(1)\dag}\sigma_{j}^{(i)}V_{\alpha}^{(1)}
\ee
increasing the support to (at most) $m$ qubits, $\supp(V_{\alpha}^{(k)\dag}P_{i}V_{\alpha}^{(k)})\le m$. At the second layer, there will be (at most) two $V_{\beta}^{(2)}$, $V_{\gamma}^{(2)}$ for some $\beta,\gamma$ acting non-trivially on $V_{\alpha}^{(1)\dag}\sigma_{j}^{(i)}V_{\alpha}^{(1)}$:
\be
V_{2}^{\dag}V_{1}^{\dag}\sigma_{j}^{(i)}V_{1}V_{2}^{\dag}=V_{\gamma}^{(2)\dag}V_{\beta}^{(2)\dag}V_{\alpha}^{(1)\dag}\sigma_{j}^{(i)}V_{\alpha}^{(1)}V_{\beta}^{(2)}V_{\gamma}^{(2)}
\ee
increasing the support to at most $2m$. After the second layer, more unitaries will act on it trivially, but the support can only increase by a factor $m$ at each layer. This is because, at the $k$-th layer, only the unitaries acting on the boundary of $\supp(V_{k-1}^{\dag}\sigma_{j}^{(i)} V_{k-1})$ can increase the size of a factor $m/2$ each. Having shown the statement for an ultra-local operator, now the proof follows easily. Indeed, $O_{\alpha}$ has support on the cluster $C_{\alpha}$, which contains non-first-neighboring qubits whose relative distance is less than $mD$. The support of each operator living on the $q-$th qubit of $C_{\alpha}$ will increase by a factor $mD$ after the action of $U(\vec{\theta})$, which means that the supports are overlapping after the evolution, and thus one can upper-bound the total support of $U^{\dag}(\vec{\theta})O_{\alpha}U(\vec{\theta})$ as:
\be
|\supp(U^{\dag}(\vec{\theta})O_{\alpha}U(\vec{\theta}))|\le \sum_{k\in \mathcal{C}_{\alpha}}l_{k}+mD
\ee
i.e. by summing all the relative distances between the qubits within $C_{\alpha}$. This concludes the proof.    
\end{proof}
\end{lemma}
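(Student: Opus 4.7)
My plan is to establish a light-cone bound for the support of any operator evolved by the shallow HEA. The strategy splits naturally into two steps: first understand how a single-qubit operator's support spreads under Heisenberg evolution by the brick-wall circuit, and then assemble the bound for a cluster by taking unions of the individual light-cones.

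Step one is a straightforward induction on the depth $D$. Writing $U(\vec\theta) = V_1 V_2 \cdots V_D$ where each $V_k$ is a tensor product of two-qubit gates acting on disjoint nearest-neighbour pairs (with odd and even layers offset by one), I would show that for any single-qubit Pauli $\sigma_j$ the evolved operator $U^\dagger \sigma_j U$ has support contained in the interval $[j-D,\,j+D]$. The inductive step uses the observation that one additional brick-wall layer enlarges the current support by at most one qubit on each side: a two-qubit gate whose two qubits lie both inside the support leaves the support unchanged, while a gate straddling the boundary enlarges the support by exactly one qubit on the corresponding side. Iterating $D$ times gives $|\supp(U^\dagger \sigma_j U)| \le 2D+1$.

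Step two is then a simple geometric argument. The operator $O_\alpha^{(i)}$ is supported on the qubits of the cluster $C_\alpha^{(i)}$, so $U^\dagger O_\alpha^{(i)} U$ is contained in the union of the individual light-cones emanating from each $q_k\in C_\alpha^{(i)}$. By the very definition of the clustering, consecutive qubits inside a cluster satisfy $l_k^{(i)} = q_{k+1}-q_k < 2D$, which means the light-cone of $q_k$ (reaching to $q_k+D$) overlaps or touches that of $q_{k+1}$ (beginning at $q_{k+1}-D$). Consequently the union collapses to a single contiguous interval from $q_1-D$ to $q_S+D$, whose cardinality equals $q_S-q_1+2D+1 = \sum_{k\in C_\alpha^{(i)}} l_k^{(i)} + 2D + 1$, matching the stated bound (up to the conventional off-by-one in counting boundary qubits).

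The main subtlety, and really the only thing requiring care, is the per-layer growth claim in the brick-wall alternation: one must verify that the worst-case combination of parities still enlarges the support by at most one qubit per side per layer, so that $D$ layers yield a growth of at most $D$. A direct case analysis of the boundary qubit in both odd-offset and even-offset layers confirms this; beyond that the lemma is purely light-cone bookkeeping, with no probabilistic ingredient.
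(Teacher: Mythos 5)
Your proposal is correct and follows essentially the same route as the paper's proof: an inductive light-cone bound for a single-qubit operator under the brick-wall layers, followed by a union-of-light-cones argument over the cluster, using the fact that $l_k^{(i)}<2D$ forces consecutive light-cones to merge into one contiguous interval. The only discrepancy is the off-by-one you already flag: since the first brick-wall layer enlarges a single qubit's support to only $2$ qubits (a single gate acts on it), the light-cone is an interval of $2D$ qubits rather than $2D+1$, which recovers the stated bound $\sum_{k} l_{k}^{(i)}+2D$ exactly.
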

\begin{corollary}\label{corollary1b}
Let $U(\vec{\theta})$ be HEA with depth $D$, with $D\in \mathcal{O}(\log (n))$ and $O=\sum_{i}c_{i}P_{i}=\sum_{i}c_i\bigotimes_{\alpha}O_{\alpha}^{(i)}$. Then for any $O_{\alpha}^{(i)},O_{\alpha^\prime}^{(i)}$, with $\alpha\neq \alpha^{\prime}$, one finds that asymptotically, 
\be
\supp(U(\vec\theta)O_{\alpha}^{(i)}U^{\dag}(\vec\theta))\cap \supp(U(\vec\theta)O_{\alpha^\prime}^{(i)}U^{\dag}(\vec\theta))=\emptyset\,.
\ee
\begin{proof}
The corollary easily descends from Lemma~\ref{lemma1b} and relative proof. Indeed, each operator $O_{\alpha}$ has support on a cluster $C_{\alpha}$. This means that two different operators $O_{\alpha},O_{\alpha^\prime}$ have support with relative distance greater than $mD$. Since the support of each operator $O_{\alpha},O_{\alpha^\prime}$ can increase no more than $2D$ away from any qubit $q\in C_{\alpha}$, we have:
\be
\supp(U^{\dag}(\vec{\theta})O_{\alpha}U(\vec{\theta}))\cap\supp(U^{\dag}(\vec{\theta})O_{\alpha^{\prime}}U(\vec{\theta}))=\emptyset
\ee
by construction. It is important to remark that the above result is only valid asymptotically. Indeed, for small $n$, say $n_{0}$, it can be the case that $n\in\mathcal{O}(\poly( \log (n)))$. This concludes the proof.
\end{proof}
\end{corollary}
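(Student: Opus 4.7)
My plan is to derive the corollary as an essentially geometric consequence of the light-cone bound already established in Lemma~\ref{lemma1b}. The intuition is that the clusters $C_\alpha^{(i)}$ were explicitly designed with the cutoff distance $2D$ precisely so that the backward-evolved light-cones of distinct clusters cannot meet. So the proof should consist of carefully unpacking this design choice and matching it against the maximum rate of light-cone spreading in the HEA.

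First, I would recall from Lemma~\ref{lemma1b} (more precisely, from its proof) the stronger local statement that for any operator $A$ supported on qubits $Q \subseteq \{1,\dots,n\}$, the Heisenberg-evolved operator $U(\vec\theta) A U^{\dagger}(\vec\theta)$ has support contained in $Q^D := \{q \in \{1,\dots,n\} : \operatorname{dist}(q, Q) \le D\}$, i.e.\ the $D$-neighborhood of $Q$. This follows because each layer of the brick-like HEA can enlarge the support by at most one qubit on each boundary side, and there are $D$ layers; the specific additive bound in Lemma~\ref{lemma1b} is obtained by collecting these neighborhoods cluster by cluster. Applying this observation to the two operators $O_\alpha^{(i)}$ and $O_{\alpha'}^{(i)}$ gives
\be
\supp(U(\vec\theta) O_\alpha^{(i)} U^{\dagger}(\vec\theta)) \subseteq (C_\alpha^{(i)})^D, \qquad \supp(U(\vec\theta) O_{\alpha'}^{(i)} U^{\dagger}(\vec\theta)) \subseteq (C_{\alpha'}^{(i)})^D.
\ee

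Next I would invoke the defining property of the clustering procedure: two qubits $q \in C_\alpha^{(i)}$ and $q' \in C_{\alpha'}^{(i)}$ that belong to \emph{different} clusters must be separated by some gap $l_k^{(i)} > 2D$ (the clustering rule splits clusters precisely when the relative distance exceeds $2D$). In particular, taking $q$ to be the right-most qubit of $C_\alpha^{(i)}$ and $q'$ the left-most qubit of $C_{\alpha'}^{(i)}$ (say with $q'>q$), we have $q' - q > 2D$. Since each neighborhood $(C_\alpha^{(i)})^D$ extends at most $D$ qubits past the rightmost qubit of $C_\alpha^{(i)}$, and analogously on the left for $(C_{\alpha'}^{(i)})^D$, the two neighborhoods cannot overlap: their closest possible points are separated by at least $(q' - D) - (q + D) = (q' - q) - 2D > 0$. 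Combined with the support containments above, this gives
\be
\supp(U(\vec\theta) O_\alpha^{(i)} U^{\dagger}(\vec\theta)) \cap \supp(U(\vec\theta) O_{\alpha'}^{(i)} U^{\dagger}(\vec\theta)) = \emptyset,
\ee
which is the claimed disjointness.

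The main obstacle is not technical but conceptual: I have to make sure the light-cone bound used in step one is truly one-sided (each boundary of the support grows by at most $D$), rather than the looser two-sided statement $|\supp| \to |\supp| + 2D$ that Lemma~\ref{lemma1b} states. Reading the brick-by-brick argument in Lemma~\ref{lemma1b} carefully confirms this, but it is the crucial quantitative input. The small subtlety that forces the author to add the word ``asymptotically'' is the boundary/periodicity of the qubit chain: when $n$ is not much larger than $D$, the left and right neighborhoods of a cluster can wrap around and reconnect with the other cluster through the opposite side, which would spoil the disjointness. For $D \in \mathcal{O}(\log n)$ and $n$ large this is never an issue, since $4D \ll n$, so the conclusion holds for all sufficiently large $n$.
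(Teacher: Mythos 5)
Your proof is correct and follows essentially the same route as the paper's: combine the cluster-separation criterion (gap $>2D$ between distinct clusters) with the light-cone bound from Lemma~\ref{lemma1b} to conclude the evolved supports cannot meet. If anything, your version is slightly more careful than the paper's own, since you make explicit the one-sided growth of at most $D$ qubits per boundary (rather than the looser $2D$ total) and you correctly identify the boundary/wrap-around issue as the reason the statement only holds asymptotically in $n$.
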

\begin{lemma}\label{lem:p1}
Let $U(\vec{\theta})$ be a shallow HEA with depth $D\in\mathcal{O}(\log(n))$, and $O=\sum_{i}c_i\bigotimes_{\alpha}O_{\alpha}^{(i)}$ with $O_{\alpha}^{(i)}$ being traceless operators having support on at most two neighboring qubits. Then, 
\begin{equation}
    \mathbb{E}_{\vec{\theta}_0}[f(\vec{\theta}_{i})]=0\,,\quad \forall i\,.
\end{equation}
\end{lemma}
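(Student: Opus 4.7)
The approach combines translation invariance of the parameter measure with a standard Haar-first-moment calculation over a top-layer two-qubit block in the light cone of a traceless Pauli.

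First, I would reduce the statement to the unshifted expectation. Since $\vec{\theta}_i = \vec{\theta}_0 + \sum_{j<i}\hat{\vec{e}}_j l_j$ is a deterministic shift of $\vec{\theta}_0$, and the parameter distribution is translation-invariant on the parameter manifold (uniform on the torus that makes each local two-qubit gate a 2-design, consistent with the hypothesis of Theorem~\ref{theore:anticoncentration}), we have $\mathbb{E}_{\vec{\theta}_0}[f(\vec{\theta}_i)] = \mathbb{E}_{\vec{\theta}}[f(\vec{\theta})]$. It therefore suffices to show that $\mathbb{E}_{\vec{\theta}}[f(\vec{\theta})]=0$.

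Next I would expand $\mathbb{E}_{\vec{\theta}}[f(\vec{\theta})] = \sum_i c_i \mathbb{E}_{\vec{\theta}}[\Tr[U(\vec{\theta})\st{\psi}U^{\dag}(\vec{\theta})\bigotimes_\alpha O_\alpha^{(i)}]]$. Fix a term indexed by $i$ and pick any cluster factor $O_\alpha^{(i)}$, whose support lies on at most two neighboring qubits. By Corollary~\ref{corollary1b}, distinct clusters have disjoint light cones, so the two-qubit block(s) in the top layer of $U(\vec{\theta})$ whose support intersects that of $O_\alpha^{(i)}$ are disjoint from the corresponding blocks for $\alpha'\neq\alpha$, and we may integrate them independently. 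The central identity is that for a two-qubit gate $V$ satisfying the 1-design property (implied by the 2-design assumption), $\int dV\, V^{\dag}\tilde O\, V = \tfrac{\Tr[\tilde O]}{4}\bbbone$, where $\tilde O$ is the restriction of $O_\alpha^{(i)}$ to the block's two qubits (padded with identity on any block qubit outside $\supp(O_\alpha^{(i)})$). Since $O_\alpha^{(i)}$ is a nonidentity Pauli (hence traceless), so is $\tilde O$, and the block average vanishes, killing the entire expectation.

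The only technical obstacle is a small case analysis on how the brick-layer geometry sees $\supp(O_\alpha^{(i)})$. If the top layer contains a single block exactly covering qubits $(j,j+1)=\supp(O_\alpha^{(i)})$, the argument above applies directly with $\tilde O = O_\alpha^{(i)}$. If instead the top layer places blocks on $(j-1,j)$ and $(j+1,j+2)$ so that $O_\alpha^{(i)}$ straddles two blocks, then averaging over just one of them, say the block $V_a$ on $(j-1,j)$, yields $\int dV_a\, V_a^{\dag}(\bbbone_{j-1}\otimes O_j) V_a = \tfrac{1}{4}\Tr[\bbbone_{j-1}\otimes O_j]\bbbone = 0$ because the single-qubit factor $O_j$ is a nonidentity Pauli. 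The same reasoning covers $|\supp(O_\alpha^{(i)})|=1$. In every case the top-layer average vanishes, giving $\mathbb{E}_{\vec{\theta}}[f(\vec{\theta})]=0$ and completing the proof.
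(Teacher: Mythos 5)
Your proof is correct and rests on the same engine as the paper's --- the Haar first moment over a two-qubit block, $\int dV\, V^{\dagger}\tilde{O}V=\tfrac{\Tr[\tilde O]}{4}\bbbone$ --- but it deploys it differently. The paper integrates over \emph{every} two-qubit gate in the circuit, one at a time, propagating partial traces of both $\st{\psi}$ and $O$ through the whole brickwork until the expectation collapses to something proportional to $\Tr[O]\Tr[\st{\psi}]=0$; it never needs to look at individual clusters or at the brick geometry near $\supp(O_\alpha^{(i)})$. You instead short-circuit this by integrating only the last-layer block(s) sitting on the support of a single traceless factor, which kills each term immediately. Your route is leaner and also makes explicit two points the paper glosses over: the reduction from $\mathbb{E}_{\vec{\theta}_0}[f(\vec{\theta}_i)]$ to $\mathbb{E}_{\vec{\theta}}[f(\vec{\theta})]$ via translation invariance of the per-gate measure, and the case analysis on how the top layer tiles $\supp(O_\alpha^{(i)})$. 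The price is a slight loss of generality in the straddling case: your one-block average produces $\tfrac{1}{2}\bbbone\otimes\Tr_j[O_\alpha^{(i)}]$, which vanishes only because $O_\alpha^{(i)}$ is a tensor product of nonidentity single-qubit Paulis (true in the paper's setting where $O=\sum_i c_i P_i$, but not for an arbitrary traceless two-qubit operator as the lemma is literally worded). This is trivially repaired by averaging over \emph{both} straddling blocks, which yields $\tfrac{1}{4}\Tr[O_\alpha^{(i)}]\,\bbbone=0$ and recovers the full generality of the statement; the paper's all-gates iteration avoids the issue automatically since it only ever uses $\Tr[O]=0$.
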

\begin{proof}
Let us recall that $f(\vec{\theta}_{i})=\Tr[U(\vec{\theta}_{i})\st{\psi}U^{\dag}(\vec{\theta}_{i})O]$, where $U(\vec{\theta}_{i})$ is a HEA where each two-qubit $V_i$  forms a local $2$-design  (see Fig.~\ref{fig:HEA}). From the previous, we know that
\begin{equation}
    \mathbb{E}_{\vec{\theta}_0}[\cdot ]=\prod_{V_i}\int_{\mathbb{U}(2)}d\mu (V_i) (\cdot ) \,.
\end{equation}

\begin{figure*}[h!t]
    \centering
    \includegraphics[width=1\linewidth]{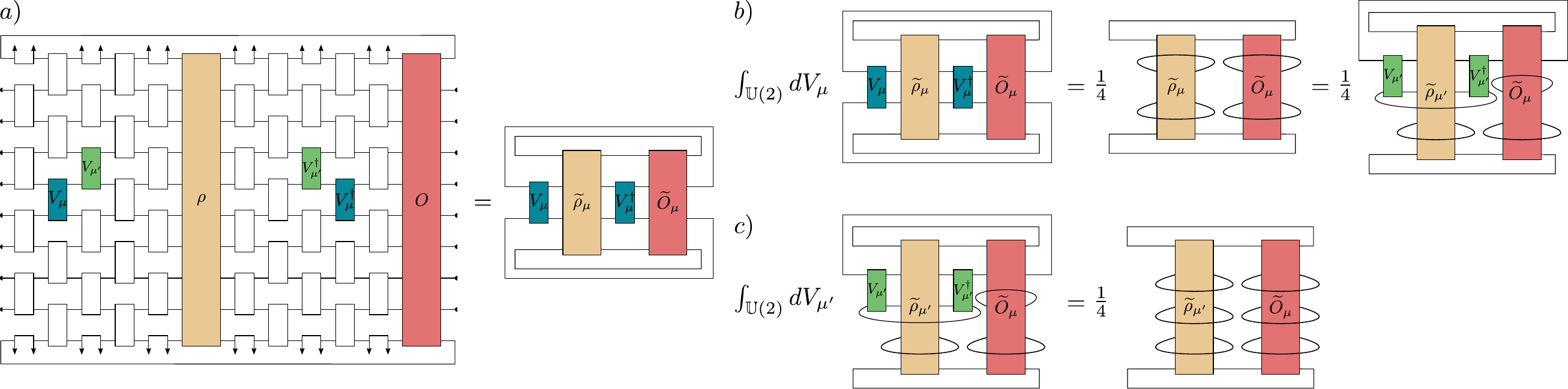}
    \caption{a) Tensor network representation of the function $f(\vec{\theta}_{i})=\Tr[U(\vec{\theta}_{i})\st{\psi}U^{\dag}(\vec{\theta}_{i})O]$. In (b) and (c) we show the results obtained by averaging over $V_\mu$ and $V_{\mu'}$, assuming that they form $2$-designs.}
    \label{fig:proof-1}
\end{figure*}

In particular, given a single two-qubit gate $V_\mu$, let us compute 
\begin{equation}
    \int_{\mathbb{U}(2)}dV_\mu  f(\hat{\vec{\theta}}_{i})=\int_{\mathbb{U}(2)}dV_\mu \Tr[V_\mu \widetilde{\rho}_\mu V_\mu ^\dagger \widetilde{O}_\mu]\,,\label{eq:tensor-con-1}
\end{equation}
where we have defined
\begin{align}
\widetilde{\rho}_\mu&=U_B^\mu \st{\psi} (U_B^\mu)^\dagger\\
\widetilde{O}_\mu&=(U_A^\mu)^\dagger O U_A^\mu
\end{align}
 with $U(\vec{\theta})=U_A^\mu V_\mu U_B^\mu$. In Fig.~\ref{fig:proof-1}(a) we schematically show the tensor representation of contraction of Eq.~\eqref{eq:tensor-con-1}. A direct calculation using Eq.~\eqref{eq:Haar2moment} shows that 
\begin{equation}
    \int_{\mathbb{U}(2)}dV_\mu  f(\vec{\theta}_{i})=\frac{1}{4}\Tr[\Tr_{\mu}[\widetilde{\rho}_\mu]\Tr_{\mu}[\widetilde{O}_\mu]]\,,
\end{equation}
where $\Tr_{\mu}$ indicates the partial trace on the qubit over which $V_\mu$ acts. In Fig.~\ref{fig:proof-1}(b) we show the tensor  representation of this integral. 

Next, we want to integrate the next gate $V_{\mu'}$ such that $U_B^\mu=U_B^{\mu'} V_{\mu'}$ as shown in Fig.~\ref{fig:proof-1}(c). Here we define $\widetilde{\rho}_{\mu'}=U_B^{\mu'}\rho(U_B^{\mu'})^\dagger$ Using Eq.~\eqref{eq:Haar2moment} we find that 
\begin{align}
    &\int_{\mathbb{U}(2)}dV_\mu  \Tr[\Tr_{\mu}[V_{\mu'}\widetilde{\rho}_{\mu'}(V_{\mu'})^\dagger]\Tr_{\mu}[\widetilde{O}_\mu]]\\
    &\quad\quad\quad \quad=\frac{1}{4}\Tr[\Tr_{\mu,\mu'}[\widetilde{\rho}_{\mu'}]\Tr_{\mu,\mu'}[\widetilde{O}_\mu]]\,,
\end{align}
where $\Tr_{\mu}$ indicates the partial trace on the qubit over which $V_\mu$ and $V_{\mu'}$ act.  In Fig.~\ref{fig:proof-1}(c) we show the tensor  representation of this integral.  Iteratively repeating this procedure and integrating over each two-qubit gate in the circuit leads to 
\begin{equation}
    \mathbb{E}_{\vec{\theta}_0}[f(\hat{\vec{\theta}}_{i})]\propto \Tr[O]\Tr[ \st{\psi}]=0 \,,\quad \forall i\,.
\end{equation}
Here we have used the fact that $O$ is a traceless operator. 

\end{proof}

\begin{lemma}\label{lem:p2}
Let $U(\vec{\theta})$ be a shallow HEA with depth $D\in\mathcal{O}(\log(n))$, and $O=\sum_{i}c_i\bigotimes_{\alpha}O_{\alpha}^{(i)}$ with $O_{\alpha}^{(i)}$ being traceless operators having support on at most two neighboring qubits. Then, 
\begin{equation}
    \mathbb{E}_{\vec{\theta}_0}[\Delta f_{i+1,i}\Delta f_{j+1,j}]=0\,,\quad \forall i\neq j\,.
\end{equation}
\end{lemma}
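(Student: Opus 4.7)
The strategy is to extend Lemma~\ref{lem:p1}'s tensor-network integration to the two-copy setting and show that the four-term expansion of $\Delta f_{i+1,i}\Delta f_{j+1,j}$ vanishes in expectation. First, I would expand
\begin{equation*}
\Delta f_{i+1,i}\Delta f_{j+1,j} = f(\vec{\theta}_{i+1})f(\vec{\theta}_{j+1}) - f(\vec{\theta}_{i+1})f(\vec{\theta}_j) - f(\vec{\theta}_i)f(\vec{\theta}_{j+1}) + f(\vec{\theta}_i)f(\vec{\theta}_j)
\end{equation*}
and represent each product of two traces as a single tensor network on a doubled Hilbert space carrying two copies of $U$, $U^\dagger$, $\rho$, and $O$, as was done in the proof of Lemma~\ref{lem:p1}.

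Assuming WLOG $i<j$, a direct inspection of the shift pattern shows that for every parameter $\theta_p$ with $p\notin\{p_i,p_j\}$, the relative shift between the two copies is the same across all four terms (either both copies share the same shift, or they differ by a fixed $l_k$ for $i<k<j$). Consequently, integrating over any such $\theta_p$ using the 2-design property of the containing two-qubit block produces an identical tensor contribution in each of the four terms, and these integrations can be factored out of the alternating 4-term sum.

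The crux is then the integration at the two distinguished blocks $V^{(\mu_i)}$ and $V^{(\mu_j)}$ containing $\theta_{p_i}$ and $\theta_{p_j}$. Regrouping the 4-term sum as $[f(\vec{\theta}_{i+1})-f(\vec{\theta}_i)]\cdot \Delta f_{j+1,j}$, the integration at $V^{(\mu_j)}$ can be performed using the 2-design formula together with Haar invariance, reducing the residual observable on the two qubits of that block to a traceless commutator-type operator of the form $O_{\text{eff}} - e^{il_j H_{p_j}}O_{\text{eff}}e^{-il_j H_{p_j}}$. An analogous reduction at $V^{(\mu_i)}$ yields a further traceless commutator-type correction on the copy-1 leg. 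Iterating the partial-trace contraction of Lemma~\ref{lem:p1} from the measurement side then drives both effective observables to quantities proportional to $\Tr[O]$, which vanishes since $O=\sum_i c_i\bigotimes_\alpha O_\alpha^{(i)}$ is composed of traceless Pauli operators, yielding $\mathbb{E}_{\vec{\theta}_0}[\Delta f_{i+1,i}\Delta f_{j+1,j}]=0$.

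The main technical obstacle is the careful bookkeeping of how the shift operators $e^{-il_{i,j}H_{p_{i,j}}}$ interact with the two-qubit 2-design averages: each distinguished block produces a four-fold Haar integral (two $V$'s and two $V^\dagger$'s) whose legs carry the shift factors. One must verify that after integrating these two blocks, the resulting two-copy network still admits the iterative partial-trace reduction of Lemma~\ref{lem:p1}, and that the commutator structure of the residual observable, combined with the alternating signs of the four-term expansion, enforces the required exact cancellation.
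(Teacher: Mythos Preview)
Your four-term expansion and the bookkeeping of shifts at parameters $p\notin\{p_i,p_j\}$ are fine, and the observation that the differences at the distinguished blocks produce commutator-type residuals is essentially the parameter shift rule in disguise. However, the final step---``iterating the partial-trace contraction of Lemma~\ref{lem:p1} from the measurement side then drives both effective observables to quantities proportional to $\Tr[O]$''---does not go through. Lemma~\ref{lem:p1} is a \emph{first-moment} calculation: each gate $V$ appears once as $V$ and once as $V^\dagger$, so $\int dV\,\Tr[VAV^\dagger B]=\Tr[A]\Tr[B]/4$ factorizes the trace. In your product $\Delta f_{i+1,i}\Delta f_{j+1,j}$, every gate parametrized by $\vec{\theta}_0$ appears in \emph{both} factors, so each gate integration is a \emph{second-moment} (four-fold) Haar integral. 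The second-moment Weingarten formula does not factorize into partial traces; it produces swap-type contractions that mix the two copies. Hence there is no mechanism here by which the iteration collapses to $\Tr[O]$, and the alternating signs of the four terms do not by themselves undo this.

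The paper's argument is structurally different and identifies the correct source of the vanishing. Via the parameter shift rule one writes $\Delta f_{i+1,i}\propto\partial_i f$, so the task is $\mathbb{E}[\partial_i f\,\partial_j f]$. Decompose the single two-qubit gate containing $\theta_i$ as $V_i=V_A^i e^{-i\theta_i H_i} V_B^i$, with $V_A^i,V_B^i$ each a local $2$-design. One first performs the second-moment integral over $V_B^i$. The crucial observation is that in the resulting expression $V_A^i$ appears asymmetrically: in the $\partial_i f$ factor it sits inside the commutator $[H_i,(V_A^i)^\dagger\Omega V_A^i]$, while in the $\partial_j f$ factor it appears only as an ordinary conjugation $(V_A^i)^\dagger\Omega' V_A^i$ (since $j\neq i$). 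When these are multiplied under a single trace, the adjacent $V_A^i (V_A^i)^\dagger$ cancel, and the $V_A^i$ integral collapses to a \emph{first}-moment integral of the form $\int dV_A^i\,\Tr[H_i (V_A^i)^\dagger(\cdots)V_A^i]\propto\Tr[H_i]=0$. So the zero comes from tracelessness of the \emph{generator} $H_i$, not of the observable $O$; and it is obtained at a single gate via this first-moment reduction, not by iterating over the whole circuit. Your sketch misses both of these points.
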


\begin{proof}
First, let us recall that 
\begin{align}
    \Delta  f_{i+1,i}&=f(\vec{\theta}_{i+1})-f(\vec{\theta}_{i})\nonumber\\
    &=2\sin\left(\frac{l_i}{2}\right)\partial_i f(\hat{\vec{\theta}}_i)\,.
\end{align}
where we have defined   $\hat{\vec{\theta}}_{i}=\vec{\theta}_{i-1}+ \hat{\vec{e}}_{i} \frac{l_{i}}{2}$ and used the fact we can apply the  parameter shift-rule for computing gradients~\cite{mitarai2018quantum,schuld2019evaluating}. Then, we have that 
\begin{equation}
    \mathbb{E}[\Delta f_{i+1,i}\Delta f_{j+1,j}]=\kappa_{i,j}\mathbb{E}[\partial_i f(\hat{\vec{\theta}}_{i})\partial_j f(\hat{\vec{\theta}}_j)]\label{eq:exp-grad}\,,
\end{equation}
where $\kappa_{i,j}=4\sin\left(\frac{l_i}{2}\right)\sin\left(\frac{l_j}{2}\right)$.

Assuming without loss of generality that the gate $V_i$ containing parameter $\theta_i$ acts before the gate $V_j$ containing parameter $\theta_j$, and explicitly computing the partial derivatives leads to 
\begin{equation}\label{eq:fiX}
    \partial_if(\vec{\theta}_i)=i\Tr[V_B^i \widetilde{\rho}_i (V_B^i)^\dagger X_i ]\,.
\end{equation}
Here, if $V_i$ is the two-qubit gate containing parameter $\theta_i$ then we define $U(\vec{\theta})=U_A^i V_i U_B^i$  (see Fig.~\ref{fig:proof2}(a)), and we denote  $\partial_i V_i=-i V_A^i H_i V_B^i$, such that  $\partial_i U(\vec{\theta})=-i U_A^i V_A^i H_i V_B^i U_B^i $, where $U_A^i$ and $U_B^i$ contain all other two-qubit gates in the HEA except for $V_i$. In what follows we will assume that both $V_A^i$ and $V_B^i$ also form $2$-designs if $V_i$ forms a $2$-design.  Then, we defined
\begin{align}
\widetilde{\rho}_i&=U_B^i \st{\psi} (U_B^i)^\dagger\\
X_i&=[H_i,(V_A^i)^\dagger (U_{A}^{i})^\dagger O U_A^i V_A^i]\,.
\end{align}
Then, one can similarly define 
\begin{equation}\label{eq:fjX}
    \partial_jf(\vec{\theta}_j)=i\Tr[V_A^i V_B^i \widetilde{\rho}_i (V_B^i)^\dagger (V_A^i)^\dagger X_j ]\,,
\end{equation}
where 
\begin{align}
X_j&=  U_{ij}^{\dag}(V_{B}^{j})^{\dag} [H_j, (V_{A}^{j})^{\dag}(U_A^j)^\dagger O U_A^j(V_{A}^{j}) ] V_{B}^{j}U_{ij} \,.
\end{align}
Here we have defined $U_A^i=U_A^j V_j U_{ij}$ (see Fig.~\ref{fig:proof2}(b)). That is $U_{ij}$ contains all two-qubit gates in the HEA after $V_i$ but before $V_j$. Note here that the parametrized gates in $ \partial_if(\hat{\vec{\theta}}_i)$ and $ \partial_jf(\hat{\vec{\theta}}_j)$ are not necessarily evaluated at the same set of parameters. While this is also true for $V_i$ one can also use the right-invariance of the Haar measure to assume that $V_A^i$ and $V_B^i$ have the same set of parameters in $f(\hat{\vec{\theta}}_i)$ and $f(\hat{\vec{\theta}}_j)$ up to some  unitary that can be absorbed into $U_{ij}$ and $U_B^i$, respectively.

\begin{figure}
    \centering
    \includegraphics[width=.9\linewidth]{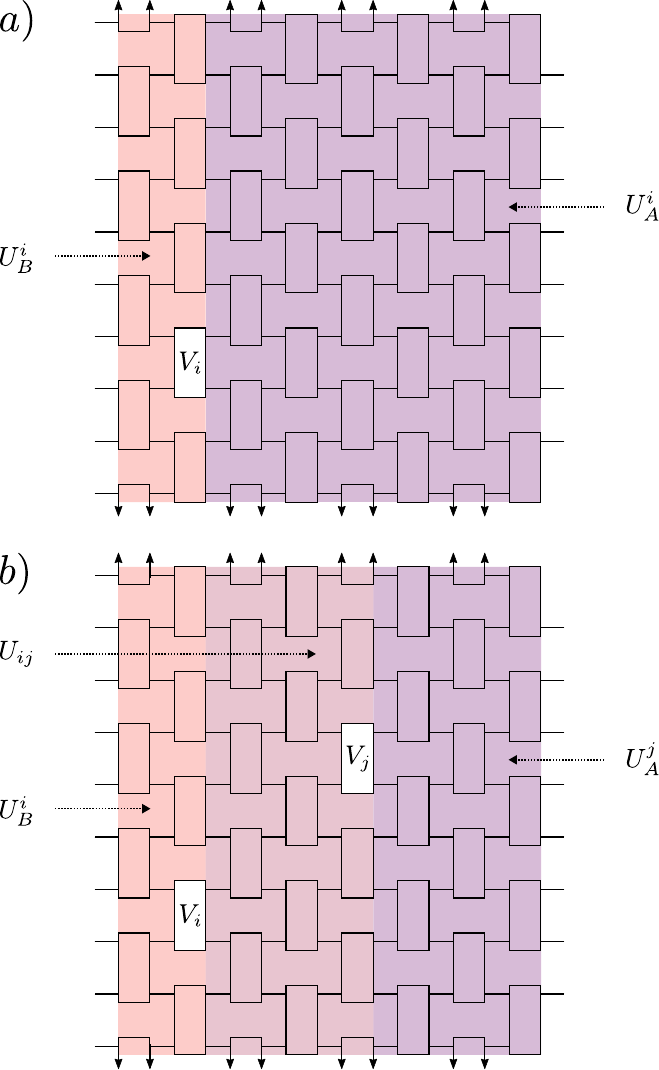}
    \caption{For the proofs, it is useful to divide the HEA into different parts: acting before or after $V_i$ (a), or before, after and in-between $V_i$ and $V_j$ (b).  }
    \label{fig:proof2}
\end{figure}

From the previous, we want to compute $\mathbb{E}[\partial_i f(\hat{\vec{\theta}}_{i})\partial_j f(\hat{\vec{\theta}}_j)]$, which results in computing the expectation value $
    \mathbb{E}[\Tr[V_B^i \widetilde{\rho}_i (V_B^i)^\dagger X_i ]\Tr[V_A^i V_B^i \widetilde{\rho}_i (V_B^i)^\dagger (V_A^i)^\dagger X_j ]]$. 
Using the fact that $V_i$ acts non-trivially only on two qubits, we have the following identity (valid for any operators $A$ and $B$~\cite{cerezo2020cost})
\begin{align}
\Tr\left[(\mathds{1}_{\overline{i}}\otimes V_i)A(\mathds{1}_{\overline{i}}\otimes V_{i}^{\dagger}) B\right]=\sum_{\vec{p},\vec{q}}\Tr\left[V_{i} A_{\vec{q}\vec{p}}V_{i}^{\dagger} B_{\vec{p}\vec{q}}\right]\,,\label{eq:pq}
\end{align}
where the summation runs over all bitstrings $\vec{p},\vec{q}$ of length $(n-2)$ and where
\begin{align}
    A_{\vec{q}\vec{p}}&=\Tr_{\overline{i}}\left[(|\vec{p}\rangle\langle\vec{q}|\otimes\mathds{1}_4)A\right]\\ B_{\vec{p}\vec{q}}&=\Tr_{\overline{i}}\left[(|\vec{q}\rangle\langle\vec{p}|\otimes\mathds{1}_4)B\right]\,.
\end{align}
Here $\mathds{1}_2$ denotes the $4\times 4$ identity and $\Tr_{\overline{i}}$ refers to the trace of all qubits except those over which $V_i$ acts on. We refer the reader to~\cite{cerezo2020cost} for proof of Eq.~\eqref{eq:pq}. Then, an explicit integration over $V_B^i$ leads to 
\begin{equation}
\int dV_B^i \Tr[V_B^i \widetilde{\rho}_i (V_B^i)^\dagger X_i ]\Tr[V_A^i V_B^i \widetilde{\rho}_i (V_B^i)^\dagger (V_A^i)^\dagger X_j ]\nonumber\\
     =\frac{1}{15}\sum_{\substack{\vec{p},\vec{q}\\\vec{p}',\vec{q}'}} \Delta\Psi_{\vec{p}\vec{q}}^{\vec{p}'\vec{q}'}\Tr[\Gamma_{\vec{p}\vec{q}}\Gamma'_{\vec{p}'\vec{q}'}]\,.\label{eq:SM-varWA}
\end{equation}
Here we have defined 
\begin{align}
    \Delta\Psi_{\vec{p}\vec{q}}^{\vec{p}'\vec{q}'} &=  \Tr[\Psi_{\vec{p}\vec{q}}\Psi_{\vec{p}'\vec{q}'}]-\frac{\Tr[\Psi_{\vec{p}\vec{q}}]\Tr[\Psi_{\vec{p}'\vec{q}'}]}{2^{m}}\,,  \label{eq:DeltaPsi}
\end{align}
with 
\begin{align}
    \Psi_{\vec{p}\vec{q}}&=\Tr_{\overline{i}}\left[(|\vec{p}\rangle\langle\vec{q}|\otimes\mathds{1}_4)\widetilde{\rho}_i\right]\,,\\
    \Psi_{\vec{p}'\vec{q}'}&=\Tr_{\overline{i}}\left[(|\vec{p}'\rangle\langle\vec{q}'|\otimes\mathds{1}_4)\widetilde{\rho}_i\right]\,,\\
    \Gamma_{\vec{q}\vec{p}}&=\left[H_i, (V_A^i)^\dagger\Omega_{\vec{q}\vec{p}}  V_A^i\right]\,,\label{eq:gamma-1}\\
    \Gamma'_{\vec{q}'\vec{p}'}&= (V_A^i)^\dagger\Omega'_{\vec{q}'\vec{p}'}  V_A^i\,,\label{eq:gamma-2}
\end{align}
and where 
\begin{align}
    \Omega_{\vec{q}\vec{p}} &=\Tr_{\overline{i}}\left[((|\vec{q}\rangle\langle\vec{p}|\otimes\mathds{1}_4)(U_{A}^{i})^{\dag}OU_{A}^{i}\right]\,,\\
    \Omega'_{\vec{q'}\vec{p'}} &=\Tr_{\overline{i}}\left[((|\vec{q}'\rangle\langle\vec{p}'|\otimes\mathds{1}_4)X_j\right]\,.
\end{align}
Using Eqs.~\eqref{eq:gamma-1} and~\eqref{eq:gamma-2} we  have that 
\begin{align}
    \Tr[\Gamma_{\vec{p}\vec{q}}\Gamma'_{\vec{p}'\vec{q}'}]=& \Tr[H_i (V_A^i)^\dagger\Omega_{\vec{q}\vec{p}} \Omega'_{\vec{q}'\vec{p}'}  V_A^i]- \Tr[\Omega_{\vec{q}\vec{p}}  V_A^i  H_i   (V_A^i)^\dagger\Omega'_{\vec{q}'\vec{p}'}]\,.
\end{align}
Then, integrating over $V_A^i$ leads to
\begin{equation}\label{eq:final-int-zero}
  \int dV_A^i  \Tr[\Gamma_{\vec{p}\vec{q}}\Gamma'_{\vec{p}'\vec{q}'}]=0\,,
\end{equation}
as the integration will lead to terms of the form $\Tr[H_i]$ which are equal to zero since $H_i$ are traceless operators.

Combining Eqs.~\eqref{eq:exp-grad},~\eqref{eq:fiX},~\eqref{eq:fjX} and~\eqref{eq:final-int-zero} shows that
\begin{equation}
    \mathbb{E}[\Delta f_{i+1,i}\Delta f_{j+1,j}]=0\,,\quad \forall i\neq j\,.
\end{equation}
\end{proof}

\begin{lemma}
Let $U(\vec{\theta})$ be a shallow HEA with depth $D\in\mathcal{O}(\log(n))$ where each local two-qubit gates forms a $2$-design on two qubits, and let $O=\sum_{i}c_i\bigotimes_{\alpha}O_{\alpha}^{(i)}$ be the measurement composed of, at most, polynomially many traceless Pauli operators  $O_{\alpha}^{(i)}$ having support on at most two neighboring qubits, and where $\sum_i c_i^2\in\mathcal{O}(\poly(n))$. Then, if the input state follows an area law of entanglement, we have 
\be
\Var[\partial_{\mu}f(\vec{\theta})]\in\Omega\left(\frac{1}{\poly(n)}\right)
\ee
\end{lemma}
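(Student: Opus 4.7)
The plan is to invoke the variance lower bound of Ref.~\cite{cerezo2020cost}, restated in the excerpt as Eqs.~\eqref{lowerboundCerezo0}--\eqref{lowerboundCerezo}, and combine it with the area-law hypothesis on $\ket{\psi}$. First I would expand $O=\sum_i c_i\bigotimes_\alpha O_\alpha^{(i)}$ and, by linearity, $\partial_\mu f(\vec\theta)=\sum_i c_i\,\partial_\mu f_i(\vec\theta)$ with $f_i(\vec\theta)=\Tr[U(\vec\theta)\ketbra{\psi}U^\dagger(\vec\theta)P_i]$. The Pauli terms that survive the partial derivative are only those whose support intersects the backward light-cone of $\theta_\mu$; by Lemma~\ref{lemma1} this restricts attention to at most $\poly(n)$ indices, call this set $I_\mu$. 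Using the $2$-design assumption on the gates and the same integration argument as in Lemma~\ref{lem:p1}, one verifies $\mathbb{E}_{\vec\theta}[\partial_\mu f_i(\vec\theta)]=0$, so that
\begin{equation}
\Var_{\vec\theta}[\partial_\mu f(\vec\theta)]=\sum_{i,j\in I_\mu}c_ic_j\,\mathbb{E}_{\vec\theta}[\partial_\mu f_i(\vec\theta)\,\partial_\mu f_j(\vec\theta)].
\end{equation}

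Next, for each diagonal term $i=j$, I apply Theorem~2 of Ref.~\cite{cerezo2020cost} (the bound in Eq.~\eqref{lowerboundCerezo}) to the $2$-local traceless Pauli $P_i$, which yields
\begin{equation}
\Var_{\vec\theta}[\partial_\mu f_i(\vec\theta)]\ \geq\ \left(\frac{2}{5}\right)^{2D}\!\frac{2}{225}\,\eta(P_i)\!\!\!\sum_{k,k'\in \Lambda_\mu}\!\!\!\eta(\psi_{k,k'}),
\end{equation}
where $\Lambda_\mu$ is the (at most $2D$-sized) light-cone of $\theta_\mu$. Since $D\in\mathcal{O}(\log n)$, one has $(2/5)^{2D}=n^{-2\log(5/2)\cdot c}\in\Omega(1/\poly(n))$. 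Also $\eta(P_i)=2$ for any non-identity Pauli. The translation from the area-law hypothesis to a bound on $\eta(\psi_{k,k'})$ is done via the matrix-norm inequality $\|A\|_2\geq \|A\|_1/\sqrt{\mathrm{rank}(A)}$: applied to $A=\psi_{k,k'}-\mathbbm{1}/2^{k'-k+1}$, this gives
\begin{equation}
\eta(\psi_{k,k'})\ \geq\ \frac{\mathcal{I}_{[k,k']}(\psi)}{\sqrt{2^{k'-k+1}}}.
\end{equation}
Because $|k'-k+1|\in\mathcal{O}(\log n)$ inside the light-cone, the denominator is polynomial in $n$, and by Definition~\ref{def: volumelaw2} the numerator is $\Omega(1/\poly(n))$ under the area law. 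Consequently $\eta(\psi_{k,k'})\in\Omega(1/\poly(n))$, so that each diagonal term $\Var[\partial_\mu f_i]\in\Omega(1/\poly(n))$.

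The hard part is the off-diagonal contribution $\sum_{i\neq j}c_ic_j\,\mathbb{E}[\partial_\mu f_i\,\partial_\mu f_j]$: unlike Lemma~\ref{lem:p2}, both derivatives are with respect to the same parameter, so the Haar-integration trick that forces cross terms to vanish does not apply directly. I would handle this in two complementary ways. First, within the light-cone one can partition $I_\mu$ into groups whose backwards-in-time supports through the HEA are disjoint; for such pairs, integration over the intervening $2$-design gates (as in Lemma~\ref{lem:p1}) factorizes the expectation and kills the cross term. Second, for the remaining overlapping pairs I would apply Cauchy--Schwarz, $\big|\mathbb{E}[\partial_\mu f_i\partial_\mu f_j]\big|\leq \sqrt{\Var[\partial_\mu f_i]\Var[\partial_\mu f_j]}$, and absorb the combinatorial overhead into the hypothesis $\sum_i c_i^2\in\mathcal{O}(\poly(n))$ together with $|I_\mu|\in\poly(n)$. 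Picking one dominant Pauli $P_{i_0}$ in $I_\mu$ and lower-bounding $\Var[\partial_\mu f]\geq c_{i_0}^2\,\Var[\partial_\mu f_{i_0}]-\sum_{j\neq i_0}|c_{i_0}c_j|\sqrt{\Var[\partial_\mu f_{i_0}]\Var[\partial_\mu f_j]}$, all the polynomial factors (including $(2/5)^{2D}$, $\sqrt{2^{|\Lambda|}}$, and the number of cross terms) are of $\poly(n)$ order, so the combined bound remains $\Omega(1/\poly(n))$. The crucial structural ingredient is that the light-cone argument keeps every subsystem encountered in the bound of logarithmic size, which is exactly the regime where the Hilbert-Schmidt/trace-distance conversion is inexpensive and the area-law hypothesis directly delivers the desired polynomial lower bound.
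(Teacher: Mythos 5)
Your overall strategy---invoke the variance lower bound of Ref.~\cite{cerezo2020cost} and convert the area-law hypothesis into a polynomial lower bound on $\eta(\psi_{k,k'})$ via a norm-equivalence on a subsystem of $\mathcal{O}(\log n)$ qubits---is the same as the paper's, and those two ingredients are handled correctly (your $\|A\|_2\geq\|A\|_1/\sqrt{\mathrm{rank}(A)}$ step plays the role of the paper's inequality $\mathcal{I}_\Lambda(\psi)^2\leq 2^{|\Lambda|-1}\eta_{\psi_\Lambda}$ from~\cite{coles2019strong}). The genuine gap is in your treatment of the cross terms. You decompose $\partial_\mu f=\sum_i c_i\,\partial_\mu f_i$ and apply the cited bound term-by-term, which forces you to control $\sum_{i\neq j}c_ic_j\,\mathbb{E}[\partial_\mu f_i\,\partial_\mu f_j]$ by hand. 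Neither of your two devices closes this. The ``disjoint backward supports'' case is essentially empty: every $P_i$ surviving the derivative has a backward light-cone containing the support of the generator $H_\mu$, so all relevant terms overlap there and the factorization argument of Lemma~\ref{lem:p11} does not apply. The Cauchy--Schwarz route then only bounds the \emph{magnitude} of each cross term by $\sqrt{\Var[\partial_\mu f_i]\Var[\partial_\mu f_j]}$, which is exactly the scale at which destructive cancellation can occur; your final display $\Var[\partial_\mu f]\geq c_{i_0}^2\Var[\partial_\mu f_{i_0}]-\sum_{j\neq i_0}|c_{i_0}c_j|\sqrt{\Var[\partial_\mu f_{i_0}]\Var[\partial_\mu f_j]}$ can be negative and hence trivial (e.g., two terms with $\partial_\mu f_1=-\partial_\mu f_2$ and $c_1=c_2$ give zero variance while each diagonal term is large). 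The hypothesis $\sum_i c_i^2\in\mathcal{O}(\poly(n))$ caps the overhead but does nothing to forbid cancellation.

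The paper sidesteps this entirely: Theorem~2 of Ref.~\cite{cerezo2020cost} is stated for the \emph{full} operator $O=\sum_i c_i\tilde{O}_i$, and its lower bound, Eq.~\eqref{eq:varMaint22}, is already a sum of non-negative contributions $c_i^2\,\eta(\psi_{k,k'})\,\eta(\tilde{O}_i)$ --- i.e., the Weingarten computation in that reference shows that the cross terms cannot push the variance below the diagonal-like bound. If you want to keep your term-by-term decomposition, you would need to reproduce that cross-term analysis explicitly (showing $\mathbb{E}[\partial_\mu f_i\,\partial_\mu f_j]\geq 0$ or that its negative part is subleading), rather than appeal to Cauchy--Schwarz; otherwise, the correct fix is simply to apply the cited theorem to $O$ as a whole, as the paper does.
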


\begin{proof}
This lemma leverages the results in Theorem 2 of Ref.~\cite{cerezo2020cost}. Namely, therein it was proved that for a parameter $\theta_\nu$ in  gate $V$ of the HEA
\be
\Var[\partial_{\nu}f(\vec{\theta})]\geq G_{n}(D,\ket{\psi})\,,
\ee
with 
\footnotesize
\begin{align}\label{eq:varMaint22}
    G_{n}(D,\ket{\psi})&=
    \left(\frac{1}{5}\right)^{\scriptscriptstyle 2D}\!\!\!\frac{2 }{225}
    \sum_{ i\in i_{\mathcal{L}}} \sum_{\substack{(k,k')\in k_{\mathcal{L}_{\text{B}}} \\ k' \geq k}} c_i^2  \eta(\psi_{k,k'}) \eta(\tilde{O}_i)\,,
\end{align}
\normalsize
where $i_{\mathcal{L}}$ is the set of $i$ indices whose associated operators $\tilde{O}_i=\bigotimes_{\alpha}O_{\alpha}^{(i)}$ act on qubits in the forward light-cone $\mathcal{L}$ of $V$, and $k_{\mathcal{L}_{\text{B}}}$ is the set of $k$ indices whose associated subsystems $S_k$ are in the backward light-cone $\mathcal{L}_{\text{B}}$ of $V$. Moreover, we recall that we defined  $\eta(M)=\norm{M-\Tr[M]\frac{\mathds{1}}{d_M}}_2$, as the Hilbert-Schmidt distance between $M$ and $\Tr[M]\frac{\mathds{1}}{d_M}$,  where $d_M$ is the dimension of the matrix $M$. In addition, $\psi_{k,k'}$ is the partial trace of the input state $\ket{\psi}$ down to the subsystems $S_k S_{k+1}... S_{k'}$.

Let us note that since $\tilde{O}_i$ is a tensor product of two single-qubit Pauli operators acting on adjacent qubits, then $\eta(\tilde{O}_i)=4$. Then, if the input state $\ket{\psi}$ follows an area law of entanglement, we have that 
\be
I_{\Lambda}(\psi)\in \Omega\left(\frac{1}{\poly(n)}\right)\,.
\ee
In particular, noting that $I_{\Lambda}(\psi)^2\leq 2^{\Lambda-1}\eta_{\psi_\Lambda}$~\cite{coles2019strong} and recalling that in our case $\Lambda$ is the subsystems of qubits $S_k S_{k+1}... S_{k'}$ containing at most a logarithmic number of qubits, one has that 
\begin{equation}
    \eta_{\psi_\Lambda}\in\Omega(1/\poly(n)).
\end{equation}
Finally, since $D\in\mathcal{O}(\log(n))$ we find from Eq.~\eqref{eq:varMaint22} that  \begin{align}
    G_{n}(D,\ket{\psi})\in \Omega\left(\frac{1}{\poly(n)}\right)\,,
\end{align}
which in turn implies 
\be
\Var[\partial_{\nu}f(\vec{\theta})]\in \Omega\left(\frac{1}{\poly(n)}\right)\,.
\ee
\end{proof}
\end{document}